\theoremstyle{plain}
\newtheorem{theorem}{Theorem}[section]
\newtheorem{corollary}[theorem]{Corollary}
\newtheorem{lemma}[theorem]{Lemma}
\theoremstyle{definition}
\newtheorem{definition}{Definition}[section]
\newtheorem*{notation}{Notation}
\theoremstyle{remark}
\newtheorem{remark}{Remark}[section]
\newtheorem{example}[remark]{Example}
\newcommand{\Superimpose}[2]{%
	{\ooalign{$#1\@firstoftwo#2$\cr\hfil$#1\@secondoftwo#2$\hfil\cr}}}
\newcommand{\footnoteref}[1]{%
	\protected@xdef\@thefnmark{\ref{#1}}\@footnotemark%
}
\newcommand{\customlabel}[2]{%
	\protected@write \@auxout {}{\string \newlabel {#1}{{#2}{\thepage}{#2}{#1}{}} }%
	\hypertarget{#1}{#2}
}
\newcommand{\oversetlabel}[3]{
	\phantomsection
	\overset{({\customlabel{#2}{#1}})}{#3}
}
\newcounter{proofstep}
\newcommand*{\stepref}[1]{}
\newcommand*{\steplabel}[2]{}
\newcommand*{\setupstepsequence}[2][0]{%
	\renewcommand*{\stepref}[1]{\eqref{#2-##1}}%
	\renewcommand*{\steplabel}[2][\theproofstep]{%
		\stepcounter{proofstep}%
		\oversetlabel{\textit{\roman{proofstep}}}{#2-##1}{##2}%
	}%
	\setcounter{proofstep}{#1}%
}
\DeclareMathOperator{\img}{img}
\DeclareMathOperator{\car}{car}
\DeclareMathOperator{\bis}{bis}
\DeclareMathOperator{\supp}{supp}
\newcommand{\ltrue}{\texttt{t\!t}}
\newcommand{\lfalse}{\texttt{f\!f}}
\DeclareBoldMathCommand{\bflangle}{\langle}
\DeclareBoldMathCommand{\bfrangle}{\rangle}
\newcommand{\cat}[1]{\textnormal{\textsf{#1}}\xspace}
\newcommand{\Set}{\cat{Set}}
\newcommand{\Sys}[1]{{\cat{Sys}(#1)}}
\newcommand{\ff}[1]{\mathcal{F}_{\mspace{-1mu}#1}}
\newcommand{\fp}[1]{\mathcal{P}_{\mspace{-1mu}#1}}
\newcommand{\fpf}{\fp{\mspace{-2mu}f}}
\newcommand{\defeq}{\triangleq}
\newcommand{\defiff}{\stackrel{\triangle}{\iff}}
\newcommand{\dotrel}[1]{\mathrel{\dot{#1}}}
\newcommand{\reduceTo}{\preccurlyeq}
\newcommand{\reduceFrom}{\succcurlyeq}
\newcommand{\reduceEq}{\approxeq}
\newcommand{\freduceTo}{\dotrel{\reduceTo}}
\newcommand{\freduceEq}{\dotrel{\reduceEq}}
\newcommand\restr[2]{{\left.\kern-\nulldelimiterspace#1\vphantom{\big|}\right|_{#2}}}
\newcommand\corestr[2]{{\left.\kern-\nulldelimiterspace#1\vphantom{\big|}\right|^{#2}}}
\title{
	Reductions for Transition Systems at Work:\\% 
	Deriving a Logical Characterization of Quantitative Bisimulation
}
\author{
	{Marino Miculan}%
	\thanks{Partially supported by MIUR project 2010LHT4KM (\emph{CINA}).}\\
	\small University of Udine, Italy\\
	\small \href{mailto:marino.miculan@uniud.it}{\tt marino.miculan@uniud.it}
\and
	{Marco Peressotti}%
	\thanks{Supported by the CRC project, grant no.~DFF–4005-00304 from the Danish Council for Independent Research, and by the Open Data Framework project at the University of Southern Denmark.}\\
	\small University of Southern Denmark\\
	\small \href{mailto:peressotti@imada.sdu.dk}{\tt peressotti@imada.sdu.dk}
}
\date{}
\begin{document}

\maketitle

\begin{abstract}
	Weighted labelled transition systems (WLTSs) are an established (meta-)model aiming to provide general results and tools for a wide range of systems such as non-deterministic, stochastic, and probabilistic systems. In order to encompass processes combining several quantitative aspects, extensions of the WLTS framework have been further proposed,  \emph{state-to-function transition systems} (FuTSs) and \emph{uniform labelled transition systems} (ULTraSs) being two prominent examples.
	In this paper we show that this hierarchy of meta-models collapses when studied under the lens of bisimulation-coherent encodings.
	Taking advantage of these reductions, we derive a \emph{fully abstract} Hennessy-Milner-style logic for FuTSs, \ie,  which characterizes quantitative bisimilarity, from a fully-abstract logic for WLTSs.
\end{abstract}

%-------------------------------------------------------------------------------
%	ARTICLE CONTENTS
%-------------------------------------------------------------------------------

\section{Introduction}\label{sec:intro}

\emph{Weighted labelled transition systems} (WLTSs) \cite{ks:ic2013-sos} are a meta-model for systems with quantitative aspects: transitions $P\xrightarrow{a,w}Q$ are labelled with \emph{weights} $w$, taken from a given monoidal weight structure. Many computational aspects can be captured just by changing the underlying weight structure: weights can model probabilities, resource costs, stochastic rates, \etc.; as such, WLTSs are a generalisation of labelled transition systems (LTSs) \cite{milner:cc}, probabilistic systems (PLTSs) \cite{gsb90:ic}, stochastic systems \cite{hillston:pepabook}, among others. Definitions and results developed in this setting instantiate to existing models, thus recovering known results and discovering new ones. In particular, the notion of \emph{weighted bisimulation} \cite{ks:ic2013-sos} in WLTSs coincides with (strong) bisimulation for all the aforementioned models.

In the wake of these encouraging results, other meta-models have been proposed aiming to cover an even wider range of computational models and concepts.
\emph{Uniform labelled transition systems} (ULTraSs) \cite{denicola13:ultras} are systems whose transitions have the form $P\xrightarrow{a}\phi$, where $\phi$ is a \emph{weight function} assigning weights to states; hence, ULTraSs can be seen both as a non-deterministic extension of WLTSs and as a generalisation of Segala's probabilistic systems \cite{sl:njc95} (NPLTSs).
In \cite{mp:qapl14,mp:tcs2016} a (coalgebraically derived) notion of bisimulation for ULTraSs is presented and shown to precisely capture bisimulations for weighted and Segala systems.
\emph{Function-to-state transition systems} (FuTSs) were introduced in \cite{denicola13:ustoc} as a generalisation of the above, of IMCs \cite{hermanns:imcbook}, and other models. Later, \cite{latella:lmcs2015} defined a (coalgebraically derived) notion of (strong) bisimulation for FuTSs which instantiates to known bisimulations for all the aforementioned models.

Given all these meta-models, it is natural to wonder about their \emph{expressiveness}.
We should consider not only the class of systems these frameworks can represent,  but also \emph{whether} these representations are faithful with respect to the properties we are interested in.  
Intuitively, a meta-model $\cat{M}$ is \emph{subsumed by $\cat{M}'$ according to a property $P$} if any system $S$ which is an instance of $\cat{M}$ with the property $P$, is also an instance of $\cat{M}'$ preserving $P$.

In this work we study these meta-models according to their ability to correctly express \emph{strong bisimulation}.  In this context, a meta-model $\cat{M}$ is \emph{subsumed} by $\cat{M}'$ if any system $S$ which is an instance of $\cat{M}$, is also an instance of $\cat{M}'$ preserving and reflecting strong bisimulations.
\begin{figure}
	\begin{center}
		\begin{tikzpicture}[
				font=\small,auto,
				xscale=1.2, yscale=.8,
				baseline=(current bounding box.center),
			]
			\node (futs)   at (.5,3) {\cat{FuTS}};
			\node (ultras) at (.5,2) {\cat{ULTraS}};
			\node (wlts)   at ( 1,1) {\cat{WLTS}};
			\node[gray] (nplts)  at ( 0,1) {\cat{NPLTS}};
			\node[gray] (lts)    at ( 0,0) {\cat{LTS}};
			\node[gray] (plts)   at ( 1,0) {\cat{PLTS}};

			\node[left of=lts,gray] {\dots};
			\node[right of=plts,gray] {\dots};

			\draw (futs) -- (ultras);
			\draw (ultras) -- (wlts);
			\draw[gray] (ultras) -- (nplts);
			\draw[gray] (nplts) -- (plts);
			\draw[gray] (nplts) -- (lts);
			\draw[gray] (wlts) -- (plts);
			\draw[gray] (wlts) -- (lts);
		\end{tikzpicture}
		\caption{The bisimulation-driven hierarchy of function-to-state transition systems.}
		\label{fig:hierarchy}
	\end{center}
\end{figure}
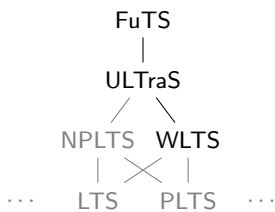
Previous work  \cite{ks:ic2013-sos,denicola13:ultras,mp:qapl14,mp:tcs2016,latella:lmcs2015} has shown that, according to this order, each of the meta-models mentioned above subsumes the previous ones, thus forming the hierarchy shown in \cref{fig:hierarchy}.
Still, an important question is open: 
\begin{quotation}
Is any of these meta-models \emph{strictly more expressive} than others?
\end{quotation}
In this work we address this question, proving that this is not the case: the black part of the hierarchy collapses!

In this venture, we introduce a new notion of  \emph{reduction between classes of systems}. We use these reductions to formally define the expressiveness order between system classes with respect to (strong) bisimulation, but this notion is more general and can be used to study any class of state-based transition systems, since all the constructions and results are developed abstracting from the ``type'' of computation under scrutiny. 
Besides the classification interest, reductions pave the way for porting existing and new results between categories of transition systems. In this paper, we use reductions to define new Hennessy-Milner-style modal logics for transition systems.  An important property for these logics is to be \emph{fully abstract} \ie to characterise bisimilarity: ``two states are logically equivalent (\ie, satisfy precisely the same set of formulae) if and only if they are bisimilar''.  We illustrate how reductions can be used to infer this relevant property; then, as an application, we introduce \emph{finite-conjunction logic for FuTSs} and prove that it is fully abstract via a reduction.

\paragraph{Remark} This work builds on the conference paper \cite{mp:ictcs2016}.  Main notions and results discussed in  \cref{sec:transition-systems,sec:equivalence-extensions,sec:bisimulations,sec:reductions} can be found also in \loccit, but in \cref{sec:futs-reductions} we propose a reduction of FuTSs to WLTSs which is different, and simpler, with respect to that in \cite{mp:ictcs2016}.  Moreover, in this work we have added new results concerning logical characterizations of bisimulations (\cref{sec:reductions-logic,sec:futs-logic-fully-abstract}).  

\paragraph{Synopsis}
\Cref{sec:transition-systems} recalls an abstract and uniform account of transition systems on discrete state spaces, akin to \cite{rutten:universal}.
\Cref{sec:equivalence-extensions} presents a general construction for extending equivalence relations over sets of states to sets of behaviours.
Building on this relational extension, \cref{sec:bisimulations} provides a characterisation of (strong) bisimulations in a modular fashion.
The notion of reduction is introduced in \cref{sec:reductions}, along with general reductions.
In \cref{sec:futs-reductions} we provide a reduction from the category of FuTSs to the category of WLTSs together with intermediate reductions for special cases of FuTSs such as ULTraS. 
In \cref{sec:fully-abstract-logic} we recall modal logics and in \cref{sec:reductions-logic} we extend the notion of reduction from categories of systems to modal logics and illustrate their use for proving full abstraction.
In \cref{sec:futs-logic-fully-abstract} we introduce a logic for FuTSs and prove that it characterises bisimulation via a reduction to a fully abstract logic for WLTSs.
Final remarks are in \cref{sec:conclusions}.

\section{Discrete transition systems}
\label{sec:transition-systems}
For an alphabet $A$ and set of states $X$, the function space $X^A$ is understood as the set of all possible behaviours characterising deterministic input over $A$. In this context, a transition system exposing this computational behaviour is precisely described by a function $\alpha\colon X \to X^A$ mapping each state $x \in X$ to some element in $X^A$. For a function $f\colon X \to Y$ and $\phi \in X^A$, the assignment $\phi \mapsto f \circ \phi$ defines a function $(f)^A\colon X^A \to Y^A$ that extends the action of $f$ from  state spaces $X$ and $Y$ to behaviours defined over them in a coherent way. A function $f\colon X \to Y$ between the state spaces of systems (say, $\alpha\colon X \to X^A$ and $\beta\colon Y \to Y^A$) preserves and reflect their structure whenever $f^A \circ \alpha = \beta \circ f$.
Intuitively this means that if $\alpha$ transits from $x$ to $x'$ when inputs $a$ then the same input makes $\beta$ transit from $f(x)$ to $f(x')$ and \viceversa.
Since they preserve and reflect the transition structure of systems, these functions are called \emph{homomorphisms} (which are functional bisimulations, \cf \cite[Thm.~2.5]{rutten:universal}).

All the structures and observations described in the above example stem from a single information: the ``type'' of the behaviour under scrutiny. This is well understood as an \emph{endofunctor} over the category of state spaces \cite{rutten:universal} --- in this context, the category of sets and functions.

Non-deterministic transitions are captured by the powerset endofunctor $\mathcal{P}$ mapping each set $X$ its powerset $\mathcal{P}X$ and function $f$ to its inverse image $\mathcal{P}f$ \ie the function given by the assignment $Z \mapsto \{ f(z) \mid z \in Z\}$.
Since subsets are functions weighting elements over the monoid $\mathbb{B} = (\{\ltrue,\lfalse\},\lor,\lfalse)$, the above readily extends to quantitative aspects (such as probability distributions, stochastic rates, delays, etc.) by simply considering other a non-trivial abelian monoids\footnote{An abelian monoid is a set $M$ equipped with an associative and commutative binary operation $+$ and a unit $0$ for $+$; such structure is called trivial when $M$ is a singleton.}  \cite{ks:ic2013-sos,mp:tcs2016,latella:qapl2015}. This yields the endofunctor $\ff{M}$ which assigns
\begin{itemize}
	\item
		to each set $X$ the set $\{ \phi\colon X \to M \mid \supp(\phi) \text{ is finite}\}$ of finitely supported weight functions (the support of $\phi$ is the set $\{ x \mid \phi(x) \neq 0\}$);
	\item
		to each function $f\colon X\to Y$ the map $(\ff{M}f)(\phi) = \lambda y \in Y.\sum_{x:f(x)=y} \phi(x)$. (summation is well defined because $\phi$ is finitely supported).
\end{itemize}
\begin{notation}
We will often denote elements of $\ff{M}X$ using the formal sum notation: for $\phi \in \ff{M}X$ we write $\sum_x \phi(x) \cdot x$ or, given $\supp(\phi) = \{x_1,\ldots,x_n\}$, simply $\sum_{i=1,\ldots,n} \phi(x_i)\cdot x_i$. For instance, $\ff{M} (f)(\phi)$ is formulated as $\sum \phi(x) \cdot f(x)$.
\end{notation}
As discussed \eg in \cite[Sec.~2]{mp:tcs2016}, it is indeed possible to consider supports of greater cardinalities given that the definition of $\ff{M}$ is restricted those abelian monoids equipped with sums for families of the desired cardinality. When instantiated on $\mathbb{B}$, the monoid of boolean values under disjunction, the above is equivalent to the finite powerset $\fpf$. Likewise, natural numbers under addition yields finite multisets.
Probabilistic computations are a special case of the above where weight functions are distributions (\cf \cite{ks:ic2013-sos}) and are captured by the endofunctor $\mathcal{D}$ given on each set $X$ as $\mathcal{D}X = \{\phi \in \ff{[0,\infty)}X \mid \sum\phi(x) = 1\}$ and on each function $f$ as $\ff{[0,\infty)}f$.

From this perspective, $\mathcal{D}$ can be thought as a sort of ``subtype'' of $\ff{[0,\infty)}$. This situation is formalised by means of (component-wise) injective natural transformations (herein \emph{injective transformations}).
Composition and products of natural transformations are component-wise and the class of injective ones is closed under such operations. 
In general, for an injective transformation $\mu$ and a n endofunctor $T$, $\mu_T$ is again injective but $T\mu$ may not be so. The latter is injective given that $T$ \emph{preserves injective maps} \ie $Tf$ is injective whenever $f$ is injective. 
This mild assumption is met by all examples considered in this paper and is preserved by endofunctor composition and products---these endofunctors are also known as \emph{(generalised) Kripke polynomial endofunctors} \cite{bonsangue:lics2009}.
\begin{lemma}
	\label{thm:inj-preservation}
	Let $T$ be any endofunctor described by the grammar:
	\begin{equation}\textstyle
	 	T \Coloneqq Id \mid A \mid T + T \mid \prod_{i \in I} T_i \mid \ff{M}
	\end{equation}
	Injective functions are preserved by $T$.
\end{lemma}

\begin{example}
	\label{ex:cast-wlts-ultras}
	The endofunctor $\mathcal{P}\ff{M}$ models the alternation of non-deterministic steps with quantitative aspects captured by $(M,+,0)$.
	There is an injective transformation $\eta\colon Id \to \mathcal{P}$ whose components are given by the mapping $x \mapsto \{x\}$ and hence, by composition, $\eta_{\ff{M}} \colon \ff{M} \to \mathcal{P}\ff{M}$ is an injective transformation.
	\qed
\end{example}

\begin{definition}
	For an endofunctor $T$ over $\Set$, a transition system of type $T$ (\emph{$T$-system}) is a pair $(X,\alpha)$ where $X$ is the set of states (\emph{carrier}) and $\alpha\colon X \to TX$ is the \emph{transition map}.
	For $(X, \alpha)$ and $(Y, \beta)$ $T$-systems, a \emph{$T$-homomorphism} from the former to the latter is a function $f\colon X \to Y$ such that $Tf \circ \alpha = f \circ \beta$.
\end{definition}

Since system homomorphism composition is defined in terms of composition of the underlying functions on carriers it is immediate to check that the operation is associative and has identities. Therefore, any class of systems together with their homomorphisms defines a category.

We adopt the following notational conventions.
\begin{notation}
	A transition system $(X,\alpha)$ is referred by its transition map only; in this case its carrier is written $\car(\alpha)$.
	Homomorphisms are denoted by their underlying function.
	Categories of systems are written using sans serif font with $\Sys{T}$ being the category of all $T$-systems and $T$-homomorphisms and $\cat{C}|_{T}$ its subcategory of systems in the category $\cat{C}$.
\end{notation}

\begin{example}[LTSs]
For a set $A$ of labels, labelled transition systems are $(\mathcal{P}-)^A$-systems, and image finite LTSs $(\fpf-)^A$-systems \cite{rutten:universal}.
Hereafter let $\cat{LTS}$ denote the category of all image-finite labelled transition systems and let $\cat{LTS}(A) \defeq \Sys{(\fpf-)^A}$ be its subcategory of systems labelled over $A$.
\qed
\end{example}

\begin{example}[WLTSs]
For a set of labels $A$ and an abelian monoid $M$, weighted labelled transition systems are characterised by the endofunctor $(\ff{M}-)^A$ \cite{ks:ic2013-sos} and hence form the category $\cat{WLTS}(A,M) \defeq \Sys{(\ff{M}-)^A}$ \ie the $(A,M)$-indexed component of $\cat{WLTS}$, the category of all WLTSs.
When the monoid $\mathbb{B}$ of boolean values under disjunction is considered, $\cat{WLTS}(A,\mathbb{B})$ is $\cat{LTS}(A)$.
\qed
\end{example}

\begin{example}[ULTraSs]
We adopt the presentation of ULTraSs given in \cite{mp:qapl14,mp:tcs2016}.
For a set of labels $A$ and an abelian monoid $M$, uniform labelled transition systems are characterised by the endofunctor $(\mathcal{P}\ff{M}-)^A$; image finite ULTraSs by $(\fpf\ff{M}-)^A$.
We denote by $\cat{ULTraS}$ the category of all image-finite ULTraSs
and by $\cat{ULTraS}(A,M)$ its subcategory of systems with labels in $A$ and weights in $M$. WLTSs can be cast to ULTraSs by means of the injective transformation $(\eta_{\ff{M}})^A$ described in \cref{ex:cast-wlts-ultras}. 
As shown in \cite{mp:qapl14}, this natural transformation is actually embeds the semantics of WLTSs (in the sense of \cref{sec:reductions}) into a special class of ULTraSs called called in \cite{denicola13:ultras} \emph{functional}.
\qed
\end{example}

\begin{example}[FuTSs]
FuTSs are $T$-systems for $T$ generated by the grammar
\begin{equation}
	T \Coloneqq (S-)^A \mid T \times (S-)^A
	\qquad
	S \Coloneqq \ff{M} \mid \ff{M} \circ S
\end{equation}
where $A$ and $M$ range over (non-empty) sets of labels and (non-trivial) abelian monoids, respectively. Any such endofunctor is equivalently described by:
\begin{equation}\textstyle
	(\ff{\vv{M}}f)^{\vv{A}}
	\defeq
	\prod_{i=0}^{n} (\ff{\vv{M}_i}f)^{A_i}
	\qquad\text{and}\qquad
	(\ff{\vv{M}_i}f)^{A_i}
	\defeq
	(\ff{M_{i,0}}\dots \ff{M_{i,m_i}}f)^{A_i}
\end{equation}
for $\vv{A} = \langle A_0,\dots,A_n\rangle$ a sequence of non-empty sets, $\vv{M}_i = \langle M_{i,0},\dots,M_{i,l_i}\rangle$ a sequence of non-trivial abelian monoids, and $\vv{M} = \langle \vv*{M}{0},\dots,\vv*{M}{n}\rangle$.
(Up to minor notational variations, this characterisation can be found in \cite{latella:qapl2015,mp:tcs2016}.)
For any $\vv{A}$ and $\vv{M}$ as above define $\cat{FuTS}\big(\vv{A},\vv{M}\big)$ as $\Sys{(\ff{\vv{M}}-)^{\vv{A}}}$. Clearly, $\cat{FuTS}(\langle A\rangle,\langle M\rangle)$ and $\cat{FuTS}(\langle A\rangle,\langle B,M\rangle)$ coincide with $\cat{WLTS}{(A,M)}$ and $\cat{ULTraS}(A,M)$, respectively. Then, \cat{LTS}, \cat{WLTS}, and \cat{ULTraS} are subcategories of $\cat{FuTS}$, the category of all FuTSs.

For $\vv{M} = \langle\langle M_{0,0},\dots,M_{0,l_0}\rangle,\dots,\langle M_{n,0} \dots M_{n,l_n}\rangle\rangle$ as above, recall from \cite{latella:qapl2015} that a FuTS over $\vv{M}$ is called:
\begin{itemize}
	\item
		\emph{nested} whenever $n=0$,
	\item
		\emph{combined} whenever $m_i = 0$ for each $i \in \{0,\dots,n\}$, and
	\item
		\emph{simple} whenever it is both combined and nested.
\end{itemize}
The categories of nested, combined, and simple FuTSs are denoted as $\cat{N-FuTS}$, $\cat{C-FuTS}$, and $\cat{S-FuTS}$, respectively. In particular, $\cat{S-FuTS}$ and $\cat{WLTS}$ coincide.
\qed
\end{example}

\section{Equivalence extensions}
\label{sec:equivalence-extensions}
Several definitions of bisimulation found in literature use (more or less explicitly) some sort of extension of equivalence relations from state spaces to behaviours over these spaces. For instance, in \cite{sl:njc95} two probability distributions are \mbox{considered} equivalent with respect to an equivalence relation $R$ on their domain if they assign the same probability to any equivalence class induced by $R$:
\begin{equation}
	\phi \equiv_R \psi \defiff \forall C \in X/R\ \left(\sum_{x \in C}\phi(x) = \sum_{x \in C}\psi(x)\right)\text{.}
\end{equation}
This section defines equivalence extensions for arbitrary endofunctors (over \Set) and studies how constructs such as composition or products reflect on these extensions, providing some degree of modularity.

\begin{definition}
	For an equivalence relation $R$ on $X$ its \emph{$T$-extension} is the equivalence relation $R^T$ on $TX$:
	\begin{equation}
		\phi \mathrel{R^T} \psi \defiff (T\kappa)(\phi) = (T\kappa)(\psi)
	\end{equation}
	where $\kappa\colon X \to X/R$ is the canonical projection to the quotient induced by $R$.
\end{definition}

As an example, let us consider the endofunctor $(-)^A$ describing deterministic inputs on $A$: the resulting extension for an equivalence relation $R$ relates functions mapping the same inputs to states related by $R$.
\begin{equation}
	\phi \mathrel{R}^{(-)^A} \psi
	\iff
	\kappa\circ\phi = \kappa \circ \psi
	\iff
	\forall a \in A\ (\phi(a) \mathrel{R} \psi(a))
	\text{.}
\end{equation}
Extensions for $\mathcal{P}$ are precisely ``subset closure'' of relations (\cf \cite{mp:tcs2016}) and relate all and only those subsets for which the given relation is a correspondence. Formally:
\begin{equation}
	Y \mathrel{R}^{\mathcal{P}} Z
	\iff
	\{\kappa(y) \mid y \in Y\} = \{\kappa(z) \mid z \in Z\}
	\iff
	(\forall y \in Y \exists z \in Z (y \mathrel{R} z))
	\land
	(\forall z \in Z \exists y \in Y (y \mathrel{R} z))
\end{equation}
Extension for $\ff{M}$ are generalise the subset closure to multisets and relate only weight functions assigning the same cumulative weight to each equivalence class induced by $R$:
\begin{equation}
	\phi \mathrel{R}^{\ff{M}} \psi
	\iff 
	\sum \phi(x) \cdot \kappa(x) = \sum \psi(x) \cdot \kappa(x)
	\iff 
	\forall C \in X/R\left(\sum_{x\in C} \phi(x) = \sum_{x\in C} \psi(x)\right)
	\text{.}
\end{equation}
In particular, ${R^\mathcal{D}}$ is precisely Segala's equivalence $\equiv_R$ \cite{sl:njc95}.

Consider extensions for the endofunctor $(\mathcal{P}-)^A$ describing LTSs:
\begin{align}
	\phi \mathrel{R}^{(\mathcal{P}-)^A} \psi
	& \iff
	\sum\left(\sum \phi(a)(x) \cdot \kappa(x)\right)\cdot a = 
	\sum\left(\sum \psi(a)(x) \cdot \kappa(x)\right)\cdot a
	\\ & \iff
	\forall a \in A \left(
	(\forall y \in \phi(a)\, \exists z \in \psi(a)\, (y \mathrel{R} z))
	\,\land\,
	(\forall z \in \psi(a)\, \exists y \in \phi(a)\, (y \mathrel{R} z))
	\right)
	\text{.}
\end{align}
Clearly, $\mathrel{R}^{\mathcal{P}(-)^A}$ can be equivalently written as
\begin{equation}
	\phi \mathrel{R}^{\mathcal{P}(-)^A} \psi \iff \forall a \in A (\phi(a) \mathrel{R}^{\mathcal{P}} \psi(a))
\end{equation}
which suggests some degree of modularity in the definition of extensions to composite endofunctors. This kind of reformulations is not possible since for arbitrary endofunctors $T$ ans $S$, it holds only that
\begin{equation}
	\phi \mathrel{\left(R^S\right)^T} \psi \Longrightarrow \phi \mathrel{R^{T\circ S}} \psi\text{.}
\end{equation}
The converse implication holds whenever $T$ preserves injections.
\begin{lemma}
	\label{thm:extension-composition}
	For $T$ and $S$ endofunctors, 
	\begin{itemize}
		\item
			$\left(R^S\right)^T \subseteq R^{T\circ S}$ and, 
		\item
			given $T$ preserves injective functions, $\left(R^S\right)^T \supseteq R^{T\circ S}$.
	\end{itemize}
\end{lemma}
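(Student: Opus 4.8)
The plan is to reduce everything to the canonical factorization of a function through its kernel and then exploit functoriality. Write $\kappa\colon X \to X/R$ for the quotient projection of $R$. By definition $R^S$ is exactly the kernel of $S\kappa$, that is, $\phi \mathrel{R^S} \psi$ iff $(S\kappa)(\phi) = (S\kappa)(\psi)$. Hence $S\kappa$ factors as $S\kappa = m \circ q$, where $q\colon SX \to SX/R^S$ is the canonical projection onto the quotient---this is precisely the map playing the role of $\kappa$ when forming the $T$-extension $\left(R^S\right)^T$---and $m\colon SX/R^S \monoto S(X/R)$ is the induced \emph{injective} map sending an $R^S$-class to its common image under $S\kappa$.

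Applying $T$ and using functoriality gives $T(S\kappa) = (Tm)\circ(Tq)$. I would then unfold the two relations against this identity. On one hand,
\[
	\phi \mathrel{R^{T\circ S}} \psi \iff (TS\kappa)(\phi) = (TS\kappa)(\psi) \iff (Tm)\bigl((Tq)(\phi)\bigr) = (Tm)\bigl((Tq)(\psi)\bigr);
\]
on the other hand, $\phi \mathrel{\left(R^S\right)^T} \psi \iff (Tq)(\phi) = (Tq)(\psi)$.

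The first inclusion is immediate: if $(Tq)(\phi) = (Tq)(\psi)$, applying $Tm$ to both sides yields $\phi \mathrel{R^{T\circ S}} \psi$, and this needs no hypothesis on $T$. For the reverse inclusion I would invoke the assumption that $T$ preserves injective functions: since $m$ is injective, so is $Tm$, and therefore the equality $(Tm)\bigl((Tq)(\phi)\bigr) = (Tm)\bigl((Tq)(\psi)\bigr)$ forces $(Tq)(\phi) = (Tq)(\psi)$, that is, $\phi \mathrel{\left(R^S\right)^T} \psi$.

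The only real content is the factorization step: recognising that the projection $q$ underlying $\left(R^S\right)^T$ is the surjective part of the canonical factorization of $S\kappa$, with injective corestriction $m$. Once this is in place both inclusions follow purely formally from functoriality, and the role of the injection-preservation hypothesis is exactly to make $Tm$ injective so that it can be cancelled on the left. I expect no genuine obstacle beyond checking that $m$ is well defined and injective, which is simply the universal property of the quotient of $SX$ by $\ker(S\kappa) = R^S$.
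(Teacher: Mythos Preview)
Your argument is correct and essentially identical to the paper's: both factor $S\kappa$ through the quotient projection $SX \to SX/R^S$ followed by the induced injection into $S(X/R)$, then apply $T$ and use functoriality for one inclusion and injection-preservation for the other. Only the notation differs (the paper writes $\kappa^S$ and $q^S$ where you write $q$ and $m$).
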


\begin{proof}
	Let $\kappa^S\colon SX \to SX/R^S$ be the canonical projection to the quotient induced by the equivalence relation $R^S$. Since, by definition,
	$\kappa^S(\rho) = \kappa^S(\theta)$ implies $(S\kappa)(\rho) = (S\kappa)(\theta)$ there is a (unique) function $q^S\colon SX/R^S \to S(X/R)$ such that $S\kappa = q^S \circ \kappa^S$.
	From  $TS\kappa = Tq^S \circ T\kappa^S$ and the definition of $\mathrel{R^{TS}}$ and $\left(R^S\right)^T$, it follows that:
	\begin{equation}
		\phi \mathrel{\left(R^S\right)^T} \psi
		\implies
		(TS\kappa)(\phi) = (TS\kappa)(\psi) \implies
		\phi \mathrel{R^{TS}} \psi
	\end{equation}
	proving first part of the thesis.
	Since $\rho \mathrel{R^S} \theta \iff \kappa^S(\rho) = \kappa^S(\theta)$ we conclude that $q^S$ is an injection and, by hypothesis, $Tq^S$ is an injection too. Therefore:
	\begin{equation}
		\phi \mathrel{R^{TS}} \psi
		\implies
		(TS\kappa)(\phi) = (TS\kappa)(\psi)
		\implies
		(T\kappa^S)(\phi) = (T\kappa^S)(\psi)
		\implies
		\phi \mathrel{\left(R^S\right)^T} \psi
	\end{equation}
	completing the proof.
\end{proof}

Endofunctors modelling inputs, such as $(-)^A$ and $(\fpf-)^A$, can be seen as products (in these cases as powers) of endofunctors indexed over the input space $A$. As suggested by the above examples, for product endofunctors it holds that:
\begin{equation}
	\phi \mathrel{R}^{(\prod T_i)} \psi \iff \forall i \in I(\pi_{i}(\phi) \mathrel{R}^{T_i} \pi_i(\psi))
\end{equation}
where $\pi_i\colon \prod T_i X \to T_i X$ is the projection on the $i$-th component of the product.

\begin{lemma}
	\label{thm:extension-product}
	For $I \neq \emptyset$ and $\{T_i\}_{i \in I}$,
	$\mathrel{R}^{\left(\prod_{i \in I} T_i\right)} \cong \prod_{i \in I}\mathrel{R}^{T_i}$.
\end{lemma}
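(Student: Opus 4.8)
The plan is to unfold both sides of the claimed isomorphism over the common carrier and show they define the same equivalence relation. Here the endofunctor $\prod_{i\in I} T_i$ sends $X$ to the product $\prod_{i\in I} T_i X$, so both $R^{\prod_{i\in I} T_i}$ and $\prod_{i\in I} R^{T_i}$ are equivalence relations on one and the same set $\prod_{i\in I} T_i X$; the relevant isomorphism is just the canonical identification of the underlying set of the product endofunctor with the product of the underlying sets, which is (up to the usual coherence) the identity. Thus it suffices to prove that, for all $\phi,\psi \in \prod_{i\in I} T_i X$,
\[
	\phi \mathrel{R^{\prod_{i\in I} T_i}} \psi \iff \forall i \in I\ \bigl(\pi_i(\phi) \mathrel{R^{T_i}} \pi_i(\psi)\bigr),
\]
since the right-hand side is, by the characterisation preceding the statement, exactly $\phi \mathrel{\prod_{i\in I} R^{T_i}} \psi$.

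First I would record how the product endofunctor acts on the canonical projection $\kappa\colon X \to X/R$. By definition of the product of endofunctors, $\bigl(\prod_{i\in I} T_i\bigr)\kappa$ is the function sending a tuple $(\xi_i)_{i\in I}$ to $\bigl((T_i\kappa)(\xi_i)\bigr)_{i\in I}$; equivalently, $\pi_i \circ \bigl(\prod_{i\in I} T_i\bigr)\kappa = (T_i\kappa)\circ \pi_i$ for every $i \in I$.

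The remainder is a direct calculation. Unfolding the definition of the $\bigl(\prod_{i\in I} T_i\bigr)$-extension gives
\[
	\phi \mathrel{R^{\prod_{i\in I} T_i}} \psi \iff \Bigl(\textstyle\prod_{i\in I} T_i\Bigr)\kappa(\phi) = \Bigl(\textstyle\prod_{i\in I} T_i\Bigr)\kappa(\psi).
\]
Equality of two elements of a product holds precisely when their projections agree on every index, so by the componentwise description above the right-hand equality is equivalent to $\forall i \in I\ \bigl((T_i\kappa)(\pi_i\phi) = (T_i\kappa)(\pi_i\psi)\bigr)$; and each such component equality is, by definition of $R^{T_i}$, exactly $\pi_i(\phi)\mathrel{R^{T_i}}\pi_i(\psi)$. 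Chaining these equivalences yields the displayed characterisation, and hence the claim.

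There is no substantial obstacle: the argument is essentially the observation that the extension construction commutes with the componentwise structure of a product, resting only on (i) the componentwise action of a product endofunctor on morphisms and (ii) the universal property of products phrased as ``equality is tested componentwise''. The single point requiring a little care is keeping the identification $\bigl(\prod_{i\in I} T_i\bigr)X \cong \prod_{i\in I} T_i X$ explicit, so that the two relations are genuinely compared over the same set; once that is fixed the equivalence is immediate and, in contrast with \cref{thm:extension-composition}, needs no preservation-of-injections hypothesis.
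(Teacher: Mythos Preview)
Your proof is correct and follows essentially the same approach as the paper: unfold the definition of the $\bigl(\prod_i T_i\bigr)$-extension, use that the product endofunctor acts componentwise on $\kappa$, and that equality in a product is detected componentwise. The paper's argument is terser but structurally identical.
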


\begin{proof}
	Write $T$ for $\prod_{i \in I} T_i$ and
	recall that $\left(\prod_{i \in I} T_i\right)X$ is  $\prod_{i \in I} T_iX$.
	Then:
	\begin{equation}
		\phi \mathrel{R}^{T} \psi
		\textstyle
		\iff
		(\prod T_i\kappa)(\phi) = (\prod T_i\kappa)(\psi)
		\iff
		\prod (T_i\kappa)(\phi_i) = \prod (T_i\kappa)(\psi_i)
		\iff
		\phi \mathrel{\prod R^{T_i}} \psi
	\end{equation}
	where $\kappa\colon X \to X/R$ is the canonical projection to the quotient induced by $R$
	and $\pi_i\colon \prod_{i \in I} T_iX \to T_iX$ is the $i$-th projection.
\end{proof}

FuTSs offer an instance of the above result: the endofunctor $(\ff{\vv{M}}-)^{\vv{A}}$ modelling FuTSs over $\vv{M} = \langle\vv{M}_{0},\dots,\vv{M}_{n}\rangle$ and $\vv{A} = \langle A_0;\dots;A_n\rangle$ is a product indexed over $\{(i,a) \mid i \leq n \land a \in A_i\}$. Thus, the extension $R^{(\ff{\vv{M}}-)^{\vv{A}}}$ is described by:
\begin{equation}
	\phi \mathrel{R}^{(\ff{\vv{M}}-)^{\vv{A}}} \psi
	\iff
	\forall i \leq n \forall a \in A_i
	(\phi_i(a) \mathrel{R}^{\ff{\vv{M}_i}} \psi_i(a))
	\text{.}
\end{equation}

For an equivalence relation $R$ define its restriction to $X$ as the equivalence relation $R|_X \defeq R \cap (X \times X)$.
Both $(R|_X)^T$ and $R^T|_{TX}$ are equivalence relations over the set of $T$-behaviours for $X$ and, in general, the former is finer than the latter, unless $T$ preserves injections---in such case, the two coincide.

\begin{lemma}
	\label{thm:extension-and-restriction}
	For $R$ and equivalence relation on $Y$ and $X \subseteq Y$,
	\begin{itemize}
		\item
			$(R|_X)^T \subseteq R^T|_{TX}$, and, 
		\item
			provided $T$ preserves injective functions, $(R|_X)^T \supseteq R^T|_{TX}$.
	\end{itemize}
\end{lemma}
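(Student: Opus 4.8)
The plan is to mirror the argument used for \cref{thm:extension-composition}: reduce both sides to equalities of images under suitable applications of $T$, exhibit a canonical injection relating the two quotients, and invoke injection-preservation only for the harder inclusion.

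First I would pin down the meaning of $R^T|_{TX}$. Writing $\iota\colon X \hookrightarrow Y$ for the inclusion, the set $TX$ sits inside $TY$ only through $T\iota$ (which need not be injective in general), so the restriction is to be read as the pullback
\begin{equation}
	\phi \mathrel{R^T|_{TX}} \psi \iff (T\iota)(\phi) \mathrel{R^T} (T\iota)(\psi)
\end{equation}
for $\phi,\psi \in TX$. Let $\kappa\colon Y \to Y/R$ and $\kappa'\colon X \to X/(R|_X)$ be the two canonical projections. Unfolding the definition of the $T$-extension, the right-hand side above amounts to $(T(\kappa\circ\iota))(\phi) = (T(\kappa\circ\iota))(\psi)$, whereas $\phi \mathrel{(R|_X)^T} \psi$ amounts to $(T\kappa')(\phi) = (T\kappa')(\psi)$.

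The key observation is that $\kappa\circ\iota$ and $\kappa'$ have the same kernel: for $x,x' \in X$ one has $x \mathrel{R|_X} x'$ iff $x \mathrel{R} x'$, \ie iff $(\kappa\circ\iota)(x) = (\kappa\circ\iota)(x')$. Hence there is a unique map $j\colon X/(R|_X) \to Y/R$ with $\kappa\circ\iota = j \circ \kappa'$, sending the class of $x$ in $X/(R|_X)$ to its class in $Y/R$, and $j$ is injective precisely because the two projections share the same kernel. Applying $T$ yields the factorisation $T(\kappa\circ\iota) = Tj \circ T\kappa'$, which is the crux of the proof.

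The first inclusion is then immediate: if $(T\kappa')(\phi) = (T\kappa')(\psi)$ then composing with $Tj$ gives $(T(\kappa\circ\iota))(\phi) = (T(\kappa\circ\iota))(\psi)$, so $(R|_X)^T \subseteq R^T|_{TX}$. For the converse I would use the injection-preservation hypothesis: since $j$ is injective, $Tj$ is injective as well, so $(T(\kappa\circ\iota))(\phi) = (T(\kappa\circ\iota))(\psi)$ forces $(T\kappa')(\phi) = (T\kappa')(\psi)$, giving $R^T|_{TX} \subseteq (R|_X)^T$. I expect the only real subtlety to be the correct reading of $R^T|_{TX}$ as a pullback along $T\iota$ rather than a genuine subset inclusion; once that is settled, the argument is structurally identical to \cref{thm:extension-composition}, with the injection $j$ playing the role there played by $q^S$.
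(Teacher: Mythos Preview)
Your proof is correct and essentially identical to the paper's: both exhibit the injective map $j\colon X/(R|_X) \to Y/R$ (the paper calls it $q$) with $\kappa\circ\iota = j \circ \kappa'$, apply $T$, and use injection-preservation of $T$ only for the second inclusion. Your explicit treatment of $R^T|_{TX}$ as a pullback along $T\iota$ is a welcome clarification of a point the paper leaves implicit (it silently writes $(T\kappa)(\phi)$ for $\phi \in TX$), but the argument is otherwise the same.
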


\begin{proof}
	Let $\kappa\colon Y \to Y/R$ and $\kappa'\colon X \to X/{R|_X}$ be the canonical projections induced by $R$ and $R|_X$, respectively. Since the latter is given by restriction of the former to $X \subseteq Y$, there is a unique and injective map $q\colon X/{R|_X} \to Y/R$ such that $\kappa = q \circ \kappa$. The first part of the thesis follows by:
	\begin{equation}
		\phi \mathrel{(R|_X)}^{T} \psi
		\implies
		(T\kappa')(\phi) = (T\kappa')(\psi)
		\implies
		(T\kappa)(\phi) = (T\kappa)(\psi)
		\implies
		\phi \mathrel{R}^T \psi
	\end{equation}
	since $T\kappa = Tq \circ T\kappa$.
	On the other hand, by hypothesis on $T$, $Tq$ is injective and hence
	\begin{equation}
		\phi \mathrel{R^{T}|_{TX}} \psi
		\implies
		(T\kappa)(\phi) = (T\kappa)(\psi)
		\implies
		(T\kappa')(\phi) = (T\kappa')(\psi)
		\implies
		\phi \mathrel{(R|_X)}^T \psi
		\tag*{\qedhere}
	\end{equation}
\end{proof}

Intuitively, this result allows us to encode multiple steps sharing the same computational aspects as single steps at the expense of bigger state spaces. In fact, it follows that $(R|_X)^{T^{n+1}} = R^T|_{T^{n}X}$, assuming $T$ preserves injections.

\begin{lemma}
	\label{thm:injective-nat-extension}
	Let $\mu\colon T \to S$ be an injective natural transformation.
	For $R$ an equivalence relation on $X$,
	\begin{equation}
		\phi \mathrel{R}^T \psi \iff \mu_{X}(\phi) \mathrel{R}^S \mu_{X}(\psi)
		\text{.}
	\end{equation}
\end{lemma}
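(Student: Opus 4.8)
The plan is to unfold both extensions in terms of the canonical projection $\kappa\colon X \to X/R$ and then exploit the naturality square of $\mu$ at $\kappa$. By definition, $\phi \mathrel{R^T} \psi$ holds exactly when $(T\kappa)(\phi) = (T\kappa)(\psi)$, whereas $\mu_X(\phi) \mathrel{R^S} \mu_X(\psi)$ holds exactly when $(S\kappa)(\mu_X(\phi)) = (S\kappa)(\mu_X(\psi))$. Instantiating the naturality of $\mu$ at the morphism $\kappa$ yields the commuting identity $S\kappa \circ \mu_X = \mu_{X/R} \circ T\kappa$, so that $(S\kappa)(\mu_X(\phi)) = \mu_{X/R}\big((T\kappa)(\phi)\big)$ and symmetrically for $\psi$. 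This single equation is the engine of the whole argument.

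For the forward implication I would apply $\mu_{X/R}$ to both sides of $(T\kappa)(\phi) = (T\kappa)(\psi)$; the naturality identity then rewrites the result as $(S\kappa)(\mu_X(\phi)) = (S\kappa)(\mu_X(\psi))$, which is precisely $\mu_X(\phi) \mathrel{R^S} \mu_X(\psi)$. Note that this direction uses only that $\mu$ is a natural transformation and does not appeal to injectivity at all.

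For the backward implication I would start from the equality $\mu_{X/R}\big((T\kappa)(\phi)\big) = \mu_{X/R}\big((T\kappa)(\psi)\big)$, obtained by the same rewriting from $\mu_X(\phi) \mathrel{R^S} \mu_X(\psi)$, and then cancel $\mu_{X/R}$ on the left. This cancellation is exactly where the hypothesis that $\mu$ is an \emph{injective} transformation enters: by assumption every component of $\mu$ is injective, in particular the component $\mu_{X/R}$ at the quotient, so equality of the two images forces $(T\kappa)(\phi) = (T\kappa)(\psi)$, that is, $\phi \mathrel{R^T} \psi$.

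Since the proof is essentially a one-diagram computation, there is no serious obstacle. The only point requiring care is to invoke injectivity at the right object—namely the component $\mu_{X/R}$ over the quotient $X/R$, rather than $\mu_X$ over $X$—because it is the naturality square at $\kappa$, and not the trivial square at the identity on $X$, that connects the two extensions.
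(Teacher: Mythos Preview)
Your proof is correct and follows essentially the same one-diagram computation as the paper: unfold both extensions via $\kappa$, use the naturality square $S\kappa \circ \mu_X = \mu_{X/R}\circ T\kappa$, and cancel the injective component of $\mu$ for the backward direction. You are in fact more careful than the paper's write-up in pinpointing that the injectivity needed is that of $\mu_{X/R}$ (the paper's chain literally writes $\mu_X\circ T\kappa$, which does not type-check and is evidently a slip for $\mu_{X/R}\circ T\kappa$).
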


\begin{proof}
	\setupstepsequence{step:injective-nat-extension}
	It holds that
	\begin{align}
		\phi \mathrel{R}^T \psi
		& \iff
		T\kappa(\phi) = T\kappa(\psi)
		\steplabel{\iff}
		(\mu_X \circ T\kappa)(\phi) = (\mu_X \circ T\kappa)(\psi)
		\steplabel{\iff}
		(S\kappa\circ \mu_X)(\phi) = (S\kappa\circ \mu_X)(\psi)
		\\&\iff
		\mu_{X}(\phi) \mathrel{R}^S \mu_{X}(\psi)
	\end{align}
	where \stepref{1} and \stepref{2} follow by $\mu_X$ being injective and by $\mu$ being a natural transformation, respectively.
\end{proof}

\section{Bisimulations}\label{sec:bisimulations}
In this section we give a general definition of bisimulation based on the notion of equivalence relation extension introduced above.
This approach is somehow modular, as the definition reflects the structure of the endofunctors characterising systems under scrutiny. This allows to extend results developed in \cref{sec:equivalence-extensions} to bisimulation and, in \cref{sec:reductions}, to reductions.

\begin{definition}
	\label{def:bisim-via-extensions}
	An equivalence relation $R$ is a \emph{strong $T$-bisimulation} (herein, bisimulation) for a $T$-system $\alpha$ if and only if:
	\begin{equation}
		x \mathrel{R} x' \implies \alpha(x) \mathrel{R^T} \alpha(x')
		\text{.}
	\end{equation}
	States $x$ and $x'$ of $\alpha$ are called \emph{bisimilar} (written $x \sim x'$) whenever there exists a bisimulation $R$ for $\alpha$ such that $x \mathrel{R} x'$.
	The set of all bisimulations for the system $\alpha$ is denoted by $\bis(\alpha)$.
\end{definition}

\noindent
The notion of bisimulations as per \cref{def:bisim-via-extensions} coincides with Aczel-Mendler's notion of \emph{precongruence} \cite{am89:final}.

\begin{definition}
	\label{def:precongruence}
	An equivalence relation $R$ on $X$ is a (Aczel-Mendler) precongruence for $\alpha\colon X \to TX$ if, and only if, for any two functions $f,f' \colon X \to Y$ such that $x \mathrel{R} x' \implies f(x) = f'(x')$ it holds that \begin{equation}
		x \mathrel{R} x' \implies (Tf \circ \alpha)(x) = (Tf' \circ \alpha)(x')
		\text{.}
	\end{equation}
\end{definition}

\begin{theorem}
	\label{thm:bisim-precongruence}
	For $\alpha$ a $T$-system, every strong $T$-bisimulation for $\alpha$ is an (Aczel-Mendler) precongruence and \viceversa.
\end{theorem}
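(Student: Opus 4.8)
The plan is to prove both implications directly from the definitions, the pivot being the universal property of the canonical projection $\kappa\colon X \to X/R$. Throughout, recall that $R^T$ is defined by $\phi \mathrel{R^T} \psi \iff (T\kappa)(\phi) = (T\kappa)(\psi)$, so that the bisimulation condition for $R$ unfolds to $x \mathrel{R} x' \implies (T\kappa\circ\alpha)(x) = (T\kappa\circ\alpha)(x')$.

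I would start with the converse direction (precongruence $\Rightarrow$ bisimulation), since it is essentially immediate: given a precongruence $R$, instantiate \cref{def:precongruence} with $f = f' = \kappa$. The hypothesis $x \mathrel{R} x' \implies \kappa(x) = \kappa(x')$ holds by the very definition of the quotient, so the conclusion yields $x \mathrel{R} x' \implies (T\kappa\circ\alpha)(x) = (T\kappa\circ\alpha)(x')$, which is exactly the bisimulation condition spelled out above.

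For the forward direction (bisimulation $\Rightarrow$ precongruence) the key step is to exploit that $R$, being an equivalence relation, is reflexive. Let $R$ be a bisimulation and let $f,f'\colon X \to Y$ satisfy $x \mathrel{R} x' \implies f(x) = f'(x')$. Taking $x' = x$ forces $f = f'$, and the hypothesis then says precisely that $f$ respects $R$, \ie $x \mathrel{R} x' \implies f(x) = f(x')$. By the universal property of the quotient, $f$ factors as $f = g\circ\kappa$ for some $g\colon X/R \to Y$, whence $Tf = Tg\circ T\kappa$ by functoriality. Now for $x \mathrel{R} x'$ the bisimulation condition gives $(T\kappa\circ\alpha)(x) = (T\kappa\circ\alpha)(x')$, and applying $Tg$ to both sides yields $(Tf\circ\alpha)(x) = (Tf'\circ\alpha)(x')$, as required.

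The argument is short and the only genuinely delicate point is the reflexivity observation collapsing $f$ and $f'$ into a single quotient-respecting map; once that is in place, each direction is one application of the universal property of $\kappa$ (respectively functoriality of $T$). I therefore expect no serious obstacle. In particular, and in contrast to the lemmas of \cref{sec:equivalence-extensions}, no preservation-of-injections hypothesis on $T$ is needed here.
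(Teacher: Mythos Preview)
Your proof is correct and follows essentially the same route as the paper's: both directions hinge on the universal property of the canonical projection $\kappa$, with the precongruence-to-bisimulation direction instantiating $f=f'=\kappa$ and the bisimulation-to-precongruence direction factoring $f$ (and $f'$) through $\kappa$ and then pushing forward along the induced map via functoriality of $T$. The only difference is expository: you make explicit the reflexivity argument that forces $f=f'$, whereas the paper simply asserts the existence of a single $q$ with $f = q\circ\kappa = f'$ and leaves that step to the reader.
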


\begin{proof}
	\setupstepsequence{step:bisim-precongruence-1}
	Assume $R$ is a bisimulation for $\alpha\colon X \to TX$. For $f,f'\colon X \to Y$ such that $x \mathrel{R} x' \implies f(x) = f'(x')$ we have that:
	\begin{equation}
		x \mathrel{R} x
		\implies
		\alpha(x) \mathrel{R^T} \alpha(x')
		\iff
		(T\kappa \circ \alpha)(x) = (T\kappa \circ \alpha)(x')
		\steplabel{\implies}
		(Tf \circ \alpha)(x) = (Tf' \circ \alpha)(x')
	\end{equation}
	where \stepref{1} follows by noting that, since $\kappa\colon X \to X/R$ is a canonical projection and 
	\begin{equation}
		x \mathrel{R} x' \implies f(x) = f'(x')
		\text{,}
	\end{equation}
	there is (a unique) $q\colon X/R \to Y$ such that $f = q\circ \kappa = f'$.

	\setupstepsequence{step:bisim-precongruence-2}
	Assume $R$ is a precongruence for $\alpha$, we have that:
	\begin{equation}
		x \mathrel{R} x
		\steplabel{\iff}
		\kappa(x) = \kappa(x')
		\steplabel{\implies}
		(T\kappa \circ \alpha)(x) = (T\kappa \circ \alpha)(x')
		\steplabel{\iff}
		\alpha(x) \mathrel{R^T} \alpha(x')
	\end{equation}
	where \stepref{1} follows from the definition of $\kappa\colon X \to X/R$,
	\stepref{2} from the assumption that $R$ is a precongruence, and \stepref{3} from the definition of $R^T$.
\end{proof}

Bisimulations for systems considered in this paper are known to be \emph{kernel bisimulations} (\cf \cite{rutten:universal,ks:ic2013-sos,mp:tcs2016,latella:qapl2015}) \ie kernels of functions carrying homomorphisms from systems under scrutiny  \cite{staton11}. These can be intuitively thought as defining \emph{refinement systems} over the equivalence classes they induce.

\begin{definition}
	\label{def:kernel}
	A relation $R$ on $X$ is a kernel bisimulation for $\alpha\colon X \to TX$ if, and only if, there is $\beta\colon Y \to TY$ and $f\colon \alpha \to \beta$ such that $R$ is the kernel of the map underlying $f$.
\end{definition}

In general, \cref{def:bisim-via-extensions} is stricter than \cref{def:kernel} but the two coincide for endofunctors preserving (enough) injections---\eg any example from this paper.

\begin{corollary}
	\label{thm:bisim-kernel}
	For $\alpha\colon X \to TX$, the following are true:
	\begin{itemize}
		\item
			A bisimulation for $\alpha$ is a kernel bisimulation for $\alpha$.
		\item
			If $T$ preserves injections then, a kernel bisimulation for $\alpha$ is a bisimulation for $\alpha$.
	\end{itemize}
\end{corollary}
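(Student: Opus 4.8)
The plan is to prove the two items separately; the first holds for every $T$, while the second is exactly where the injection-preservation hypothesis enters. Both reduce to the universal property of the quotient and to the defining equation of a homomorphism, and I would work directly from \cref{def:bisim-via-extensions,def:kernel} rather than through the precongruence characterisation.

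For the first item I would exhibit the quotient system as witness. Given a bisimulation $R$ for $\alpha\colon X \to TX$, let $\kappa\colon X \to X/R$ be the canonical projection, so that $R$ is the kernel of $\kappa$. By definition of $R^T$ the bisimulation condition unfolds to $x \mathrel{R} x' \implies (T\kappa \circ \alpha)(x) = (T\kappa \circ \alpha)(x')$; that is, $T\kappa \circ \alpha$ is constant on $R$-classes and therefore factors uniquely through $\kappa$. This yields a unique $\beta\colon X/R \to T(X/R)$ with $\beta \circ \kappa = T\kappa \circ \alpha$, which is precisely the statement that $\kappa$ is a $T$-homomorphism from $\alpha$ to $\beta$. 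Since $R$ is the kernel of the map underlying this homomorphism, $R$ is a kernel bisimulation.

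For the second item, suppose $T$ preserves injections and let $R$ be a kernel bisimulation, witnessed by $\beta\colon Y \to TY$ and a homomorphism $f\colon \alpha \to \beta$ with $R = \ker(f)$. I would first factor $f$ through the quotient by its kernel: there is a unique injective $m\colon X/R \to Y$ with $f = m \circ \kappa$, where $\kappa\colon X \to X/R$ is the canonical projection. Fix $x \mathrel{R} x'$, i.e.\ $f(x) = f(x')$. The homomorphism law $Tf \circ \alpha = \beta \circ f$ gives $(Tf \circ \alpha)(x) = (\beta \circ f)(x) = (\beta \circ f)(x') = (Tf \circ \alpha)(x')$, and expanding $Tf = Tm \circ T\kappa$ turns this into $Tm\big((T\kappa \circ \alpha)(x)\big) = Tm\big((T\kappa \circ \alpha)(x')\big)$.

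The final step---and the only place the hypothesis is used---is to cancel $Tm$. Since $m$ is injective and $T$ preserves injective functions, $Tm$ is injective, so the previous equation forces $(T\kappa \circ \alpha)(x) = (T\kappa \circ \alpha)(x')$, i.e.\ $\alpha(x) \mathrel{R^T} \alpha(x')$. As $R = \ker(f)$ is automatically an equivalence relation, this establishes that $R$ is a bisimulation for $\alpha$. The main obstacle is precisely this cancellation: without injection-preservation, $Tm$ need not be injective and the implication can fail, which is exactly why the second item is stated conditionally.
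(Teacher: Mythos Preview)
Your proof is correct, but it follows a different route from the paper. The paper's argument is essentially a two-line citation: it invokes \cref{thm:bisim-precongruence} to identify bisimulations in the sense of \cref{def:bisim-via-extensions} with Aczel--Mendler precongruences, and then appeals to \cite[Thm.~4.1]{staton11}, which relates precongruences and kernel bisimulations (the second direction there likewise needs $T$ to preserve injections). Your argument instead bypasses the precongruence detour entirely and works directly from \cref{def:bisim-via-extensions,def:kernel}: for the forward direction you construct the quotient $T$-system explicitly via the universal property of $\kappa$, and for the converse you factor the witnessing homomorphism as $f = m \circ \kappa$ with $m$ injective and cancel $Tm$. The paper's route is shorter on the page because it outsources the work to Staton's result and to the already-proved \cref{thm:bisim-precongruence}; your route is self-contained, exposes exactly where injection-preservation is used, and in fact reproves the relevant content of the cited theorem in this special setting.
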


\begin{proof}
	By \cref{thm:bisim-precongruence} and \cite[Thm.~4.1]{staton11}.
\end{proof}

This result relates \cref{def:bisim-via-extensions} with the notions of bisimulations found in the literature for the models considered in this paper.
In particular, it follows from \cref{thm:bisim-kernel,thm:inj-preservation} that \cref{def:bisim-via-extensions} captures strong bisimulation for LTSs \cite{milner:cc}, WLTSs \cite{ks:ic2013-sos}, NPLTS \cite{sl:njc95}, ULTraSs \cite{mp:tcs2016}, and FuTSs \cite{latella:qapl2015}
since these are all known to be instances of kernel bisimulation (see \loccit).

\begin{lemma}
	\label{thm:bisim-product}
	For $T = \prod_{i \in I}T_i$ and $\alpha$ a $T$-system, $\bis(\alpha) = \bigcap_{i \in I} \bis(\pi_i \circ \alpha)$.
\end{lemma}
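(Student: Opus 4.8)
The plan is to reduce the claim to the product decomposition of equivalence extensions established in \cref{thm:extension-product}, after which only routine quantifier bookkeeping remains. Observe first that every element of $\bis(\alpha)$ and of each $\bis(\pi_i \circ \alpha)$ is by definition an equivalence relation on the common carrier $X = \car(\alpha)$, so the two sides of the asserted equality range over the same equivalence relations; the entire content of the statement lies in the defining implication of \cref{def:bisim-via-extensions}.

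So I would fix an arbitrary equivalence relation $R$ on $X$ and compare the two membership conditions. The crucial input is \cref{thm:extension-product}, which gives, for all $\phi,\psi \in TX$, the pointwise characterisation $\phi \mathrel{R^T} \psi \iff \forall i \in I\,(\pi_i(\phi) \mathrel{R^{T_i}} \pi_i(\psi))$. Instantiating this at $\phi = \alpha(x)$ and $\psi = \alpha(x')$, and using $\pi_i(\alpha(x)) = (\pi_i \circ \alpha)(x)$, yields
\begin{equation}
	\alpha(x) \mathrel{R^T} \alpha(x')
	\iff
	\forall i \in I\,\big((\pi_i \circ \alpha)(x) \mathrel{R^{T_i}} (\pi_i \circ \alpha)(x')\big)
	\text{.}
\end{equation}

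The proof then concludes by the chain of equivalences
\begin{align}
	R \in \bis(\alpha)
	&\iff \forall x,x'\,\big(x \mathrel{R} x' \implies \alpha(x) \mathrel{R^T} \alpha(x')\big)\\
	&\iff \forall x,x'\,\Big(x \mathrel{R} x' \implies \forall i \in I\,\big((\pi_i \circ \alpha)(x) \mathrel{R^{T_i}} (\pi_i \circ \alpha)(x')\big)\Big)\\
	&\iff \forall i \in I\ \forall x,x'\,\big(x \mathrel{R} x' \implies (\pi_i \circ \alpha)(x) \mathrel{R^{T_i}} (\pi_i \circ \alpha)(x')\big)\\
	&\iff \forall i \in I\,\big(R \in \bis(\pi_i \circ \alpha)\big)
	\iff R \in \textstyle\bigcap_{i \in I} \bis(\pi_i \circ \alpha)
	\text{,}
\end{align}
where the middle step swaps the quantifier over $i$ past the implication—legitimate because the antecedent $x \mathrel{R} x'$ does not depend on $i$.

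I do not expect any genuine obstacle here: the whole argument is driven by \cref{thm:extension-product}, and the remaining manipulations are purely logical. The only point meriting a moment's care is confirming that the isomorphism $\mathrel{R}^{(\prod_{i \in I} T_i)} \cong \prod_{i \in I}\mathrel{R}^{T_i}$ of \cref{thm:extension-product} is exactly the componentwise equivalence recorded in the displayed equation preceding it, so that it may be applied verbatim to $\alpha(x)$ and $\alpha(x')$; this is immediate from the definition of the product extension.
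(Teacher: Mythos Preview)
Your proposal is correct and follows essentially the same approach as the paper: both arguments invoke the product decomposition of equivalence extensions (\cref{thm:extension-product}) to obtain $\alpha(x) \mathrel{R^T} \alpha(x') \iff \forall i \in I\ (\pi_i \circ \alpha)(x) \mathrel{R^{T_i}} (\pi_i \circ \alpha)(x')$, and then unfold the definition of bisimulation. Your version is more explicit about the quantifier exchange, whereas the paper's proof is terse (and in fact appears to cite \cref{thm:reduction-product} where \cref{thm:extension-product} is meant).
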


\begin{proof}
	It follows from \cref{thm:reduction-product} that:
	\begin{equation}
		\alpha(x) \mathrel{R^T} \alpha(x') \iff \alpha(x) \mathrel{\Big(\prod_{i\in I}R^{T_i}\Big)} \alpha(x')
	\end{equation} 
	from which we conclude that:
	\begin{equation}
		\alpha(x) \mathrel{R^T} \alpha(x') \iff \forall i \in I\ (\pi_i \alpha)(x) \mathrel{R^{T}_i} (\pi_i \alpha)(x')
		\text{.}
		\tag*{\qedhere}
	\end{equation}
\end{proof}

A special but well known instance of \cref{thm:bisim-product} is given by definitions of bisimulations found in the literature for LTSs, WLTSs and in general FuTSs. In fact, all these bisimulation contain a universal quantification over the set of labels. For instance, a $R$ is a bisimulation for an LTS $\alpha\colon X \to (\mathcal{P}X)^A$ iff:
\begin{align}
	x \mathrel{R} x' & \implies 
	\forall a \in A \left(\alpha(x)(a) \mathrel{R^{\mathcal{P}}} \alpha(x')(a)\right) 
	\\ & \iff 
	\forall a \in A \left(
	(\forall y \in \phi(a)\, \exists z \in \psi(a)\, (y \mathrel{R} z))
	\land
	(\forall z \in \psi(a)\, \exists y \in \phi(a)\, (y \mathrel{R} z))
	\right)	
\end{align}
that is, if, and only if, $R$ is the intersection of an $A$-indexed family composed by a bisimulation for each transition system $\alpha_{a}\colon X \to \mathcal{P}X$ projection of $\alpha$ on $a \in A$.

\begin{lemma}
	\label{thm:bisim-flattening}
	For $n \in \mathbb{N}$ and $\alpha$ a  $T^{n+1}$-system, there is $T$-system $\underline{\alpha}$ such that the following implications hold:
	\begin{itemize}
		\item
			$R \in \bis(\alpha) \implies \exists R' \in \bis(\underline{\alpha})(R = R'|_{\car(\alpha)})$,
		\item
			$R \in \bis(\underline{\alpha}) \implies R|_{\car(\alpha)} \in \bis(\alpha)$.
	\end{itemize}
\end{lemma}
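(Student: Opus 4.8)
The plan is to realise the $(n{+}1)$-fold iteration of $T$ as a single genuine $T$-step on a larger, stratified carrier, and then transfer bisimulations back and forth between the two systems layer by layer. Concretely, I would set $\underline X \defeq \coprod_{k=0}^{n} T^k X$ with coprojections $\iota_k\colon T^kX \to \underline X$ (each injective, hence preserved by $T$, as $T$ preserves injections), and define $\underline\alpha\colon \underline X \to T\underline X$ by $\underline\alpha(\iota_0(x)) \defeq (T\iota_n)(\alpha(x))$ for $x \in X$ and $\underline\alpha(\iota_k(t)) \defeq (T\iota_{k-1})(t)$ for $t \in T^kX$ with $1 \le k \le n$. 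This is well typed since $\alpha(x) \in T^{n+1}X = T(T^nX)$ and $t \in T^kX = T(T^{k-1}X)$; intuitively each application of $T$ nested inside $\alpha$ is unfolded into one honest $T$-transition, routed through the intermediate carriers $T^{k}X$, so that the $n{+}1$ nested functors correspond exactly to $n{+}1$ transition steps.

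For the first implication, given $R \in \bis(\alpha)$ I would take $R'$ to be the equivalence on $\underline X$ that relates two elements of the same layer $T^kX$ exactly when they are related by $R^{T^k}$, and relates nothing across distinct layers; this is an equivalence because each $R^{T^k}$ is, and it satisfies $R'|_{X} = R^{Id} = R$. To see $R' \in \bis(\underline\alpha)$ it suffices to check the two families of related pairs. For $x \mathrel R x'$ in layer $0$, $R \in \bis(\alpha)$ gives $\alpha(x) \mathrel{R^{T^{n+1}}} \alpha(x')$; rewriting $R^{T^{n+1}} = (R^{T^n})^T$ via \cref{thm:extension-composition} and noting $R'|_{T^nX} = R^{T^n}$, \cref{thm:extension-and-restriction} turns this into $(T\iota_n)(\alpha(x)) \mathrel{(R')^T} (T\iota_n)(\alpha(x'))$, i.e.\ $\underline\alpha(\iota_0 x) \mathrel{(R')^T} \underline\alpha(\iota_0 x')$. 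For $s \mathrel{R^{T^k}} t$ in layer $k \ge 1$ the same two lemmas, applied with $R^{T^k} = (R^{T^{k-1}})^T$ and the inclusion $\iota_{k-1}$, yield $\underline\alpha(\iota_k s) \mathrel{(R')^T} \underline\alpha(\iota_k t)$. Across distinct layers there are no related pairs, so nothing further is required.

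For the converse, fix $R \in \bis(\underline\alpha)$ and write $R_k \defeq R|_{T^kX}$ for its restriction to the $k$-th layer, so that $R_0 = R|_{\car(\alpha)}$. Feeding a related pair $\iota_k(s) \mathrel R \iota_k(t)$ from layer $k \ge 1$ into the bisimulation condition for $\underline\alpha$ gives $(T\iota_{k-1})(s) \mathrel{R^T} (T\iota_{k-1})(t)$, which by \cref{thm:extension-and-restriction} is precisely $s \mathrel{(R_{k-1})^T} t$; hence $R_k \subseteq (R_{k-1})^T$. Using that the extension operation is monotone in its relation argument together with \cref{thm:extension-composition}, iterating these inclusions down the tower yields $R_n \subseteq (R_0)^{T^n}$, and therefore $(R_n)^T \subseteq (R_0)^{T^{n+1}}$. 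Finally the layer-$0$ instance of the bisimulation condition gives $(T\iota_n)(\alpha(x)) \mathrel{R^T} (T\iota_n)(\alpha(x'))$, i.e.\ $\alpha(x) \mathrel{(R_n)^T} \alpha(x')$ by \cref{thm:extension-and-restriction}; combined with the previous inclusion this is $\alpha(x) \mathrel{(R_0)^{T^{n+1}}} \alpha(x')$, showing $R|_{\car(\alpha)} \in \bis(\alpha)$.

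I would flag two delicate points. First, both directions rely on $T$ preserving injections (as in the remark preceding the statement): this is exactly what validates the nontrivial inclusions of \cref{thm:extension-composition,thm:extension-and-restriction} that let me commute $T$ past the various quotients and layer embeddings, and it is the main place where care is needed. Second, the bookkeeping of layers and coprojections must be kept precise so that each peeled application of $T$ lands in the correct summand of $\underline X$; the one small auxiliary fact I use that is not stated explicitly earlier, namely that $R \subseteq R'$ implies $R^T \subseteq (R')^T$, is immediate from the universal property of the quotient and I would dispatch it in a single line.
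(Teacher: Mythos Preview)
Your construction of $\underline\alpha$ and your choice of $R' = \coprod_{k} R^{T^k}$ for the first implication are exactly the paper's, and your layer-by-layer verification matches the paper's argument. For the second implication you also follow the paper's route, but you make explicit a point the paper compresses into the phrase ``inductively applying \cref{thm:extension-composition,thm:extension-and-restriction}'': you use the bisimulation condition of $\underline\alpha$ at \emph{every} intermediate layer to obtain $R_k \subseteq (R_{k-1})^T$ and then chain these inclusions down to $R_n \subseteq (R_0)^{T^n}$. This is the right reading of the paper's terse step, since for an arbitrary $R \in \bis(\underline\alpha)$ one does not have $R|_{T^nX} = (R|_X)^{T^n}$ in general---only the inclusion your chain produces. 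Your flag that $T$ must preserve injections is also accurate: the statement omits it, but the proof (and its later use in \cref{thm:reduction-flattening}) requires it for the nontrivial directions of \cref{thm:extension-composition,thm:extension-and-restriction}. The monotonicity fact $R \subseteq R' \Rightarrow R^T \subseteq (R')^T$ that you invoke is indeed immediate from the factorisation of quotients.
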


\begin{proof}
	\setupstepsequence{step:bisim-flattening-1}

	For $\alpha\colon X \to T^{n+1}X$ define $\underline{\alpha}\colon \underline{X} \to T(\underline{X})$ as:
	\begin{equation}
		\underline{X} \defeq \coprod_{i=0}^{n} T^iX 
		\qquad
		\underline{\alpha} \defeq [T\iota_n \alpha_0, T\iota_0\alpha_1 \dots, T\iota_{n-2}\alpha_{n-1}]
	\end{equation}
	where $\iota_i\colon T^{i}X \to \underline{X}$ is the $i$-th coproduct injection, $\alpha_0\colon X \to T(T^n X)$ is $\alpha$, and $\alpha_{i+1}\colon T^{i+1}X \to T(T^i X)$ is given by the identity for $T^{i+1}X$. In particular, 
	\begin{equation}
		\underline{\alpha}(\underline{x}) = 
		\begin{cases}
			\alpha(\underline{x}) & \text{if }\underline{x} \in X\\
			\underline{x} & \text{otherwise}
		\end{cases}
	\end{equation}
	for any $\underline{x} \in \underline{X}$.
 	If $R \in \bis(\overline{\alpha})$, then:
	\begin{equation}
		x \mathrel{R|_X} x'
		\implies
		x \mathrel{R} x'
		\steplabel{\implies}
		\underline{\alpha}(x) \mathrel{R}^T \underline{\alpha}(x')
		\steplabel{\iff}
		\alpha(x) \mathrel{R^T|_{T^{n+1}X}} \alpha(x')
		\steplabel{\iff}
		\alpha(x) \mathrel{(R|_X)}^{T^{n+1}} \alpha(x')
	\end{equation}
	where \stepref{1} follows by $R \in \bis(\underline{\alpha})$; \stepref{2} follows by noting that $\underline{\alpha}$ acts as $\alpha$ on $X$ and hence both $\underline{\alpha}(x)=\alpha(x)$ and $\underline{\alpha}(y) = \alpha(y)$ are elements of $T^{n+1}X$; \stepref{3} follows by inductively applying \cref{thm:extension-composition,thm:extension-and-restriction}. Therefore, $R|_X \in \bis(\alpha)$.
	\setupstepsequence{step:bisim-flattening-2}
	Assume $R \in \bis(\alpha)$ and define $\underline{R} = \coprod_{i=0}^n R^{T^i}$. By construction of $\underline{R}$, $\underline{x} \mathrel{\underline{R}} \underline{x'}$ implies that $\underline{x}, \underline{x'} \in T^{i}X$ for some $i \in \{0,\dots,n\}$ meaning that the proof can be carried out by cases on each $R^{T^i}$ composing $\underline{R}$.
	Assume $\underline{x}, \underline{x'} \in T^{0}X = X$, then:
	\begin{equation}
		\underline{x} \mathrel{\underline{R}} \underline{x'}
		\iff
		\underline{x} \mathrel{R} \underline{x'}
		\steplabel{\implies}
		\alpha(\underline{x}) \mathrel{R}^{T^{n+1}} \alpha(\underline{x'})
		\steplabel{\iff}
		\underline{\alpha}(\underline{x}) \mathrel{(R^{T^{n}})^T} \underline{\alpha}(\underline{x'})
		\iff
		\underline{\alpha}(\underline{x}) \mathrel{\underline{R}^T} \underline{\alpha}(\underline{x'})
	\end{equation}
	where \stepref{1} and \stepref{2} follow by $R \in \bis(\alpha)$ and \cref{thm:extension-composition}, respectively.
	\setupstepsequence{step:bisim-flattening-3}%
	Assume $\underline{x}, \underline{x'} \in T^{i+1}X$, we have that:
	\begin{equation}
		\underline{x} \mathrel{\underline{R}} \underline{x'}
		\iff
		\underline{x} \mathrel{R^{T^{i+1}}} \underline{x'}
		\steplabel{\iff}
		\underline{x} \mathrel{(R^{T^{i}})^T} \underline{x'}
		\steplabel{\iff}
		\underline{\alpha}(\underline{x}) \mathrel{(R^{T^{i}})^T} \underline{\alpha}(\underline{x'})
		\iff 
		\underline{\alpha}(\underline{x}) \mathrel{\underline{R}}^T \underline{\alpha}(\underline{x'})
	\end{equation}
	where \stepref{1} and \stepref{2} follow from \cref{thm:extension-composition} and definition of $\underline{\alpha}$ on $T^{i+1}X$. Thus, $\underline{R} \in \bis(\overline{\alpha})$ and $\underline{R}|_X = R$.
\end{proof}
\Cref{thm:bisim-flattening} and its proof provide us with an encoding from systems whose steps are composed by multiple substeps to systems of substeps while preserving and reflecting their semantics in term of bisimulations. The trade-off of the encoding is a bigger state space due to the explicit account of intermediate steps.

\begin{lemma}
	\label{thm:bisim-natural}
	For $\mu\colon T \to S$ injective and $\alpha$ a $T$-system, the following statements are equivalent:
	\begin{itemize}
	 	\item $R$ is a bisimulation for $\alpha \colon X \to TX$;
	 	\item $R$ is a bisimulation for $\mu_X \circ \alpha \colon X \to SX$.
	\end{itemize}
\end{lemma}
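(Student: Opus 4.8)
The plan is to reduce the statement directly to \cref{thm:injective-nat-extension}, which already carries all the analytical weight; the present lemma is essentially its pointwise promotion from behaviours to systems. First I would observe that $\alpha$ and $\mu_X \circ \alpha$ are systems on \emph{the same} carrier $X$, so the two candidate statements concern one and the same equivalence relation $R$ on $X$. Hence the requirement that $R$ be an equivalence relation is a shared side-condition, and it suffices to prove that the two transfer conditions coincide. Unfolding \cref{def:bisim-via-extensions}, these conditions read
\[
	\bigl(x \mathrel{R} x' \implies \alpha(x) \mathrel{R^T} \alpha(x')\bigr)
	\qquad\text{and}\qquad
	\bigl(x \mathrel{R} x' \implies (\mu_X \circ \alpha)(x) \mathrel{R^S} (\mu_X \circ \alpha)(x')\bigr)
\]
respectively, where $R^T$ and $R^S$ are the $T$- and $S$-extensions of $R$.

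The key step is then to apply \cref{thm:injective-nat-extension} with the instantiation $\phi \defeq \alpha(x)$ and $\psi \defeq \alpha(x')$. Since $\mu$ is an injective natural transformation and $R$ is an equivalence relation on $X$, that lemma yields the equivalence $\alpha(x) \mathrel{R^T} \alpha(x') \iff \mu_X(\alpha(x)) \mathrel{R^S} \mu_X(\alpha(x'))$, and I would rewrite the right-hand side as $(\mu_X \circ \alpha)(x) \mathrel{R^S} (\mu_X \circ \alpha)(x')$ using associativity of composition. This holds for every pair $x, x'$ with $x \mathrel{R} x'$, so the two implications displayed above are logically equivalent. Consequently $R$ satisfies the $\alpha$-transfer condition if and only if it satisfies the $(\mu_X \circ \alpha)$-transfer condition, which is exactly the claimed equivalence of the two bisimulation statements.

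I do not expect a genuine obstacle here: the only thing to watch is that the equivalence in \cref{thm:injective-nat-extension} is stated for arbitrary $\phi,\psi \in TX$, so it applies uniformly to the values $\alpha(x),\alpha(x')$ regardless of which states $x,x'$ are chosen, and in particular the same fixed $R$ is used throughout. The mild care needed is purely bookkeeping — confirming that both systems share the carrier $X$ so that "$R$ is an equivalence relation on the carrier" means the same thing on both sides, and that no separate argument is required for the equivalence-relation clause. With that noted, the proof is a one-line consequence of \cref{thm:injective-nat-extension} and \cref{def:bisim-via-extensions}.
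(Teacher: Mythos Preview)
Your proposal is correct and follows essentially the same approach as the paper: both unfold \cref{def:bisim-via-extensions} for the two systems (noting they share the carrier $X$) and then invoke \cref{thm:injective-nat-extension} to identify the two transfer conditions. Your write-up is a bit more explicit about the bookkeeping (shared carrier, shared equivalence-relation clause), but the argument is the same one-line reduction.
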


\begin{proof}
	Recall that, $R \in \bis(\alpha)$ if, and only if, $x \mathrel{R} y \implies \alpha(x) \mathrel{R}^T \alpha(y)$ and that $R \in \bis(\mu_X \circ \alpha)$ iff:
	\begin{equation}
		x \mathrel{R} y \implies (\mu_X\circ\alpha)(x) \mathrel{R}^S (\mu_X\circ\alpha)(y)
		\text{.}
	\end{equation}
	Therefore to prove the statement it suffices to show that 
	\begin{equation}
		\phi \mathrel{R}^T \psi \iff \mu_X(\phi) \mathrel{R}^S \mu_X(\psi)
		\text{.}
	\end{equation} 
	We conclude by \cref{thm:injective-nat-extension}.
\end{proof}

By applying the \cref{thm:bisim-natural} to \cref{ex:cast-wlts-ultras} we conclude that that bisimulations for ULTraSs coincide with bisimulations for WLTSs when these are seen as functional ULTraS as shown in \cite{mp:qapl14,mp:tcs2016}.

\section{Reductions}
\label{sec:reductions}
In this section we formalize the intuition that a behaviour ``type'' is (at least) as expressive as another whenever systems and homomorphisms of the latter can be ``encoded'' as systems and homomorphisms of the former, provided that their semantically relevant structures are preserved and reflected.

\begin{definition}
	\label{def:system-reduction}
	For $\alpha$ and $\beta$ (of possibly different types), a \emph{system reduction} $\sigma\colon \alpha \to \beta$ is given by
	 \begin{enumerate}
	 	\item
	 		a function $\sigma^{c}\colon \car(\alpha) \to \car(\beta)$
			and
		\item
			a correspondence\footnote{A left-total and right-total binary relation.} $\sigma^{b}\subseteq \bis(\alpha) \times \bis(\beta)$ 
	\end{enumerate}
	such that $\sigma^{c}$ carries a relation homomorphism for any pair of bisimulations in $\sigma^{b}$, \ie:
	\begin{equation}
%		\tag{$\lozenge$}
		\label{eq:reduction-bisim-condition}
		R \mathrel{\sigma^{b}} R' \implies
		(x \mathrel{R} x' \iff \sigma^{c}(x) \mathrel{R'} \sigma^{c}(x'))\text{.}
	\end{equation}
	A system reduction $\sigma\colon \alpha \to \beta$ is called \emph{full} if $\sigma^{c}\colon \car(\alpha) \to \car(\beta)$ is surjective.
\end{definition}

The notion of fullness is of relevance since it identifies reductions that use the target state space in its entirety and hence that do not introduce any auxiliary state. 
A consequence of condition \eqref{eq:reduction-bisim-condition} from \cref{def:system-reduction} is that correspondences forming system reductions are always left-unique: this is indeed stronger than requiring preservation of bisimilarity since it entails that any bisimulation for $\alpha$ can be recovered by restricting some bisimulation for $\beta$ to the image of $\car(\alpha)$ in $\car(\beta)$ through the map $\sigma^{c}$. 

\begin{remark}
	\label{rm:relaxed-versions}
	Condition \eqref{eq:reduction-bisim-condition} can be relaxed in two ways:
	\begin{enumerate}
		\item $R \mathrel{\sigma^{b}} R' \implies (x \mathrel{R} x' \implies \sigma^{c}(x) \mathrel{R'} \sigma^{c}(x'))$,
		\item $R \mathrel{\sigma^{b}} R' \implies (x \mathrel{R} x' \impliedby \sigma^{c}(x) \mathrel{R'} \sigma^{c}(x'))$.
	\end{enumerate}
	The first condition requires every bisimulation for $\alpha$ to be contained in some bisimulation for $\beta$ whereas second requires every bisimulation for $\alpha$ to contain some bisimulation for $\beta$. Hence the two can be thought as \emph{completeness} and \emph{soundness} conditions for the reduction $\sigma$, respectively.
\qed
\end{remark}

System reductions can be extended to whole categories of systems by equipping functors with them and ensuring they respect the structure of homomorphisms. Formally:
\begin{definition}
	\label{def:reduction}
	For $\cat{C}$ and $\cat{D}$ categories of system, a \emph{reduction} $\sigma$ from $\cat{C}$ to $\cat{D}$, written $\sigma\colon \cat{C} \to \cat{D}$, is functor equipped with a collection of system reductions
	\[
		\{\sigma_{\alpha}\colon \alpha \to \sigma(\alpha)\}_{\alpha \in \cat{C}}
	\]
	coherent with homomorphisms in the sense that:
	\begin{equation}
		\label{eq:reduction-c-homomorphism}
		\sigma_{\beta}^{c} \circ f = \sigma(f) \circ \sigma_{\alpha}^{c}
	\end{equation}
	for any $f\colon \alpha \to \beta$ in $\cat{C}$.
	A reduction $\sigma\colon \cat{C} \to \cat{D}$ is called \emph{full} if, and only if, every system reduction $\sigma_\alpha$ is full. A category $\cat{C}$ is said to \emph{reduce} (resp.~fully reduce) to $\cat{D}$, if there is a reduction (resp.~a full reduction) going from $\cat{C}$ to $\cat{D}$.
\end{definition}

\noindent Although \cref{def:reduction} and \cite[Def.~7]{mp:ictcs2016} are slightly different in their presentation, the two are equivalent.

\begin{notation}
	For categories $\cat{C}$ and $\cat{D}$ we write $\cat{C} \reduceTo \cat{D}$ if  $\cat{C}$ reduces to $\cat{D}$, $\cat{C} \reduceEq \cat{D}$ if $\cat{C} \reduceTo \cat{D}$ and $\cat{C} \reduceFrom \cat{D}$, $\cat{C} \freduceTo \cat{D}$ and $\cat{C} \freduceEq \cat{D}$ if the reductions involved are full.
\end{notation}

Reductions can be easily composed at the level of their defining assignments.
In particular, for reductions $\sigma\colon \cat{C} \to \cat{D}$ and $\tau\colon \cat{D} \to \cat{E}$, their composite reduction $\tau \circ \sigma\colon \cat{C} \to \cat{E}$ is a mapping that assigns to each system $\alpha$ the system $(\tau \circ \sigma)(\alpha)$ and the reduction given by $(\tau \circ \sigma)_\alpha^{c} \defeq \tau_{\sigma(\alpha)}^{c} \circ \sigma_\alpha^{c}$ and $(\tau \circ \sigma)_\alpha^{b} \defeq \tau_{\sigma(\alpha)}^{b} \circ \sigma_\alpha^{b}$; and to each $f\colon \alpha \to \alpha'$ the homomorphism $(\tau \circ \sigma)(f)$.
Reduction composition is associative and admits identities which are given on every $\cat{C}$ as the identity assignments for systems and homomorphisms. Any reduction restricts to a reduction from a subcategory of its domain and extends to a reduction to a super-category of its codomain. Moreover, fullness is preserved by the above operations. Every inclusion functor identifies a full reduction.

For products, reductions can be given component-wise by suitable families of reductions that are ``well-behaved'' on homomorphisms. Formally:

\begin{definition}
	A family of reductions $\{\sigma_i\colon \cat{C}_i \to \cat{D}_i\}_{i \in I}$ is called \emph{coherent} if, and only if, the following conditions hold for any $i,j \in I$:
	\begin{enumerate}
	\item
		if a function $f$ extends to $f_i \in \cat{C}_i$ then there is $f_j \in \cat{C}_j$ such that $f$ extends to $f_j$;
	\item
		$\sigma_i(f_i)$ and $\sigma_j(f_j)$ share their underlying function whenever $f_i$ and $f_j$ do.
	\end{enumerate}
\end{definition}

\begin{theorem}
	\label{thm:reduction-product}
	A coherent family of (full) reductions $\{\sigma_i\colon \Sys{T_i} \to \Sys{S_i} \}_{i \in I}$ defines a (full) reduction  $\sigma\colon \Sys{\prod_{i \in I} T_i} \to \Sys{\prod_{i \in I} S_i}$.
\end{theorem}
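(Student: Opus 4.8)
The plan is to build the product reduction componentwise from the $\sigma_i$, using \cref{thm:bisim-product} to decompose bisimulations of a product system into bisimulations of its projections. First I would fix a $\prod_{i\in I}T_i$-system $\alpha\colon X\to\prod_{i\in I}T_iX$ and set $\gamma_i\defeq\pi_i\circ\alpha\colon X\to T_iX$, a $T_i$-system. Each reduction yields an $S_i$-system $\sigma_i(\gamma_i)$ together with a system reduction $\sigma_{i,\gamma_i}\colon\gamma_i\to\sigma_i(\gamma_i)$. The first thing to extract from coherence is that these live on a common carrier: applying the second clause of coherence to the identities $\mathrm{id}_{\gamma_i}$ and $\mathrm{id}_{\gamma_j}$, which share the underlying function $\mathrm{id}_X$, and using functoriality ($\sigma_i(\mathrm{id}_{\gamma_i})=\mathrm{id}_{\sigma_i(\gamma_i)}$), forces $\car(\sigma_i(\gamma_i))=\car(\sigma_j(\gamma_j))$; write $Y$ for this common carrier. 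The same clause applied to a homomorphism $f\colon\alpha\to\alpha'$ of $\prod_iT_i$-systems—which is simultaneously a $T_i$-homomorphism $\gamma_i\to\gamma_i'$ for every $i$, since the homomorphism equation decomposes along the product—shows that the functions underlying $\sigma_i(f)$ coincide for all $i$; I set $\sigma(f)$ to be this common function. I would then define the object part of the functor by declaring $\pi_i\circ\sigma(\alpha)$ to be the transition map of $\sigma_i(\gamma_i)$ (well-defined on the common carrier $Y$), and check that $\sigma(f)$ is a $\prod_iS_i$-homomorphism and that $\sigma$ is functorial—both reduce, projection by projection, to the corresponding facts for the $\sigma_i$.

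Next I would assemble the system reduction $\sigma_\alpha\colon\alpha\to\sigma(\alpha)$. For its carrier component I would take $\sigma_\alpha^c\defeq\sigma_{i,\gamma_i}^c$, after which \eqref{eq:reduction-c-homomorphism} for $\sigma$ follows projectionwise from \eqref{eq:reduction-c-homomorphism} for each $\sigma_i$. For the bisimulation component I would simply take $\sigma_\alpha^b$ to be the set of all pairs $(R,R')\in\bis(\alpha)\times\bis(\sigma(\alpha))$ satisfying \eqref{eq:reduction-bisim-condition}, i.e.\ $x\mathrel{R}x'\iff\sigma_\alpha^c(x)\mathrel{R'}\sigma_\alpha^c(x')$; with this definition both \eqref{eq:reduction-bisim-condition} and left-uniqueness hold automatically, so the only remaining work is to show $\sigma_\alpha^b$ is a correspondence. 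Right-totality is the easy direction: by \cref{thm:bisim-product}, $\bis(\sigma(\alpha))=\bigcap_i\bis(\sigma_i(\gamma_i))$, so any $R'\in\bis(\sigma(\alpha))$ is a bisimulation of every $\sigma_i(\gamma_i)$; right-totality of each $\sigma_{i,\gamma_i}^b$ yields $R_i\in\bis(\gamma_i)$ with $R_i\mathrel{\sigma_{i,\gamma_i}^b}R'$, and since all of these satisfy $x\mathrel{R_i}x'\iff\sigma_\alpha^c(x)\mathrel{R'}\sigma_\alpha^c(x')$ they coincide with the single preimage relation $R\defeq\{(x,x')\mid\sigma_\alpha^c(x)\mathrel{R'}\sigma_\alpha^c(x')\}$, whence $R\in\bigcap_i\bis(\gamma_i)=\bis(\alpha)$ and $(R,R')\in\sigma_\alpha^b$.

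Left-totality is the delicate direction, since in the non-full case the component reductions may produce different bisimulations of $\sigma(\alpha)$ off the image of $\sigma_\alpha^c$, so one cannot simply glue the $R'$ obtained from the $\sigma_{i,\gamma_i}^b$. Here I would argue through kernels instead. Given $R\in\bis(\alpha)$, \cref{thm:bisim-kernel} provides a homomorphism $f\colon\alpha\to\beta$ of $\prod_iT_i$-systems with $\ker f=R$; applying $\sigma$ gives a homomorphism $\sigma(f)\colon\sigma(\alpha)\to\sigma(\beta)$, and I would take $R'\defeq\ker\sigma(f)$, a kernel bisimulation and hence, by \cref{thm:bisim-kernel,thm:inj-preservation}, an element of $\bis(\sigma(\alpha))$. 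To see $(R,R')\in\sigma_\alpha^b$, note that \eqref{eq:reduction-bisim-condition} for $\sigma_\beta$ applied to the diagonal (equality) bisimulation of $\beta$ forces every $\sigma_\beta^c$ to be injective; combining this with the coherence square $\sigma(f)\circ\sigma_\alpha^c=\sigma_\beta^c\circ f$ gives $\sigma_\alpha^c(x)\mathrel{R'}\sigma_\alpha^c(x')\iff\sigma_\beta^c(f(x))=\sigma_\beta^c(f(x'))\iff f(x)=f(x')\iff x\mathrel{R}x'$, which is exactly \eqref{eq:reduction-bisim-condition}. Fullness then transfers immediately: if every $\sigma_{i,\gamma_i}^c$ is surjective, so is $\sigma_\alpha^c$, and a full family yields a full reduction.

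The step I expect to be the main obstacle is the reconciliation of the component carrier data: showing that not only the carriers $\car(\sigma_i(\gamma_i))$ but, above all, the carrier maps $\sigma_{i,\gamma_i}^c$ coincide across $i$, so that $\sigma_\alpha^c\colon X\to Y$ is unambiguous and every component of $\sigma_\alpha^b$ is expressed against the \emph{same} $\sigma_\alpha^c$. The equality of carriers follows cleanly from the second coherence clause via identities, as above; pinning down the carrier maps is exactly what the coherence hypotheses are there to provide, and it is the point requiring the most care, since without it the single reduction condition \eqref{eq:reduction-bisim-condition} could not hold simultaneously for all projections.
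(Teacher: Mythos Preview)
Your construction departs from the paper's in how $\sigma_\alpha^b$ is built: the paper sets $\sigma_\alpha^b \defeq \bigcap_{i\in I}\sigma_{i,\alpha_i}^b$ and simply asserts that this is a correspondence ``by coherency and \cref{thm:bisim-product}'', without spelling out left-totality at all; you instead take $\sigma_\alpha^b$ to be the maximal relation satisfying \eqref{eq:reduction-bisim-condition} and give an explicit left-totality argument via kernels. This is a genuinely different route, and more detailed than the paper on exactly the point you identify as delicate. Its advantage is that it bypasses the need to reconcile the possibly-different $R'_i$ produced by the component correspondences off the image of $\sigma_\alpha^c$. Its cost is an extra, unstated hypothesis: to conclude $\ker\sigma(f)\in\bis(\sigma(\alpha))$ from \cref{thm:bisim-kernel} you need $\prod_i S_i$ to preserve injections, and \cref{thm:inj-preservation} only guarantees this for the specific grammar of functors listed there, not for arbitrary $S_i$. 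In all of the paper's applications this holds, but the theorem as stated does not assume it.

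On the carrier-map issue you flag at the end: your diagnosis is accurate. Applying the second coherence clause to identities shows the carriers $\car(\sigma_i(\gamma_i))$ coincide, but coherence as defined speaks only about the images $\sigma_i(f_i)$ of homomorphisms, not about the component maps $\sigma_{i,\gamma_i}^c$ of the system reductions, so their coincidence does not follow from the definition on the page. The paper's proof handles this by asserting $\sigma_\alpha^c=\sigma_{i,\alpha_i}^c$ for all $i$ ``by coherency'' without further justification, so this is a gap shared by both arguments rather than one specific to yours; your right-totality argument needs it (the $R_i$ are characterised via $\sigma_{i,\gamma_i}^c$, not the single $\sigma_\alpha^c$ you fixed), exactly as the paper's intersection does.
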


\begin{proof}
	Assume $\{\sigma_i\}_{i \in I}$ as above. For $\alpha\in \Sys{\prod T_i}$ let $\alpha_i = \pi_i \circ \alpha$ and define
	\begin{equation}
		\sigma(\alpha) \defeq \langle \dots, \sigma_i(\alpha_i), \dots \rangle \qquad\textstyle
		\sigma_{\alpha}^{c} \defeq \sigma_{i,\alpha_i}^{c} \qquad
		\sigma_{\alpha}^{b} \defeq \bigcap_{i \in I}\sigma_{i,\alpha_i}^{b}
	\end{equation}
	The assignment extends to all systems in $\Sys{\prod_{i \in I} T_i}$ and is well-defined by coherency and \cref{thm:bisim-product} since:
	\begin{equation}
		\forall i \in I(\sigma_{\alpha}^{c} = \sigma_{i,\alpha_i}^{c})
		\quad\text{ and }\quad
		R \mathrel{\sigma_{\alpha}^{b}} R' \iff \forall i \in I (R \mathrel{\sigma_{i,\alpha_i}^{b}} R')
		\text{.}
	\end{equation}
	For any $i \in I$, $f\colon \alpha \to \beta$ defines an homomorphism $f_i\colon \alpha_i \to \beta_i$ in $\Sys{T_i}$ sharing its underlying function. Define $\sigma(f)$ as the homomorphism arising from the function underlying $\sigma_i(f_i)$. By coherency, the mapping is well-defined and satisfies all the necessary conditions since all $\sigma_i$ are reductions.
\end{proof}

Correspondences for bisimulations presented in \cref{thm:bisim-natural,thm:bisim-flattening} extend to reductions: injective transformations define full reductions and homogeneous systems reduce to systems for the base endofunctor, as formalised below.

\begin{theorem}
	\label{thm:reduction-natural}
	For $\mu\colon T \to S$ an injective transformation, there is a full reduction from $\Sys{T}$ to $\Sys{S}$.
\end{theorem}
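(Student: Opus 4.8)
The plan is to use the obvious post-composition construction. The functor underlying the reduction sends a $T$-system $\alpha\colon X \to TX$ to the $S$-system $\mu_X \circ \alpha\colon X \to SX$, leaving carriers untouched, and sends each $T$-homomorphism to the $S$-homomorphism carried by the same function. First I would check this is well defined on morphisms: if $f\colon \alpha \to \beta$ is a $T$-homomorphism, so that $Tf \circ \alpha = \beta \circ f$, then naturality of $\mu$ gives $Sf \circ \mu_X = \mu_Y \circ Tf$, whence $Sf \circ (\mu_X \circ \alpha) = \mu_Y \circ Tf \circ \alpha = \mu_Y \circ \beta \circ f = (\mu_Y \circ \beta) \circ f$, i.e.\ $f$ is an $S$-homomorphism from $\mu_X \circ \alpha$ to $\mu_Y \circ \beta$. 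Functoriality is then immediate, since the assignment on morphisms is the identity on underlying functions and thus preserves identities and composites.

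Next I would equip this functor with system reductions in the sense of \cref{def:reduction}. Because the carrier of $\sigma(\alpha) = \mu_X \circ \alpha$ is literally $X = \car(\alpha)$, I take $\sigma_\alpha^{c} \defeq \mathrm{id}_X$. The crucial observation is that \cref{thm:bisim-natural} gives $\bis(\alpha) = \bis(\mu_X \circ \alpha) = \bis(\sigma(\alpha))$ as sets, since $\mu$ is injective. I would therefore define $\sigma_\alpha^{b}$ to be the diagonal relation $\{(R,R) \mid R \in \bis(\alpha)\}$, which is a correspondence precisely because the two bisimulation sets coincide (hence it is both left- and right-total).

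Finally I would verify the defining conditions, which all collapse to tautologies under this construction. Condition \eqref{eq:reduction-bisim-condition} holds because $R \mathrel{\sigma_\alpha^{b}} R'$ forces $R = R'$ while $\sigma_\alpha^{c}$ is the identity, so the required biconditional reduces to $x \mathrel{R} x' \iff x \mathrel{R} x'$. Coherence with homomorphisms \eqref{eq:reduction-c-homomorphism} reduces to $f = f$, since $\sigma_\alpha^{c}$ and $\sigma_\beta^{c}$ are identities and $\sigma(f)$ is carried by $f$. Fullness is immediate because each $\sigma_\alpha^{c} = \mathrm{id}_X$ is surjective. The only genuine content of the argument is thus the interplay of naturality of $\mu$ (for preservation of homomorphisms) and \cref{thm:bisim-natural} (for the identification of the bisimulation sets); keeping carriers fixed means there is no real obstacle, as every remaining check degenerates to an identity. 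The one step worth stating carefully is the appeal to \cref{thm:bisim-natural}, on which the well-definedness of $\sigma_\alpha^{b}$ entirely rests.
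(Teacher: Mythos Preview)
Your proposal is correct and matches the paper's proof essentially verbatim: the paper defines $\hat{\mu}(\alpha) \defeq \mu_{\car(\alpha)} \circ \alpha$, $\hat{\mu}_{\alpha}^{c} \defeq id_{\car(\alpha)}$, $\hat{\mu}_{\alpha}^{b} \defeq id_{\bis(\alpha)}$, $\hat{\mu}(f) \defeq f$, and concludes by \cref{thm:bisim-natural} and definition unfolding. You have simply spelled out the verifications (naturality for homomorphism preservation, the tautological checks for \eqref{eq:reduction-bisim-condition} and \eqref{eq:reduction-c-homomorphism}) that the paper leaves implicit.
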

\begin{proof}
	For each each $\alpha$ and each $f\colon \alpha \to \beta$ define
	\begin{equation}
		\hat{\mu}(\alpha) \defeq \mu_{\car(\alpha)} \circ \alpha
		\qquad
		\hat{\mu}_{\alpha}^{c} \defeq id_{\car(\alpha)}
		\qquad
		\hat{\mu}_{\alpha}^{b} \defeq id_{\bis(\alpha)}
		\qquad
		\hat{\mu}(f) \defeq f
		\text{.}
	\end{equation}
	By \cref{thm:bisim-natural} and definition unfolding, the above defines a full reduction $\hat{\mu}\colon \Sys{T} \to \Sys{S}$.
\end{proof}

This theorem allows us to formalise the hierarchy shown in \cref{sec:intro}. For instance, the transformation described in \cref{ex:cast-wlts-ultras} defines a full reduction from WLTSs to ULTraSs. Probabilistic systems are covered by the transformation induced by the inclusion $\mathcal{D}X \subseteq \ff{[0,\infty)}X$ whereas the remaining cases are trivial.

\begin{theorem}
	\label{thm:reduction-flattening}
	If $T$ preserves injections then $\Sys{T^{n+1}}$ reduces to $\Sys{T}$.
\end{theorem}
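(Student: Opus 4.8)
The plan is to lift the pointwise construction of \cref{thm:bisim-flattening} to the level of whole categories, packaging it as a functor $\sigma\colon \Sys{T^{n+1}} \to \Sys{T}$ equipped with a coherent family of system reductions in the sense of \cref{def:reduction}. On objects I set $\sigma(\alpha) \defeq \underline{\alpha}$, the $T$-system built from $\alpha\colon X \to T^{n+1}X$ exactly as in the proof of \cref{thm:bisim-flattening}, with carrier $\underline{X} = \coprod_{i=0}^{n} T^iX$. The state component of each system reduction is the first coproduct injection, $\sigma_\alpha^{c} \defeq \iota_0\colon X \to \underline{X}$, which records $\car(\alpha)$ as the $0$-th summand of $\car(\underline{\alpha})$ and accounts for the enlargement of the state space; in particular $\sigma$ will be a reduction but not a full one.

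For the bisimulation component I take
\begin{equation}
	\sigma_\alpha^{b} \defeq \{ (R,R') \in \bis(\alpha)\times\bis(\underline{\alpha}) \mid R = R'|_{\car(\alpha)} \}\text{.}
\end{equation}
This relation is a correspondence precisely by the two implications of \cref{thm:bisim-flattening}: left-totality is the first bullet (every $R \in \bis(\alpha)$ equals the restriction of some $R' \in \bis(\underline{\alpha})$) and right-totality is the second (every $R' \in \bis(\underline{\alpha})$ yields $R'|_{\car(\alpha)} \in \bis(\alpha)$). Condition \eqref{eq:reduction-bisim-condition} is then immediate: since $\iota_0$ identifies $X$ with its image in $\underline{X}$ and $R = R'|_{\car(\alpha)}$ is by definition $\{(x,x') \mid \iota_0(x) \mathrel{R'} \iota_0(x')\}$, we obtain $x \mathrel{R} x' \iff \sigma_\alpha^{c}(x) \mathrel{R'} \sigma_\alpha^{c}(x')$.

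It remains to define $\sigma$ on homomorphisms and check coherence. For $f\colon \alpha \to \beta$ I set $\sigma(f) \defeq \coprod_{i=0}^{n} T^i f$, acting as $T^i f$ on the $i$-th summand; functoriality is inherited from that of each $T^i$ and of the coproduct. That $\sigma(f)$ is a $T$-homomorphism $\underline{\alpha} \to \underline{\beta}$ is checked componentwise: on the $0$-th summand the defining square reduces, after cancelling the outer injection, to $T^{n+1}f \circ \alpha = \beta \circ f$, which holds because $f$ is a $T^{n+1}$-homomorphism; on each intermediate summand it reduces to naturality of the injections, $\sigma(f)\circ\iota_i = \iota_i \circ T^i f$, together with the fact that $\underline{\alpha}$ and $\underline{\beta}$ act there as pure embeddings. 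Finally, the coherence condition \eqref{eq:reduction-c-homomorphism} unfolds to $\sigma(f)\circ\iota_0 = \iota_0 \circ f$, which is exactly the $0$-th instance of that naturality, so $\sigma_\beta^{c}\circ f = \sigma(f)\circ\sigma_\alpha^{c}$.

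The genuine mathematical content is already isolated in \cref{thm:bisim-flattening}, and that is where the hypothesis that $T$ preserves injections is consumed (through \cref{thm:extension-composition,thm:extension-and-restriction}); everything at the functor level is bookkeeping. Consequently, the only step I expect to require real care is verifying the homomorphism condition for $\sigma(f)$ across the two kinds of summand, since the indexing of the embeddings defining $\underline{\alpha}$ and $\underline{\beta}$ must be matched exactly on both sides of the square; once that is done, associativity and identity laws for $\sigma$ and the restriction/extension bookkeeping for $\sigma_\alpha^{b}$ follow routinely.
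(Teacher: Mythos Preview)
Your proposal is correct and follows essentially the same approach as the paper: both define $\sigma(\alpha)=\underline{\alpha}$, take $\sigma_\alpha^{c}=\iota_0$, set $\sigma_\alpha^{b}$ to be the pairs $(R,R')$ with $R=R'|_{\car(\alpha)}$, and put $\sigma(f)=\coprod_{i=0}^{n}T^if$, invoking \cref{thm:bisim-flattening} for the correspondence property. You in fact give more detail than the paper, which leaves the verification that $\sigma(f)$ is a $T$-homomorphism and the coherence condition \eqref{eq:reduction-c-homomorphism} implicit under ``by construction''.
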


\begin{proof}
	Recall from \cref{thm:bisim-flattening} that any $T^{n+1}$-system $\alpha\colon X \to T^{n+1}X$ reduces to a $T$-system $\underline{\alpha}\colon \underline{X} \to T\underline{X}$ and let $\iota_0\colon X \to \underline{X}$ denote the injection into the coproduct $\underline{X}$ forming the reduction given in the proof of \cref{thm:bisim-flattening}.
	Define $\sigma\colon \Sys{T^{n+1}} \to \Sys{T}$ as the reduction given on each system $\alpha$ in $\Sys{T^{n+1}}$ as:
	\begin{equation}
		\sigma(\alpha) \defeq \underline{\alpha}
		\quad
		\sigma_{\alpha}^{c} \defeq \iota_0
		\quad
		\sigma_{\alpha}^{b} \defeq \{(R,\underline{R}) \mid R = \underline{R}|_X,\ R \in \bis(\alpha),\ \underline{R} \in \bis(\underline{\alpha})\}
	\end{equation}
	and on each homomorphism $f\colon \alpha \to \beta$ in $\Sys{T^{n+1}}$ as $\sigma(f) \defeq \coprod_{i=0}^{n} T^{i}f$.
	By \cref{thm:bisim-flattening}, $\sigma_{\alpha}^{b}$ is a correspondence and by construction $\sigma$ respects homomorphism composition and identities. Thus, $\sigma$ is a reduction from $\Sys{T^{n+1}}$ to $\Sys{T}$.
\end{proof}

\Cref{thm:reduction-flattening} shows that the encoding proposed in the proof of \cref{thm:bisim-flattening} yields a reduction that intuitively renders explicit intermediate stages of the computation by means of auxiliary states.

\section{Application: reducing FuTSs to WLTSs}
\label{sec:futs-reductions}
In this section we apply the theory presented in the previous sections to prove that (categories of) FuTSs reduce to (categories of) simple FuTSs, \ie WLTSs. In \cite{mp:ictcs2016} we derived a suitable reduction in stages reflecting the endofunctors structure and hence some secondary results regarding the subclasses of nested and combined FuTSs. In this work we propose an alternative reduction which is more suited for the constructions we introduce in \cref{sec:futs-logic-fully-abstract}.

\paragraph{Unlabelled FuTSs}
We call a FuTS \emph{unlabelled} whenever all sets of labels in the sequence $\vv{A}$ defining its type are singletons. In the sequel we adopt the convention of prefixing subcategories of unlabelled FuTSs with $\cat{U}$ and \eg write $\cat{U-FuTS}$ for the category of all unlabelled FuTSs. We claim  that the category of FuTSs fully reduces to its subcategory of unlabelled FuTSs (see \cref{thm:futs-to-u-futs} below). To this end, we present a reduction that ``encodes'' the information of labels in the weighting structure.

For an abelian monoid $(M,+,0)$ and a non-empty set $A$, the function space $M^A$ carries an abelian monoid structure given by an $A$-indexed product of monoids. In particular, its sum and zero are defined as follows:
\[
	\sum_{a \in A} m_a \cdot a + \sum_{a \in A} m'_a \cdot a = 
	\sum_{a \in A} (m_a + m'_a) \cdot a
	\qquad
	\sum_{a \in A} 0 \cdot a
	\text{.}
\]
There is a natural isomorphism:
\begin{equation}
	(\ff{M}-)^A \cong \ff{M^A}
\end{equation}
whose components are given on each set $X$ by the assignments:
\begin{equation}
	\phi \mapsto \sum_{x \in X} \lambda a \in A.\phi(a)(x) \cdot x
\qquad
	\psi \mapsto \lambda a \in A. \sum_{x \in X} \psi(x)(a) \cdot x
\end{equation}
going from $(\ff{M} X)^A$ to $\ff{M^A}X$ and back. It follows from the universal property of products that these indeed exhibit an isomorphism and are natural in the set $X$. 

\begin{lemma}
	\label{thm:futs-to-u-futs}
	The category of FuTSs fully reduces to that of unlabelled FuTSs, hence:
	\[
		\cat{FuTS} \freduceEq \cat{U-FuTS}
		\text{.}
	\]
\end{lemma}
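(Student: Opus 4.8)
The plan is to exhibit a full reduction $\cat{FuTS} \freduceTo \cat{U-FuTS}$ by building a functor that replaces each labelled monoid component $\ff{M}$ acting over a label set $A$ with the unlabelled component $\ff{M^A}$, using the natural isomorphism $(\ff{M}-)^A \cong \ff{M^A}$ established just above the statement. The reverse reduction $\cat{U-FuTS} \freduceTo \cat{FuTS}$ is immediate, since unlabelled FuTSs are (isomorphic to) a subcategory of all FuTSs and every inclusion functor induces a full reduction (as noted after \cref{def:reduction}); so the content is entirely in the forward direction, and $\freduceEq$ follows by combining the two.

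First I would reduce to the single-component case. Recall that a general FuTS endofunctor has the form $(\ff{\vv{M}}-)^{\vv{A}} = \prod_{i=0}^{n}(\ff{\vv{M}_i}-)^{A_i}$, a product indexed over the components. By \cref{thm:reduction-product}, a coherent family of full reductions on the individual factors $\Sys{(\ff{\vv{M}_i}-)^{A_i}}$ assembles into a full reduction on the product; coherency is automatic here because each factor-reduction will act as the identity on carriers (so any function extends simultaneously to all factors, and the underlying functions of the reduced homomorphisms agree). Thus it suffices to handle one factor $(\ff{\vv{M}_i}-)^{A_i} = (\ff{M_{i,0}}\cdots\ff{M_{i,m_i}}-)^{A_i}$, and I must turn its single nonsingleton label set $A_i$ into a singleton while keeping the nested monoid chain. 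The natural isomorphism pushes the outermost $(-)^{A_i}$ inside, rewriting $(\ff{M_{i,0}}\cdots\ff{M_{i,m_i}}-)^{A_i}$ as $\ff{(M_{i,0})^{A_i}}\ff{M_{i,1}}\cdots\ff{M_{i,m_i}}-$, which is an unlabelled (nested) FuTS factor over the modified monoid sequence $\langle (M_{i,0})^{A_i},M_{i,1},\dots,M_{i,m_i}\rangle$.

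The heart of the argument is that an isomorphism of endofunctors yields a full reduction that is the identity on carriers. Here I invoke \cref{thm:reduction-natural}: an isomorphism $\mu\colon T \to S$ is in particular an injective transformation, so it defines a full reduction $\hat{\mu}\colon \Sys{T}\to\Sys{S}$ with $\hat{\mu}_\alpha^c = id_{\car(\alpha)}$ and $\hat{\mu}_\alpha^b = id_{\bis(\alpha)}$. Since $\car(\hat{\mu}(\alpha)) = \car(\alpha)$ and the identity is surjective, this reduction is full, and because $\hat{\mu}^b$ is the identity relation the bisimulations transport verbatim. Applying this factorwise with $\mu$ the component isomorphism $(\ff{M_{i,0}}-)^{A_i}\cong\ff{(M_{i,0})^{A_i}}-$ (precomposed with the inner chain $\ff{M_{i,1}}\cdots\ff{M_{i,m_i}}$, which preserves naturality and, by \cref{thm:inj-preservation}, injectivity) gives the desired full reduction on each factor.

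The step I expect to be the main obstacle is verifying the coherence condition \eqref{eq:reduction-c-homomorphism} and checking that naturality genuinely survives whiskering the outer isomorphism by the inner monoid chain: I must confirm that $\ff{M_{i,0}}\cdots\ff{M_{i,m_i}}-$ only the \emph{outermost} label exponent is being relocated, that the remaining nested structure is carried along unchanged, and that the resulting transformation is still natural in $X$ (this follows from naturality of the base isomorphism together with functoriality of $\ff{M_{i,1}}\cdots\ff{M_{i,m_i}}$). Once this is in place, composing the factorwise reductions via \cref{thm:reduction-product}, together with the trivial inclusion reduction for the converse, yields $\cat{FuTS} \freduceEq \cat{U-FuTS}$.
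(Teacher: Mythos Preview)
Your proposal is correct and rests on the same core idea as the paper: the natural isomorphism $(\ff{M}-)^A \cong \ff{M^A}$ is in particular a componentwise injective transformation, so \cref{thm:reduction-natural} yields the full reduction. The paper's proof is a two-line appeal to exactly this.

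The only difference is organisational. You decompose the product $(\ff{\vv{M}}-)^{\vv{A}} = \prod_i (\ff{\vv*{M}{i}}-)^{A_i}$ first via \cref{thm:reduction-product}, then apply \cref{thm:reduction-natural} factorwise (whiskering the basic isomorphism by the inner chain). The paper instead implicitly assembles the full natural isomorphism on the whole product endofunctor---a product of whiskered isomorphisms is again a componentwise injective natural transformation---and invokes \cref{thm:reduction-natural} once. Your route is more explicit about the product and nesting, which is harmless but not needed: since you are dealing with an isomorphism throughout, injectivity is automatic and the appeal to \cref{thm:inj-preservation} is superfluous (that lemma is for left whiskering $T\mu$; here you are right-whiskering, so components are just the original components at different sets). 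Either packaging gives the same reduction $\sigma_u$.
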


\begin{proof}
	For any $M$ and $A$, the two directions of the natural isomorphism $(\ff{M}-)^A \cong \ff{M^A}$ are componentwise injective natural transformations. Then, the thesis follows from \cref{thm:reduction-natural}.
\end{proof}

\noindent
We denote the full reduction described above as $\sigma_{u}\colon\cat{FuTS} \to \cat{U-FuTS}$.

A special case of unlabelled FuTSs are unlabelled simple ones \ie ``unlabelled WLTSs''. We write $\cat{WTS}$ for their category. It follows from \cref{thm:futs-to-u-futs} that the category of WLTSs fully reduces to that of weighted transition systems, hence that $\cat{WLTS} \freduceEq \cat{WTS}$.

\paragraph{Tabular FuTSs}
We call a FuTS \emph{tabular} whenever the collection $\vv{M}$ defining its type is actually a table \ie whenever $\vv{M} = \langle\langle M_{0,0},\dots,M_{0,l_0}\rangle,\dots,\langle M_{n,0} \dots M_{n,l_n}\rangle\rangle$ is such that $l_i = l_{i'}$ for all $i,i' \in \{0,\dots,n\}$. In this setting we say that $\vv{M}$ is of size $n \times l$. In the sequel we adopt the convention of prefixing subcategories of tabular FuTSs with $\cat{T}$ and \eg write $\cat{T-FuTS}$ for the category of all tabular FuTSs. We claim  that the category of FuTSs fully reduces to its subcategory of tabular FuTSs (see \cref{thm:futs-to-t-futs} below). Intuitively, this reduction introduces some ``functional'' or ``deterministic'' in correspondence of monoids added to fill the gaps in $\vv{M}$ and turn it into a table. Any non-trivial monoid structure can be used for this purpose once a non-zero weight is selected since functional steps are essentially weight functions with supports that are singletons like Dirac's delta function. 
This approach is essentially the same that reduces WLTSs as functional ULTraSs  via the transformation described in \cref{ex:cast-wlts-ultras}. 

Drawing from this observation and to in order to simplify the exposition, we adopt the convention of putting collections of monoids in tabular form by padding them on the left (lower indexes). Fix a non-trivial abelian monoid $\mathbb{P}$. For $\vv{M} = \langle\langle M_{0,0},\dots,M_{0,l_0}\rangle,\dots,\langle M_{n,0} \dots M_{n,l_n}\rangle\rangle$, define $\vv{[M]}$ as the collection of size $n\times l$ where $l = \max\{l_0,\dots,l_n\}$ and
\begin{equation}
[M]_{i,j} \defeq \begin{cases}
	M_{i, j-l+l_i} & \text{if } j > l - l_i\\
	\mathbb{P} & \text{otherwise}
\end{cases}
\text{.}
\end{equation}

\begin{remark}
	We did not specify a concrete choice for $\mathbb{P}$ in the definition of $\vv{[M]}$ since this would impose unnecessary constraints; for instance, in \cref{sec:futs-logic-fully-abstract} we need to restrict to a certain class of monoids that does not contain $\mathcal{B}$.	
\end{remark}

\begin{lemma}
	\label{thm:futs-to-t-futs}
	The category of FuTSs fully reduces to that of tabular FuTSs, hence:
	\[
		\cat{FuTS} \freduceEq \cat{T-FuTS}
		\text{.}
	\]
\end{lemma}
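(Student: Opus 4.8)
The plan is to build an injective natural transformation from the type endofunctor of an arbitrary FuTS to that of its tabular padding, and then feed it to \cref{thm:reduction-natural}. The opposite direction is free: $\cat{T-FuTS}$ is a subcategory of $\cat{FuTS}$, so its inclusion functor is already a full reduction $\cat{T-FuTS}\freduceTo\cat{FuTS}$. Hence all the work lies in establishing $\cat{FuTS}\freduceTo\cat{T-FuTS}$.

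First I would set up the padding gadget. Fix a non-zero element $p$ of the non-trivial monoid $\mathbb{P}$ (such a $p$ exists precisely because $\mathbb{P}$ is non-trivial) and define the \emph{Dirac} transformation $\delta\colon Id \to \ff{\mathbb{P}}$ with components $\delta_X(x) = p\cdot x$, i.e.\ the weight function supported on $\{x\}$ with value $p$. Naturality is the observation that $\ff{\mathbb{P}}f$ carries the Dirac at $x$ to the Dirac at $f(x)$, and each $\delta_X$ is injective because $p \neq 0$ forces $\supp(p\cdot x) = \{x\}$. Since $\ff{\mathbb{P}}$ preserves injections (\cref{thm:inj-preservation}) and the class of injective transformations is closed under composition, the $k$-fold iterate $\delta^{k}\colon Id \to \ff{\mathbb{P}}^{k}$ is again an injective transformation for every $k \ge 0$.

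Next I would assemble the transformation row by row. Writing $S_i \defeq \ff{M_{i,0}}\dots\ff{M_{i,l_i}}$ for the $i$-th nesting and $l \defeq \max_i l_i$, right-whiskering $\delta^{\,l-l_i}$ by $S_i$ yields an injective transformation $S_i \to \ff{\mathbb{P}}^{\,l-l_i}S_i$ whose component at $X$ is the injective map $(\delta^{\,l-l_i})_{S_iX}$; it prepends exactly the $l-l_i$ copies of $\ff{\mathbb{P}}$ needed to fill row $i$ in the outer (low-index) positions, so its target is precisely $\ff{\vv{[M]}_i}$. Raising this to the power $A_i$ and taking the product over $i \in \{0,\dots,n\}$ gives
\[
	\mu^{\vv{M},\vv{A}}\colon (\ff{\vv{M}}-)^{\vv{A}} \to (\ff{\vv{[M]}}-)^{\vv{A}},
\]
which is again injective because powers are products and injective transformations are closed under products. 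By \cref{thm:reduction-natural} each $\mu^{\vv{M},\vv{A}}$ induces a full reduction $\cat{FuTS}(\vv{A},\vv{M}) \freduceTo \cat{T-FuTS}(\vv{A},\vv{[M]})$.

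Finally I would glue these together. Homomorphisms in $\cat{FuTS}$ exist only between systems of the same type, so $\cat{FuTS}$ is the disjoint union of its indexed components; defining the global functor, carrier maps, and bisimulation correspondences component-wise by the reductions $\hat{\mu}^{\vv{M},\vv{A}}$ of \cref{thm:reduction-natural} therefore yields a full reduction $\cat{FuTS}\freduceTo\cat{T-FuTS}$, with the coherence condition \eqref{eq:reduction-c-homomorphism} verified inside each component where it already holds. Combined with the inclusion $\cat{T-FuTS}\freduceTo\cat{FuTS}$ this gives $\cat{FuTS}\freduceEq\cat{T-FuTS}$. I expect the only genuinely delicate point to be the bookkeeping that the whiskered Diracs occupy the correct outer $\ff{\mathbb{P}}$-slots of $\ff{\vv{[M]}}$ and that injectivity survives the whiskering, powers, and products; everything else is a direct application of the already-established machinery.
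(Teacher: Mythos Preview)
Your proposal is correct and follows essentially the same route as the paper: construct the Dirac transformation $Id \to \ff{\mathbb{P}}$ (which the paper calls $\eta$), iterate it horizontally and whisker/power/product it into an injective transformation $(\ff{\vv{M}}-)^{\vv{A}} \to (\ff{\vv{[M]}}-)^{\vv{A}}$, then invoke \cref{thm:reduction-natural}. You are slightly more explicit than the paper about the reverse inclusion and about gluing the per-type reductions into a single global one, but the core argument is identical.
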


\begin{proof}
	Fix a non-trivial monoid $\mathbb{P}$ and a non-zero element $\mathtt{p}$.
	The assignment  $x \to \mathtt{p} \cdot x$
	induces a componentwise injective natural transformation $\eta\colon Id \to \ff{\mathbb{P}}$.
	By basic properties of natural transformations, this extends to the componentwise injective natural transformation
	\[
		\prod_{i=0}^{n}(\underbrace{\eta \circ \dots \circ \eta}_{l-l_i\text{ times}} \circ Id_{\ff{M_{i,0}}} \circ \dots \circ Id_{\ff{M_{i,l_i}}})^{A_i}
	\]
	($\circ$ denotes horizontal composition in $\cat{Cat}$)
	whose type is $(\ff{\vv{M}}-)^{\vv{A}} \to (\ff{\vv{[M]}}-)^{\vv{A}}$. Then, the thesis follows from \cref{thm:reduction-natural}.
\end{proof}

\noindent
We denote the full reduction described above as $\sigma_{t}\colon\cat{FuTS} \to \cat{T-FuTS}$.

\paragraph{Homogeneous FuTSs} 
We call a FuTS \emph{homogeneous} whenever its weights are all drawn from the same monoid \ie whenever all monoids in the sequence $\vv{M}$ defining its type are the same. In the sequel we adopt the convention of prefixing subcategories of homogeneous FuTSs with $\cat{H}$ and \eg write $\cat{H-FuTS}$ for the category of all homogeneous FuTSs. We claim that the category of FuTSs fully reduces to its subcategory of homogeneous FuTSs (see \cref{thm:futs-to-u-futs} below). Actually, this reduction is an instance of a more general ``relabelling'' reduction where weights are replaced accordingly to selected monoid homomorphisms\footnote{%
	A function $f$ between the carriers of monoids $(M,m,e)$ and $(M',m',e')$ carries a monoid homomorphisms provided it respects multiplications
	(\ie $f(m(x,y)) = m'(f(x),f(y))$ for all $x,y \in M$) and identities ($f(e) = e'$).
}.

For $f\colon M \to M'$ a monoid homomorphism, the assignment taking every $\rho \in \ff{M}X$ to $f \circ \rho \in \ff{M'}X$ defines the natural transformation
\[
	\ff{f}\colon \ff{M} \to \ff{M'}
	\text{.}
\]
Components of this transformation are injective whenever the function underlying the homomorphism $f$ is injective. It follows from basic properties of natural transformations that these weight relabelling transformations extend to the type of FuTSs. In particular, given a weight structure $\vv{M}$ and a homomorphism $f_{i,j}\colon M_{i,j} \to M'_{i,j}$ for each pair ${i,j}$ indexing $\vv{M}$ and $\vv{M'}$, we write $\vv{f}\colon \vv{M} \to \vv{M'}$ for the collection of homomorphisms $\{f_{i,j}\}$ and
$
	\ff{\vv{f}}\colon \ff{\vv{M}} \to \ff{\vv{M'}}
$
for the resulting natural transformation. Components of this transformation are injective whenever the function underlying each homomorphism $f_{i,j}$ is injective. Then, a reduction is obtained invoking \cref{thm:reduction-natural}. In order to keep the notation compact we write reduced systems obtained in this way as follows:
\begin{equation}
	\vv{f} \odot \alpha \defeq \left(\ff{\vv{f},\car(\alpha)}-\right)^{\!\vv{A}} \circ \alpha
	\text{.}
\end{equation}

\begin{lemma}
	\label{thm:futs-relabeling-weights}
	For $\vv{f}\colon \vv{M} \to \vv{M'}$, the assignment
	$
		\alpha \mapsto \vv{f} \odot \alpha
	$
	defines a full reduction going from $\cat{FuTS}\big(\vv{A},\vv{M}\big)$ to $\cat{FuTS}\big(\vv{A},\vv{M'}\big)$ whenever each $f_{i,j}$ in $\vv{f}$ is injective.
\end{lemma}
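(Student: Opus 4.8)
The plan is to recognise that the claimed assignment is nothing but the canonical reduction induced by an injective natural transformation, so that the whole statement follows from \cref{thm:reduction-natural}. Concretely, I would set $\mu \defeq (\ff{\vv{f}}-)^{\vv{A}}$, the natural transformation of type $(\ff{\vv{M}}-)^{\vv{A}} \to (\ff{\vv{M'}}-)^{\vv{A}}$ obtained by whiskering and taking $\vv{A}$-powers/products of the $\ff{f_{i,j}}$, as already assembled in the discussion preceding the statement. Since $\Sys{(\ff{\vv{M}}-)^{\vv{A}}} = \cat{FuTS}(\vv{A},\vv{M})$ and likewise for $\vv{M'}$, the only thing left to establish is that $\mu$ is componentwise injective under the hypothesis that each $f_{i,j}$ is injective; everything else is a direct appeal to \cref{thm:reduction-natural} plus definition unfolding.

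First I would treat the base case: for a single monoid homomorphism $g\colon M \to M'$ with injective underlying function, the component $(\ff{g})_X\colon \ff{M}X \to \ff{M'}X$ acts by postcomposition $\rho \mapsto g\circ\rho$, hence is injective, since two finitely supported weight functions collapsing to the same $M'$-valued map must already agree where $g$ is injective. Next I would propagate injectivity through the two ways $\ff{\vv{f}}$ is built from the $\ff{f_{i,j}}$. For the nesting $\ff{M_{i,0}}\dots\ff{M_{i,l_i}}$ within each row $\ff{\vv{f}_i}$, I would invoke the observation recalled before \cref{thm:inj-preservation}: whiskering an injective transformation on the left by an endofunctor preserves injectivity precisely when that endofunctor preserves injective maps, which each $\ff{M_{i,j}}$ does by \cref{thm:inj-preservation}; iterating along the nesting gives injective components for $\ff{\vv{f}_i}$. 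For the outer structure I would use that a product (in particular an $A_i$-power) of injective functions is injective, so the components of $(\ff{\vv{f}}-)^{\vv{A}} = \prod_{i}(\ff{\vv{f}_i}-)^{A_i}$ are injective as well, this being exactly the closure of injections under the product and power formers of the grammar in \cref{thm:inj-preservation}.

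Finally, with $\mu$ shown injective, \cref{thm:reduction-natural} yields a full reduction $\hat{\mu}\colon \cat{FuTS}(\vv{A},\vv{M}) \to \cat{FuTS}(\vv{A},\vv{M'})$ acting on objects by $\alpha \mapsto \mu_{\car(\alpha)}\circ\alpha$ and as the identity on carriers, bisimulations, and homomorphisms. Unfolding $\mu_{\car(\alpha)}$ gives $\hat{\mu}(\alpha) = (\ff{\vv{f},\car(\alpha)}-)^{\vv{A}}\circ\alpha = \vv{f}\odot\alpha$, matching the claimed assignment, and fullness is inherited verbatim. The main (indeed only genuinely substantive) obstacle is the injectivity bookkeeping for the nested composition: one must whisker in the correct order and be careful to invoke \cref{thm:inj-preservation} so that the outer $\ff{M_{i,j}}$ functors do not destroy injectivity, recalling that in general $T\mu$ need not be injective even when $\mu$ is. Once that is dispatched, the remainder is a mechanical appeal to \cref{thm:reduction-natural}.
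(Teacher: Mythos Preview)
Your proposal is correct and follows exactly the same approach as the paper: observe that the natural transformation $(\ff{\vv{f}}-)^{\vv{A}}$ is componentwise injective under the hypothesis on the $f_{i,j}$, then invoke \cref{thm:reduction-natural}. The paper's own proof is in fact a two-line sketch of precisely this argument; you have simply spelled out the injectivity bookkeeping (base case, nesting via \cref{thm:inj-preservation}, and closure under products/powers) that the paper leaves implicit.
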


\begin{proof}
	Observe that by hypothesis on the given homomorphisms the natural transformation $\ff{\vv{f}}\colon \ff{\vv{M}} \to \ff{\vv{M'}}$ is componentwise injective. Then, the thesis follows from \cref{thm:reduction-natural}.
\end{proof}

Observe that projections of monoid products always have sections\footnote{%
	For $f$ and $g$ arrows in a category such that $g \circ f = id$, $f$ is called \emph{retraction} of $g$ and $g$ \emph{section} of $f$.
}. In particular, for $\prod M_i$, the section of the $i$-th projection is given by the assignment taking $m \in M_i$ to the tuple defined as $m$ at index $i$ and as zero elsewhere. For $\vv{M}$ consider the product of its monoids
\begin{equation}
	\prod\vv{M} = \prod_{i=0}^{n}\prod_{j=0}^{l_i} M_{i,j}
\end{equation}
and write $\vv{\iota}$ for the collection of sections $\iota_{i,j}\colon M_{i,j} \to \prod\vv{M}$ associated with it. This collection provides a weight relabelling that turn any FuTS of type $(\ff{\vv{M}}-)^{\vv{A}}$ into an one whose weights are all drawn from $\prod\vv{M}$ \ie an homogeneous FuTS. In fact, for $\alpha\colon X \to (\ff{\vv{M}}X)^{\vv{A}}$, the FuTS $\iota \odot \alpha$ is homogeneous.

\begin{lemma}
	\label{thm:futs-to-h-futs}
	The category of FuTSs fully reduces to that of homogeneous FuTSs, hence:
	\[
		\cat{FuTS} \freduceEq \cat{H-FuTS}
		\text{.}
	\]
\end{lemma}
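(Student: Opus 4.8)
The plan is to invoke the relabelling reduction from \cref{thm:futs-relabeling-weights} with the section collection $\vv{\iota}$ introduced just above the statement, and then to assemble the resulting per-type reductions into a single full reduction on the whole category.

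First I would check that each section $\iota_{i,j}\colon M_{i,j} \to \prod\vv{M}$ is an injective monoid homomorphism. Being a section of the corresponding projection $\pi_{i,j}$, it satisfies $\pi_{i,j}\circ\iota_{i,j} = id$, and any map admitting a retraction (left inverse) is injective; hence every component of $\vv{\iota}$ is injective and the hypothesis of \cref{thm:futs-relabeling-weights} is met. Consequently, for each $\vv{A}$ and $\vv{M}$ the assignment $\alpha \mapsto \vv{\iota}\odot\alpha$ defines a full reduction from $\cat{FuTS}\big(\vv{A},\vv{M}\big)$ to $\cat{FuTS}\big(\vv{A},\vv{M'}\big)$, where $\vv{M'}$ is the constant collection with every entry equal to $\prod\vv{M}$. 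Since all weights of $\vv{\iota}\odot\alpha$ are thus drawn from the single monoid $\prod\vv{M}$, the reduced system is homogeneous, so this full reduction in fact lands inside $\cat{H-FuTS}$.

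Next I would assemble these reductions over all types. Because homomorphisms in $\cat{FuTS}$ exist only between systems of the same type, $\cat{FuTS}$ is the disjoint union of its indexed components $\cat{FuTS}\big(\vv{A},\vv{M}\big)$; the per-type full reductions therefore glue, without any interaction, into a single full reduction $\cat{FuTS} \freduceTo \cat{H-FuTS}$. For the converse direction, $\cat{H-FuTS}$ is a subcategory of $\cat{FuTS}$ and, as recorded in \cref{sec:reductions}, every inclusion functor identifies a full reduction, so $\cat{FuTS} \freduceFrom \cat{H-FuTS}$. Combining the two yields $\cat{FuTS} \freduceEq \cat{H-FuTS}$.

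I do not expect a genuine obstacle: the entire substance is carried by \cref{thm:futs-relabeling-weights} together with the elementary fact that sections are injective. The only point requiring care is that the statement concerns the whole category $\cat{FuTS}$ rather than a single indexed component, so one must make explicit that the per-type relabellings assemble coherently — which is immediate precisely because there are no cross-type homomorphisms to respect.
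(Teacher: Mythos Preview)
Your proposal is correct and follows essentially the same approach as the paper: the paper's proof simply says ``for each $\vv{M}$ consider $\vv{\iota}$ as above and apply \cref{thm:futs-relabeling-weights}'', leaving implicit the injectivity of sections, the gluing over types, and the converse inclusion, all of which you spell out. Your additional details are sound and do not deviate from the intended argument.
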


\begin{proof}
	For each $\vv{M}$ consider $\vv{\iota}$ as above and apply \cref{thm:futs-relabeling-weights}.
\end{proof}

\noindent
We denote the full reduction described above as $\sigma_{h}\colon\cat{FuTS} \to \cat{H-FuTS}$.

\paragraph{Homogeneous nested FuTSs} 
Given any tabular homogeneous FuTSs, we observe that the components of the product
\[
	\prod_{i=0}^n\left(\ff{\vv*{M}{i}}-\right)^{A_i}
\]
defining its type are all the same (rows have the same length and monoids are the same). Therefore, the product can be regarded as an exponential and hence this type is (isomorphic to)
\[
	\left(\ff{\vv*{M}{0}}-\right)^{\prod_{i=0}^nA_i}
\]
\ie the type of an homogeneous nested FuTSs. 

\begin{lemma}
	\label{thm:ht-futs-to-hn-futs}
	The category of tabular homogeneous FuTSs fully reduces to its subcategory of nested ones, hence:
	\[
		\cat{HT-FuTS}
		\freduceEq
		\cat{HN-FuTS}
		\text{.}
	\]
\end{lemma}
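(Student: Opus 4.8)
The plan is to mimic the proof of \cref{thm:futs-to-u-futs}: exhibit a natural isomorphism between the two endofunctors and then invoke \cref{thm:reduction-natural} on both of its directions. First I would unfold the type of an object of $\cat{HT-FuTS}$. Homogeneity forces every monoid $M_{i,j}$ occurring in $\vv{M}$ to coincide with a single abelian monoid $M$, while tabularity forces every row to have the same length, say $l+1$. Consequently every factor $\big(\ff{\vv{M}_i}-\big)^{A_i}$ is built from one and the same composite functor $S \defeq \ff{M}\circ\cdots\circ\ff{M}$ ($l+1$ copies), so the type of the system is $\prod_{i=0}^{n}(S-)^{A_i}$.

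Next I would produce the required isomorphism from the universal property of coproducts in $\Set$. For any set $Y$ one has the currying bijection $\prod_{i=0}^{n} Y^{A_i} \cong Y^{\coprod_{i=0}^{n} A_i}$, since a tuple of functions $\langle f_i\colon A_i \to Y\rangle_i$ is exactly one function out of the disjoint union $\coprod_{i=0}^{n} A_i$ (the very index set $\{(i,a)\mid i\leq n \land a \in A_i\}$ used throughout the paper for products over labels). Instantiating $Y \defeq SX$ and letting $X$ vary yields a natural isomorphism $\prod_{i=0}^{n}(S-)^{A_i} \cong (S-)^{\coprod_{i=0}^{n} A_i}$; naturality in $X$ is immediate because the components are assembled purely from the universal property of the coproduct and $S$ is a functor. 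The right-hand endofunctor is, by definition, the type of a homogeneous nested FuTS over the label set $\coprod_{i=0}^{n}A_i$ and the single weight sequence $\vv{M}_0$.

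Finally, exactly as in \cref{thm:futs-to-u-futs}, both directions of this isomorphism are componentwise injective natural transformations, their components being bijections. Applying \cref{thm:reduction-natural} to each direction produces full reductions $\cat{HT-FuTS} \to \cat{HN-FuTS}$ and back, whence $\cat{HT-FuTS} \freduceEq \cat{HN-FuTS}$.

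I expect the only non-routine point to be checking that the currying isomorphism is natural in $X$ and that it genuinely lands in the class of \emph{nested} types, so that \cref{thm:reduction-natural} is applicable on the nose; everything else is bookkeeping already packaged by \cref{thm:reduction-natural}. The one subtlety worth stating explicitly is that the exponent of the resulting nested type is the \emph{disjoint union} $\coprod_{i=0}^{n}A_i$ of the label sets, so that the contraction of a product of equal powers into a single exponential is valid.
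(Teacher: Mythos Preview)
Your proposal is correct and follows essentially the same route as the paper: exhibit the natural isomorphism $\prod_{i=0}^{n}(S-)^{A_i}\cong (S-)^{\coprod_{i=0}^{n}A_i}$ and invoke \cref{thm:reduction-natural} on each direction. The paper's proof is the one-line version of what you wrote; note that where you (correctly) write the exponent as the disjoint union $\coprod_{i=0}^{n}A_i$, the paper writes $\prod_{i=0}^{n}A_i$, but the subsequent translation $\theta_n(\bflangle i|a|\vv{m}\bfrangle\phi)=\bflangle (i,a)|\vv{m}\bfrangle\theta_n(\phi)$ confirms that the intended index set is exactly your $\{(i,a)\mid i\le n,\ a\in A_i\}$.
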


\begin{proof}
	There is a natural isomorphism 
	$\prod_{i=0}^n\left(\ff{\vv*{M}{i}}-\right)^{A_i} \cong \left(\ff{\vv*{M}{0}}-\right)^{\prod_{i=0}^nA_i}$.
	The thesis follows from \cref{thm:reduction-natural}.
\end{proof}

\noindent
We denote the reduction described above as $\sigma_{n}\colon\cat{HT-FuTS} \to \cat{HN-FuTS}$.

\paragraph{Simple FuTSs}
Given an unlabelled homogeneous nested FuTS, we observe that it can be reduced to an unlabelled simple FuTS since its type is of the form $\ff{M}^{l+1}$ and meets the assumptions of \cref{thm:bisim-flattening}.

Finally, by composition of the above reductions we obtain the desired reduction taking FuTSs to unlabelled simple ones. 

\begin{theorem}
	\label{thm:futs-to-wts}
	The category of FuTSs reduces to that of unlabelled simple FuTS, hence:
	\[
		\cat{FuTS} \reduceEq \cat{WTS}\text{.}
	\]
\end{theorem}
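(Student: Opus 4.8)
The statement packages two claims, $\cat{FuTS} \reduceTo \cat{WTS}$ and $\cat{WTS} \reduceTo \cat{FuTS}$, of which the latter is immediate: $\cat{WTS}$ is by definition the subcategory of unlabelled simple FuTSs, and every inclusion functor was already observed to identify a full reduction in \cref{sec:reductions}; hence $\cat{WTS} \freduceTo \cat{FuTS}$, and in particular $\cat{WTS} \reduceTo \cat{FuTS}$. The plan is therefore to build the non-trivial direction $\cat{FuTS} \reduceTo \cat{WTS}$ by composing the reductions assembled in this section, using the facts (established in \cref{sec:reductions}) that reductions compose and that any reduction restricts to a subcategory of its domain.

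Concretely, I would chain the component reductions so that each one preserves the invariant needed by the next and so that the endpoint is an unlabelled homogeneous nested FuTS, to which the flattening of \cref{thm:reduction-flattening} applies. Starting from an arbitrary FuTS I first apply $\sigma_{t}$ (\cref{thm:futs-to-t-futs}) to obtain a tabular system, then $\sigma_{h}$ (\cref{thm:futs-to-h-futs}) to collapse all weight monoids into a single product monoid $N$, landing in $\cat{HT-FuTS}$ with type $\prod_{i=0}^{n}(\ff{N}^{l+1}-)^{A_i}$. Next I apply $\sigma_{n}$ (\cref{thm:ht-futs-to-hn-futs}) to fold the row-indexed product into a single exponential, reaching a homogeneous nested system of type $(\ff{N}^{l+1}-)^{A}$ with $A$ the combined label set. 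Applying $\sigma_{u}$ (\cref{thm:futs-to-u-futs}) absorbs $A$ into the outermost monoid, yielding the unlabelled nested type $\ff{N^{A}}\circ\ff{N}^{l}$; a second use of $\sigma_{h}$ re-homogenises this into $\ff{N'}^{l+1}$ for the product monoid $N' = N^{A}\times N^{l}$, \ie an unlabelled homogeneous nested FuTS. Since $\ff{N'}$ preserves injective functions (\cref{thm:inj-preservation}), \cref{thm:reduction-flattening} reduces $\Sys{\ff{N'}^{l+1}}$ to $\Sys{\ff{N'}} = \cat{WTS}$, closing the chain; composing all these reductions gives $\cat{FuTS} \reduceTo \cat{WTS}$, and together with the reverse inclusion this yields $\cat{FuTS} \reduceEq \cat{WTS}$.

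The step I expect to be the main obstacle is not any single calculation but the \emph{sequencing} of the component reductions, because several of them reintroduce structure that an earlier one has removed. Folding the product of rows via $\sigma_{n}$ turns a family of separately labelled rows into a single genuine label set, and absorbing labels via $\sigma_{u}$ replaces a homogeneous outermost monoid by a strictly larger power $N^{A}$, breaking homogeneity; this is precisely why the final step of the plan must re-apply $\sigma_{h}$ rather than flatten directly. The care therefore goes into checking that each reduction is well-typed on the subcategory produced by its predecessor --- that $\sigma_{h}$ preserves the shape of the weight table, that $\sigma_{u}$ preserves nestedness, and so on --- so that the restrictions compose, and in verifying that the endpoint genuinely has the single-composition form $\ff{N'}^{l+1}$ demanded by \cref{thm:bisim-flattening}. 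Once these compatibility checks are in place, the conclusion follows purely by composition of reductions.
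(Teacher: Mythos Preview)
Your proof is correct and follows the same strategy as the paper: compose the stage reductions $\sigma_t,\sigma_h,\sigma_n,\sigma_u$ of this section and finish with \cref{thm:reduction-flattening}, taking the reverse direction from the inclusion $\cat{WTS}\subseteq\cat{FuTS}$. The paper's displayed chain is $\cat{FuTS}\freduceEq\cat{H-FuTS}\freduceEq\cat{HT-FuTS}\freduceEq\cat{HN-FuTS}\freduceEq\cat{UHN-FuTS}\reduceEq\cat{US-FuTS}$, so the only differences are that you tabularise before homogenising (the paper does the reverse) and that you insert a second application of $\sigma_h$ after $\sigma_u$. The latter is a point of care rather than a different idea: you correctly observe that $\sigma_u$ replaces the outermost monoid $N$ by $N^{A}$, so the resulting type $\ff{N^{A}}\circ\ff{N}^{\,l}$ is not literally a power $T^{l+1}$ as \cref{thm:reduction-flattening} requires; the paper leaves this re-homogenisation implicit in the passage to $\cat{UHN-FuTS}$, whereas you spell it out.
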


\begin{proof}
	\setupstepsequence{step:reduction-full-abstraction-1}
	By basic properties of reductions it holds that:
	\[
		\cat{FuTS}
		\steplabel{\freduceEq}
		\cat{H-FuTS}
		\steplabel{\freduceEq}
		\cat{HT-FuTS}
		\steplabel{\freduceEq}
		\cat{HN-FuTS}
		\steplabel{\freduceEq}
		\cat{UHN-FuTS}
		\steplabel{\reduceEq}
		\cat{US-FuTS}				
	\]
	where
	\stepref{1} follows from \cref{thm:futs-to-t-futs},
	\stepref{2} follows from \cref{thm:futs-to-h-futs},
	\stepref{3} follows from \cref{thm:ht-futs-to-hn-futs},
	\stepref{4} follows from \cref{thm:futs-to-u-futs},
	and 
	\stepref{5} follows from \cref{thm:reduction-flattening}, respectively.
\end{proof}

\noindent
We denote the reduction described above as $\sigma_{s}\colon\cat{FuTS} \to \cat{WTS}$.

We remark that all but the step taking unlabelled homogeneous nested FuTS to unlabelled simple FuTSs are full reduction and hence leave state spaces of systems untouched. 

\newif\ifreduced
\newif\ifhomogeneous
\newcommand*{\drawExSys}{
	\begin{tikzpicture}[
			scale=3,
			auto,
			baseline=(current bounding box.center),
			state/.style={
				draw=black,fill=white,rounded corners=3pt,
				inner sep=3pt, outer sep=0pt},
			hidden transition/.style={->,>=latex},
			lbl/.style={outer sep=2pt,inner sep=0pt},
		]
		\ifreduced
			\tikzset{
				hidden state/.style={state},
				transition/.style={hidden transition},
			}
			\def\hs##1{##1}
		\else
			\tikzset{
				hidden state/.style={
					circle,draw=none,fill=none,
					inner sep=0pt, outer sep=-1pt},
				transition/.style={-}
			}
			\def\hs##1{}
		\fi
		\ifhomogeneous
			\def\tlbl##1{{##1, \ltrue, 0}}
			\def\htlbl##1##2{{##1, \lfalse, ##2}}
		\else
			\def\tlbl##1{##1}
			\def\htlbl##1##2{##2}
		\fi
		
		\foreach \i in {0,1,2,3}{
			\node[hidden state]
				(h\i) at (\i*90+180:1) {\hs{\(r_\i\)}};
			\node[state] (s\i) at (\i*90+135:1) {\(s_\i\)};
		}
	
		\node[hidden state] (h4) at (0,0) {\hs{\(r_4\)}};
	
		\foreach \i in {0,1,2,3}{
			\pgfmathsetmacro{\ax}{\i*90}
			\draw[transition] (s\i) .. 
				controls ([rotate=\ax]-.9, .1) 
				and ([rotate=\ax]-.35,0) .. 
				node[lbl,pos=.91] {\(\tlbl{a}\)} (h\i);
			\draw[hidden transition] (h\i) .. 
				controls ([rotate=\ax]-1.5,.25) 
				and ([rotate=\ax]-1.5,.35) .. 
				node[lbl] {\(\htlbl{a}{\frac{1}{2}}\)}(s\i);
			\pgfmathsetmacro{\j}{int(mod(\i+1,4))}
			\draw[hidden transition] (h\i) .. 
				controls ([rotate=\ax]-1.5,-.25) 
				and ([rotate=\ax]-1.5,-.35) .. 
				node[lbl,swap] {\(\htlbl{a}{\frac{1}{2}}\)}(s\j);
		}
		
		\draw[transition] (s1) .. 
			controls (-.6,-.1) and (-.1,-.5) .. 
			node[lbl,swap] {\(\tlbl{b}\)}(h4);
		\draw[hidden transition] (h4) .. 
			controls (0,.4) and (-.2,.6) .. 
			node[lbl,pos=.6] {\(\htlbl{b}{\frac{1}{6}}\)}(s0);
		\draw[hidden transition] (h4) .. 
			controls (.1,.3) and (.4,.3) .. 
			node[lbl,swap,pos=.8] {\(\htlbl{b}{\frac{1}{2}}\)}(s2);
		\draw[hidden transition] (h4) .. 
			controls (0,.4) and (.2,.6) .. 
			node[lbl,swap,pos=.6] {\(\htlbl{b}{\frac{1}{3}}\)}(s3);
	\end{tikzpicture}		
}
\newcommand*{\drawExULTraS}{\reducedfalse\drawExSys}
\newcommand*{\drawExHFuTS}{\reducedfalse\homogeneoustrue\drawExSys}
\newcommand*{\drawExWLTS}{\reducedtrue\homogeneoustrue\drawExSys}

\begin{figure}
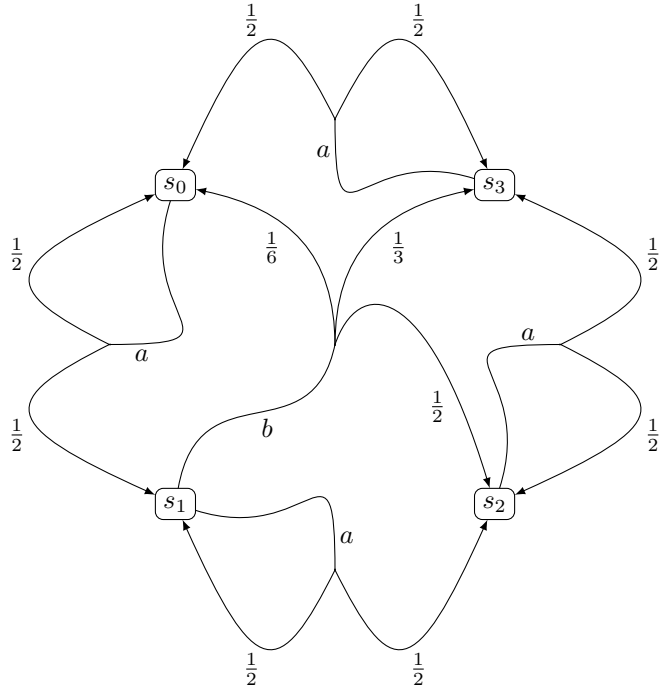

	\begin{center}
		\drawExULTraS
		\caption{An example of probabilistic ULTraS {\cite[Figure~1]{latella:lmcs2015}}.}
		\label{fig:example-ultras-reduction-1}
	\end{center}
\end{figure}

\begin{figure}
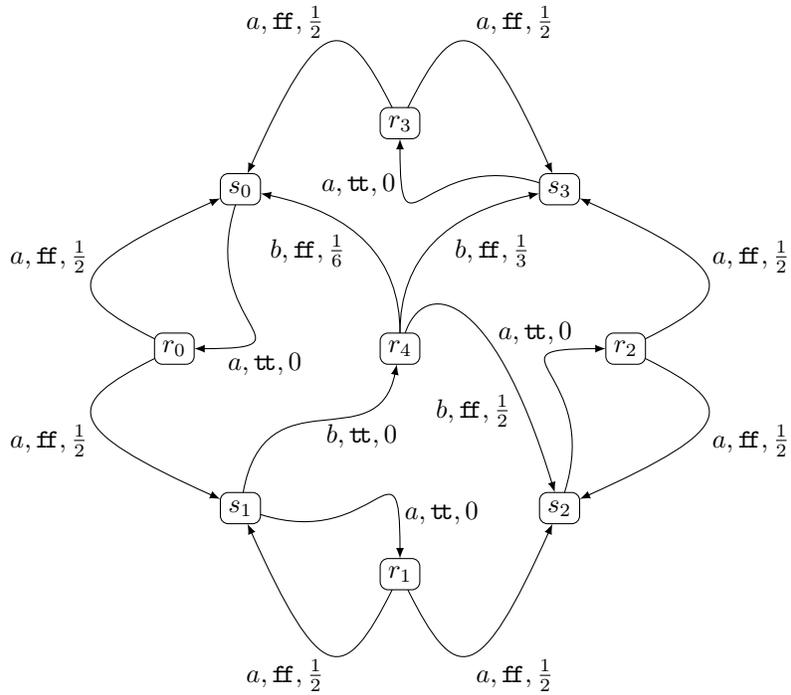

	\begin{center}
		\drawExWLTS
		\caption{The WLTS associated to the ULTraS depicted in \cref{fig:example-ultras-reduction-1}.}
		\label{fig:example-ultras-reduction-3}
	\end{center}
\end{figure}

\paragraph{{Example: reducing NPLTSs to WTSs}}
Consider the probabilistic transition system depicted in \cref{fig:example-ultras-reduction-1} and regard it as an ULTraS over the monoid of real numbers under addition. By applying the above reduction to this system we obtain the weighted transition system shown in \cref{fig:example-ultras-reduction-3}.
The function mapping states from the former system into to states of the latter is implicitly described by the names (\ie $s_0,\dots,s_3$). Likewise for the correspondence between bisimulation relations. Finally, observe that states $r_0,\dots,r_4$ correspond precisely to the reachability functions used by the first system. 

As exemplified by the above reduction, FuTSs can be reduced to WLTSs by extending the original state space with weight functions and splitting steps accordingly. From this perspective, weight functions are \emph{hidden states} in the original systems which the proposed reduction renders explicit. This observation highlights a trade-off between state and behaviour complexity of these semantically equivalent meta-models. 

\section{Fully abstract modal logics}
\label{sec:fully-abstract-logic}

In this section we consider modal logics and investigate the use of reductions as tool for proving full abstraction.
Since Hennessy-Milner's logic for LTSs \cite{hm:acm1985}, modal logics have been studied as a way to characterise the behaviour and operational semantics of transition systems. In particular, given a logic $\mathfrak{L}_{\cat{C}}$ interpreted against systems in some $\cat{C}$ (a logic for $\cat{C}$ for short), we are interested in discriminate states depending on which formulae of $\mathfrak{L}_{\cat{C}}$ they satisfy.

\begin{notation}
	For $\phi \in \mathfrak{L}_{\cat{C}}$, $\alpha \in \cat{C}$, and $x \in \car(\alpha)$, we write $x \vDash_{\alpha} \phi$ to denote that $\phi$ holds in the state $x$ of $\alpha$ and write $\llbracket \phi \rrbracket_{\alpha}$ for the set $\{x \in \vDash_{\alpha} \mid x \vDash_{\alpha} \phi \}$ of states where $\phi$ holds.
\end{notation}

\begin{definition}
	Let $\mathfrak{L}_{\cat{C}}$ a logic for systems in $\cat{C}$ a category of systems.
	For $\alpha$ in $\cat{C}$, states $x$ and $x'$ are called \emph{logical equivalent} (written $x \simeq_{\mathfrak{L}_{\cat{C}}} x'$) if and only if, for every formula $\phi$ of $\mathfrak{L}_{\cat{C}}$:
	\begin{equation}
		x \vDash_\alpha \phi \iff x' \vDash_\alpha \phi
		\text{.}
	\end{equation}
\end{definition}

\begin{notation}
In the sequel, we often write $\simeq$ instead of $\simeq_{\mathfrak{L}_{\cat{C}}}$ provided the logic is clear from the context.
\end{notation}

An important property is whether logical and bisimulation equivalence coincide for the logic and class of systems under scrutiny. Formally:

\begin{definition}
	A logic $\mathfrak{L}_{\cat{C}}$ for systems in $\cat{C}$ is called \emph{fully abstract \wrt bisimulation} provided that for every $\alpha$ in $\cat{C}$ and states $x$ and $x'$ it holds:
	\[
		x \sim x' \iff x \simeq x'
		\text{.}
	\]
\end{definition}

There are several logics for systems considered in this paper that are fully abstract. Among them we mention Hennessy-Milner's logic for LTSs \cite{hm:acm1985}, finite-disjunction logic for PLTSs \cite{bm:ictcs2016}, finite-conjunction logic for multi transition systems and Markov chains \cite{schroder:tcs2008}, finite-conjunction logic for positive WLTSs \cite{js:jlc2010}, and many more. Below we recall definitions and results about finite-conjunction logic for WLTSs that are relevant for the logic for FuTSs we introduce in \cref{sec:futs-logic-fully-abstract}. 

\paragraph{Finite-conjunction logic for WLTSs}
Finite-conjunction logic (FCL) for weighted transition systems is a minimal logic that characterises bisimulation for WLTSs whose weights are drawn from a certain class of monoids called \emph{positive} \cite{js:jlc2010,schroder:tcs2008,cp:tcs2007}.
Before we proceed to introduce syntax and interpretation of formulae for this logic let us recall that an abelian monoid $(M,+,0)$ is called \emph{positive} whenever it has the \emph{zerosumfree property}, \ie if and only if the following implication holds true:
\begin{equation}
	m + m' = 0 \implies m = 0 \land m' = 0
	\text{.}
\end{equation}
As the name suggests, positive monoids can be endowed with a partial order $\leq$ compatible with their structure in the sense that $+$ is monotonic in both components:
\begin{equation}
	m \leq m' \implies m + m'' \leq m' + m''
	\text{.}
\end{equation}
The monoidal sum induces an ordering $\trianglelefteq$ called \emph{natural} and defined as follows:
\begin{equation}
	m \trianglelefteq m' \defiff \exists m''\, m + m'' = m'
	\text{.}
\end{equation}
The natural order is the weakest\footnote{%
	A partial order is said to be weaker than another (say $\sqsubseteq$ and $\leq$, respectively) whenever it is contained by the former (${\sqsubseteq} \subseteq {\leq}$).
} of all the partial order compatible with a given positive monoid.
As a consequence, the unit $0$ is the bottom element of any ordering compatible with the structure of a monoid (hence the term ``positive'').
Examples of positive abelian monoids are $(\mathbb{B},\lor,\tt)$, $(\mathbb{N},+,0)$, $(\mathbb{N},\max,0)$, $(\mathcal{P}(X),\cup,\emptyset)$, and $([0,\infty),+,0)$. Positive monoids are closed under monoid products.
We assume every abelian monoid $(M,+,0)$ in the sequel to be positive and implicitly equipped with an ordering denoted by $\leq$, unless otherwise stated.

Finite-conjunction logic for WLTSs over positive monoids is similar to (a fragment of) Hennessy-Milner logic for CCS \cite{hm:acm1985} except for the diamond modality which is decorated with a weight lower bound. 
For $(M,+,0,\leq)$ a positive abelian monoid and $A$ a set of labels, formulae of this logic are described by the following grammar:
\begin{equation}
	\phi \Coloneqq \top \mid \phi \land \phi \mid \bflangle a|m \bfrangle \phi
\end{equation}
where $a$ and $m$ range over $A$ and $M$, respectively. 
%The set of all formulae is denoted as $\cat{FCL}(A,M)$.
As common practice, we will omit trailing occurrences of $\top$ and \eg write $\bflangle a|m \bfrangle$ instead of $\bflangle a|m \bfrangle\top$. When there is exactly one label in $A$ we will omit it from modalities and write just $\bflangle m \bfrangle\phi$.
Their semantics with respect to a state $x$ of a WLTS $\alpha\colon X \to (\ff{M}X)^A$ is defined as follows:
\begin{align}
	x \vDash_{\alpha} \top & \defiff \mathrm{true} \\
	x \vDash_{\alpha} \phi \land \phi' & \defiff x \vDash_{\alpha} \phi \text{ and } x \vDash_{\alpha} \phi'\\
	x \vDash_{\alpha} \bflangle a|m \bfrangle \phi & \defiff \sum_{y \in \llbracket \phi \rrbracket_{\alpha}} \alpha(x)(a)(y) \geq m
	\text{.}
\end{align}
It follows from the positiveness assumption on $M$ that the formula $\bflangle a| 0 \bfrangle \phi$ is satisfied by any state of any WLTS, regardless of $\phi$.

In \cite{js:jlc2010} it is shown FCL for WLTSs is fully abstract \wrt bisimulation provided that weights are drawn from positive monoids with the \emph{cancellation} property \ie whenever their structure satisfies the implication:
\begin{equation}
	m + m' = m + m'' \implies m' = m''
	\text{.}
\end{equation}
Examples of cancellative abelian monoids are $(\Sigma^*,\cdot,\varepsilon)$, $(\mathbb{N},+,0)$, $([0,\infty),+,0)$, and $(\mathbb{R},+,0)$. Cancellative monoids are closed under monoid products.

\begin{theorem}[{\cite[Thm~13]{js:jlc2010}}]
	\label{thm:wlts-logic-bisimulation}
	For $(X,\alpha)$ a WLTS with weights drawn from a positive cancellative monoid, 
	\begin{equation}
		x \sim x' \iff x \simeq x'
		\text{.}
	\end{equation}
\end{theorem}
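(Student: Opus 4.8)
The plan is to prove the two inclusions $\sim\,\subseteq\,\simeq$ and $\simeq\,\subseteq\,\sim$ separately, the latter being the substantial one. For \emph{soundness} ($x \sim x' \implies x \simeq x'$) I would show by structural induction on formulae that every denotation $\llbracket\phi\rrbracket_\alpha$ is saturated under bisimilarity, \ie a union of $\sim$-classes. The cases $\top$ and $\phi\land\phi'$ are immediate. For $\bflangle a|m\bfrangle\phi$ the induction hypothesis says $\llbracket\phi\rrbracket_\alpha$ is a union of $\sim$-classes; since $\sim$ is a bisimulation, $x\sim x'$ gives $\alpha(x)\mathrel{\sim^{(\ff M -)^A}}\alpha(x')$ by \cref{def:bisim-via-extensions}, hence $\alpha(x)(a)\mathrel{\sim^{\ff M}}\alpha(x')(a)$ by the product description of the extension (\cref{thm:extension-product}). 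By the characterisation of the $\ff M$-extension this means $\alpha(x)(a)$ and $\alpha(x')(a)$ assign the same cumulative weight to every $\sim$-class; summing over the union of classes $\llbracket\phi\rrbracket_\alpha$ yields $\sum_{y\in\llbracket\phi\rrbracket_\alpha}\alpha(x)(a)(y)=\sum_{y\in\llbracket\phi\rrbracket_\alpha}\alpha(x')(a)(y)$, so $x\vDash_\alpha\bflangle a|m\bfrangle\phi \iff x'\vDash_\alpha\bflangle a|m\bfrangle\phi$.

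For \emph{completeness} I would prove that $\simeq$ is itself a bisimulation, whence $\simeq\,\subseteq\,\sim$. As $\simeq$ is an equivalence, by \cref{thm:bisim-product} together with the product description of the $(\ff M -)^A$-extension it suffices to show, for $x\simeq x'$ and each label $a$, that $\alpha(x)(a)$ and $\alpha(x')(a)$ assign the same cumulative weight to each $\simeq$-class. The first step is \emph{value recovery}: the logical theory of a state fixes, for every formula $\phi$, the set of lower bounds $\{m : \sum_{y\in\llbracket\phi\rrbracket_\alpha}\alpha(x)(a)(y)\geq m\}$; since $\leq$ is a partial order this set has a greatest element, namely the cumulative weight itself, so $x\simeq x'$ forces $\sum_{y\in\llbracket\phi\rrbracket_\alpha}\alpha(x)(a)(y)=\sum_{y\in\llbracket\phi\rrbracket_\alpha}\alpha(x')(a)(y)$ for every $\phi$. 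Thus the two cumulative weights agree on every \emph{definable} set.

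The remaining step converts agreement on definable sets into agreement on $\simeq$-classes, and is where image-finiteness and cancellativity enter. Restricting to the finite set $S=\supp(\alpha(x)(a))\cup\supp(\alpha(x')(a))$, only finitely many classes are involved. Order the states of $S$ by theory containment, $y\sqsubseteq z$ iff every formula satisfied by $y$ is satisfied by $z$; its symmetrisation is $\simeq$. For each $y$ the up-set $\{z\in S: y\sqsubseteq z\}$ is definable: it is the intersection, over the finitely many $z$ with $y\not\sqsubseteq z$, of formulae true at $y$ and false at $z$ (which exist precisely because $y\not\sqsubseteq z$), and finite conjunction realises this intersection. Crucially, only one-directional separators are needed here, so the absence of negation and disjunction is no obstacle. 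Consequently the two cumulative weights agree on every such up-set. Embedding the positive cancellative monoid into its Grothendieck group, I would then peel the finite poset of classes from its $\sqsubseteq$-maximal elements downwards: a maximal class coincides with its own up-set, so the weights agree on it outright, and subtracting its contribution (legitimate in the group) reduces the problem to strictly fewer classes, closing the induction.

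The main obstacle is exactly this last step, a Möbius-inversion argument over the finite poset of $\simeq$-classes: cancellativity is what licenses the subtractions in the Grothendieck group, while positivity underpins the value-recovery step and guarantees that $\bflangle a|0\bfrangle$ is trivially satisfied. Once each label component is settled one obtains $\alpha(x)\mathrel{\simeq^{(\ff M -)^A}}\alpha(x')$, so $\simeq$ is a bisimulation and $\simeq\,\subseteq\,\sim$; combined with soundness this gives the claimed equivalence $x\sim x' \iff x\simeq x'$.
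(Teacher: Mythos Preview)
The paper does not prove this theorem at all: it is quoted verbatim from \cite[Thm~13]{js:jlc2010} and used as a black box (the subsequent \cref{thm:futs-logic-full-abstraction} invokes it via \cref{thm:reduction-full-abstraction}). So there is no ``paper's own proof'' to compare against.

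Your sketch is essentially the standard Hennessy--Milner argument for weighted systems, and it is correct in outline. Soundness is routine. For completeness, your three ingredients---value recovery from down-sets of thresholds (antisymmetry of the order), definability of $\sqsubseteq$-up-sets by finite conjunctions of one-directional separators (using that $\ff{M}$ produces finitely supported functions, hence only finitely many $\simeq$-classes meet $\supp(\alpha(x)(a))\cup\supp(\alpha(x')(a))$), and a M\"obius-type peeling from maximal classes using the Grothendieck group embedding (which is injective precisely by cancellativity)---are the right ones and match the argument in \cite{js:jlc2010}. One small point worth tightening: when you define the up-set of $y$ by a finite conjunction $\psi_y$, the denotation $\llbracket\psi_y\rrbracket_\alpha$ may spill outside $S$, but since both weight functions vanish off $S$ this does not affect the cumulative sums; you implicitly use this and it deserves a sentence. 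Otherwise the plan would go through.
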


\section{Reductions and fully abstract logics}
\label{sec:reductions-logic}

In order to relate categories of systems equipped with a notion of logical equivalence mimicking reductions we introduce the notion of \emph{translation}. Intuitively, translations are reductions except that logical equivalence is considered instead of bisimulations. 

\begin{definition}
	\label{def:translation}
	Let $\mathfrak{L}_{\cat{C}}$ and $\mathfrak{L}_{\cat{D}}$ logics be for systems in $\cat{C}$ and $\cat{D}$, respectively.
	A translation from $\mathfrak{L}_{\cat{C}}$ to $\mathfrak{L}_{\cat{D}}$ is given by
	\begin{itemize}
		\item a function $\theta\colon \mathfrak{L}_{\cat{C}} \to \mathfrak{L}_{\cat{D}}$,
		\item a functor $\theta\colon \cat{C} \to \cat{D}$,
		\item an injective function $\theta^{c}_{\alpha}\colon \car(\alpha) \to \car(\theta(\alpha))$ for any $\alpha \in \cat{C}$
	\end{itemize}
	with the following properties:
	\begin{enumerate}
		\item \label{def:translation-c-homomorphisms}
			for any $f\colon \alpha \to \beta$ in $\cat{C}$, $\theta_{\beta}^{c} \circ f = \theta(f) \circ \theta_{\alpha}^{c}$;
		\item  \label{def:translation-c-semantics}
			for any system $\alpha$ in $\cat{C}$, state $x$ of $\alpha$ and formula $\phi$ in $\mathfrak{L}_{\cat{C}}$,
			$
				x \vDash_{\alpha} \phi \iff \theta^{c}_{\alpha}(x) \vDash_{\theta(\alpha)}  \theta(\phi)
				\text{;}$
		\item  \label{def:translation-c-logical-equivalence}
			for any system $\alpha$ in $\cat{C}$ and states $x$, $x'$ of $\alpha$,
			$x \simeq_{\mathfrak{L}_{\cat{C}}} x' \iff \theta^{c}_{\alpha}(x) \simeq_{\mathfrak{L}_{\cat{D}}} \theta^{c}_{\alpha}(x')
				\text{.}$
	\end{enumerate}
\end{definition}

Condition~\ref{def:translation-c-homomorphisms} of \cref{def:translation} states that translations are coherent with the structure of homomorphisms similarly to \eqref{eq:reduction-c-homomorphism} which states the same coherency condition for reductions. Condition~\ref{def:translation-c-semantics} states that the translation is coherent with formulae semantics. Condition~\ref{def:translation-c-logical-equivalence} states that translations preserve and reflect logical equivalence. Note that translations need not to be surjective on formulae and hence Condition~\ref{def:translation-c-semantics} does not entail Condition~\ref{def:translation-c-logical-equivalence}.

A translation is said to be \emph{coherent} with a reduction whenever they share their underlying functor and all injective maps are also system reductions. Whenever a translation is coherent with a reduction we call the pair a reduction for the logics involved. Formally:

\begin{definition}
	 A reduction from $\mathfrak{L}_{\cat{C}}$ to $\mathfrak{L}_{\cat{D}}$ is given by
	 \begin{itemize}
 		\item a function $\theta\colon \mathfrak{L}_{\cat{C}} \to \mathfrak{L}_{\cat{D}}$,
 		\item a functor $\theta\colon \cat{C} \to \cat{D}$,
 		\item a system reduction $\theta^{c}_{\alpha}\colon \car(\alpha) \to \car(\theta(\alpha))$ for any $\alpha \in \cat{C}$
 	\end{itemize}
 	subject to Conditions~\ref{def:translation-c-homomorphisms} to~\ref{def:translation-c-logical-equivalence} from \cref{def:translation}.
\end{definition}

In the sequel we often consider reductions for logics obtained equipping a reduction with suitable functions taking formulae to their translations. Therefore, we adopt the convention of writing a reduction from $\mathfrak{L}_{\cat{C}}$ to $\mathfrak{L}_{\cat{D}}$ as a pair $(\sigma,\theta)$ where $\sigma\colon \cat{C} \to \cat{D}$ is a reduction and is $\theta\colon \mathfrak{L}_{\cat{C}} \to \mathfrak{L}_{\cat{D}}$ is a function that extends to a translation coherent with $\sigma$.
We extend notation and terminology introduced for reductions to this settings. A reduction $(\sigma,\theta)$ is called \emph{full} whenever $\sigma$ is so. A logic $\mathfrak{L}_{\cat{C}}$ is said to \emph{reduce} (resp.~\emph{fully} reduce) to $\mathfrak{L}_{\cat{D}}$ provided there is a reduction (resp.~\emph{full} reduction) going from the former to the latter. 

The notion of reduction for logics allows us to combine the information translations and reduction carry with regard to logical and bisimulation equivalences. As a consequence, reductions for logics reflect the property of full abstraction.

\begin{theorem}
	\label{thm:reduction-full-abstraction}
	Assume that $\mathfrak{L}_{\cat{C}}$ reduces to $\mathfrak{L}_{\cat{D}}$. 
	If $\mathfrak{L}_{\cat{D}}$ is fully abstract so is $\mathfrak{L}_{\cat{C}}$.
\end{theorem}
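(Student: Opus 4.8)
The plan is to establish a chain of equivalences connecting bisimilarity and logical equivalence in $\cat{C}$ to their counterparts in $\cat{D}$, and then to close the loop using the full abstraction of $\mathfrak{L}_{\cat{D}}$. Fix a system $\alpha$ in $\cat{C}$ and states $x,x'$ of $\alpha$, and abbreviate $\sigma^{c}_{\alpha}$ by $\sigma^{c}$. The goal $x \sim x' \iff x \simeq x'$ will follow from the three-step chain
\[
	x \sim x'
	\iff \sigma^{c}(x) \sim \sigma^{c}(x')
	\iff \sigma^{c}(x) \simeq \sigma^{c}(x')
	\iff x \simeq x',
\]
where the outer equivalences are contributed by the system reduction $\sigma$ and by the translation $\theta$, respectively, and the middle one is the assumed full abstraction of $\mathfrak{L}_{\cat{D}}$ instantiated at the system $\sigma(\alpha)$ and the states $\sigma^{c}(x),\sigma^{c}(x')$.

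The first step---that the system reduction $\sigma$ both preserves and reflects bisimilarity---is where the full strength of \cref{def:system-reduction} is used, and I expect it to be the main obstacle. For the forward direction, if $x \sim x'$ there is $R \in \bis(\alpha)$ with $x \mathrel{R} x'$; since $\sigma^{b}_{\alpha}$ is left-total there is $R' \in \bis(\sigma(\alpha))$ with $R \mathrel{\sigma^{b}_{\alpha}} R'$, and then condition \eqref{eq:reduction-bisim-condition} yields $\sigma^{c}(x) \mathrel{R'} \sigma^{c}(x')$, whence $\sigma^{c}(x) \sim \sigma^{c}(x')$. The reflecting direction is dual: starting from a bisimulation $R'$ for $\sigma(\alpha)$ relating $\sigma^{c}(x)$ and $\sigma^{c}(x')$, the right-totality of $\sigma^{b}_{\alpha}$ supplies some related $R \in \bis(\alpha)$, and \eqref{eq:reduction-bisim-condition} transports the relation back to $x \mathrel{R} x'$, so $x \sim x'$. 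Both directions crucially exploit that $\sigma^{b}_{\alpha}$ is a correspondence (left- and right-total) together with the biconditional form of \eqref{eq:reduction-bisim-condition}; this is precisely the strengthening over mere preservation of bisimilarity noted in the discussion following \cref{def:system-reduction}.

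The remaining two steps are essentially bookkeeping. The third equivalence is the translation's logical-equivalence clause (Condition~\ref{def:translation-c-logical-equivalence} of \cref{def:translation}), which gives $x \simeq_{\mathfrak{L}_{\cat{C}}} x' \iff \sigma^{c}(x) \simeq_{\mathfrak{L}_{\cat{D}}} \sigma^{c}(x')$, using that $\theta$ is coherent with $\sigma$ so that $\theta^{c}_{\alpha} = \sigma^{c}_{\alpha}$. The second equivalence is the hypothesis that $\mathfrak{L}_{\cat{D}}$ is fully abstract, applied inside $\sigma(\alpha)$. Composing the three equivalences yields $x \sim x' \iff x \simeq x'$ for arbitrary $\alpha$ and $x,x'$, which is exactly full abstraction of $\mathfrak{L}_{\cat{C}}$.
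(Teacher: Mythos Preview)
Your proof is correct and follows exactly the same three-step chain as the paper's own argument, which simply cites the definition of system reduction, the full-abstraction hypothesis, and the definition of translation for the three equivalences in turn. Your elaboration of the first step (using left- and right-totality of $\sigma^{b}_{\alpha}$ together with the biconditional \eqref{eq:reduction-bisim-condition}) makes explicit what the paper leaves implicit, but the approach is identical.
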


\begin{proof}
	\setupstepsequence{step:reduction-full-abstraction}
	Let $\theta$ be a reduction going from $\mathfrak{L}_{\cat{C}}$ to $\mathfrak{L}_{\cat{D}}$.
	For every $\alpha$ in $\cat{C}$ and states $x$ and $x'$ it holds that
	\[
		x \sim x'
		\steplabel{\iff}
		\sigma^{c}_{\alpha}(x) \sim \sigma^{c}_{\alpha}(x')
		\steplabel{\iff}
		\sigma^{c}_{\alpha}(x) \simeq_{\mathfrak{L}_{\cat{D}}} \sigma^{c}_{\alpha}(x')
		\steplabel{\iff}
		x \simeq_{\mathfrak{L}_{\cat{C}}} x'
	\]
	where \stepref{1}, \stepref{2}, and \stepref{3} follow from definition of system reduction, of full abstraction \wrt bisimulation, and of translation, respectively.
\end{proof}

In other words, \cref{thm:reduction-full-abstraction} introduces a technique for proving that a logic is fully abstract: reduce it to one that is known to have this property.

\section{Application: a fully abstract modal logic for FuTSs}
\label{sec:futs-logic-fully-abstract}

We introduce \emph{finite-conjunction logic} for FuTSs whose weights are drawn from positive monoids. This is a conservative extension of finite-conjunction logic for WLTSs where the diamond modality is decorated with a sequence of weight lower bounds--one for each sub-step forming a FuTS transition.
For labels $\vv{A} = \langle A_0,\dots,A_n\rangle$ and monoids $\vv{M} = \langle\langle M_{0,0},\dots,M_{0,l_0}\rangle,\dots,\langle M_{n,0} \dots M_{n,l_n}\rangle\rangle$, formulae of this logic are described by the following grammar:
\begin{equation}
	\phi \Coloneqq 
		\top \mid 
		\phi \land \phi \mid 
		\bflangle i|a|\vv{m} \bfrangle\phi
\end{equation}
where $i \in \{0,\dots,n\}$ indexes the $n+1$ components of FuTSs considered and $a$ and $\vv{m}$ range over the set $A_i$ of labels and sequence $\vv{M_i}$ of weighing monoids for the $i$-th component of FuTSs considered, respectively.
We adopt the same syntactic conventions for the FCL for WLTSs and omit the component index $i$ from modalities in presence of nested FuTSs (\ie whenever $n = 0$). The latter convention allows us to regard formulae of the logic for WLTSs as formulae of FCL for FuTSs just defined. 

\begin{notation}
	For $\cat{C}$ a subcategory of $\cat{FuTS}$ we write $\mathfrak{L}^{\land}_{\cat{C}}$ for the set of FCL formulae that can be interpreted against systems in $\cat{C}$. For instance, $\mathfrak{L}^{\land}_{\cat{WLTS}}$ is the set of formulae of FCL for WLTSs.
\end{notation}

Before we define how formulae of this logic are interpreted, let us introduce some auxiliary notion. For $m$ an element of a positive monoid $M$ and $Y \subseteq X$, write $\bflangle m \bfrangle Y$ for the set of weight functions assigning to $Y$ a weight above $m$:
\[
	\bflangle m \bfrangle Y \defeq \left\{\rho \in \ff{M}X \,\middle|\, \sum_{y \in Y} \rho(y) \geq m\right\}
	\text{.}
\]
The assignment $Y \mapsto \bflangle m \bfrangle Y$ defines the monotonic operator $\bflangle m \bfrangle\colon \mathcal{P}X \to \mathcal{P}\ff{M}X$.

Formulae semantics with respect to a state $x$ of a FuTS $\alpha$ is defined as follows:
\begin{align}
	x \vDash \top & \defiff \mathrm{true} \\
	x \vDash \phi \land \phi' & \defiff x \vDash \phi \text{ and } x \vDash \phi'\\
	x \vDash \bflangle i|a|m_{0}, \dots, m_{l_i} \bfrangle \phi & \defiff 
		\alpha_{i}(x)(a) \in  
	\bflangle m_0 \bfrangle\dots\bflangle m_{l_i} \bfrangle \llbracket \phi \rrbracket_{\alpha}\text{.}
\end{align}
Note that in the case of simple FuTSs (\ie WLTSs), this semantics is precisely that described in the previous section: finite-conjunction logic for FuTSs is a conservative extension of finite-conjunction logic for WLTSs.

We claim that finite-conjunction logic for FuTSs is fully abstract with respect to bisimulation provided that weights are drawn from positive cancellative monoids. In order to prove this result, we extend the sequence of reductions we presented in \cref{sec:futs-reductions} with coherent translation for formulae involved and then invoke \cref{thm:reduction-full-abstraction}.

Before we proceed let us discuss some useful properties of the operator $\bflangle m \bfrangle\colon \mathcal{P}X \to \mathcal{P}\ff{M}X$ introduced above. As stated by \cref{thm:inverse-modality-props} below, this operator distributes over arbitrary intersections, respects projections, and commutes with homomorphisms.
\begin{lemma}
	\label{thm:inverse-modality-props}
	For $X$ a set, the following statements hold:
	\begin{enumerate}
	\item \label{thm:inverse-modality-props-1}
		for $M$ a positive monoid, and $\{Y_i \subseteq X\}_{i \in I}$ a family of subsets of $X$:
		\[
			\bflangle m \bfrangle \bigcap_{i \in I} Y_{i} 
			= 
			\bigcap_{i \in I} \bflangle m \bfrangle Y_{i}
			\text{;}
		\]
	\item \label{thm:inverse-modality-props-2}
		for $M$ a positive monoid, $\{m_0,\dots,m_n\}$ a finite family of weights, and $Y \subseteq X$:
		\[
			\bflangle m_0 + \dots + m_n \bfrangle Y 
			= 
			\bigcap_{i = 0}^{n} \bflangle m_i \bfrangle Y
			\text{;}
		\]
	\item \label{thm:inverse-modality-props-3}
		for $\{M_i\}_{i \in I}$ non-empty family of positive monoids  and $Y \subseteq X$:
		\[
			\bflangle \{m_i\}_{i \in I} \bfrangle Y = \bigcap_{i \in I} \bflangle \overline{m_i} \bfrangle Y
		\]
		where $\overline{m_i} \in \prod_{i \in I} M_i$ has value $m_i$ at $i$ and zero elsewhere;
	\item \label{thm:inverse-modality-props-4}
		for $f\colon M \to M'$ a homomorphism of positive monoids  and $Y \subseteq X$:
		\[
			\bflangle f(m) \bfrangle Y 
			=
			\{f \circ \rho \mid \rho \in \bflangle m \bfrangle Y \} 
			=
			\{\ff{f,X}(\rho) \mid \rho \in \bflangle m \bfrangle Y \}
			\text{;}
		\]
	\end{enumerate}
\end{lemma}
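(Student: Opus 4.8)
The common engine for all four items is to unfold the membership condition into a statement about cumulative weight. For $Y \subseteq X$ and $\rho \in \ff{M}X$ write $s_Y(\rho) \defeq \sum_{y \in Y}\rho(y)$, so that $\rho \in \bflangle m \bfrangle Y$ is by definition the order statement $s_Y(\rho) \geq m$. Under this translation each set identity becomes an equivalence between order statements about $s_Y(\rho)$, which I would verify by double inclusion. The ``$\subseteq$'' halves come essentially for free from monotonicity properties already recorded for the operator: $Y \mapsto \bflangle m \bfrangle Y$ is monotone (a larger $Y$ gives a larger cumulative weight, since $0$ is the bottom of $\leq$ and $+$ is monotone), while $m \mapsto \bflangle m \bfrangle Y$ is antitone. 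I expect the reverse inclusions to carry the real content, and to be the place where the positivity hypothesis (zerosumfreeness, $0$ least) is actually used.

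For item~\ref{thm:inverse-modality-props-1} I would note $\bigcap_i Y_i \subseteq Y_j$ for every $j$, so by monotonicity $\bflangle m \bfrangle \bigcap_i Y_i \subseteq \bigcap_j \bflangle m \bfrangle Y_j$, leaving the converse as the delicate step. For item~\ref{thm:inverse-modality-props-2} the identity reduces to the order fact that, for fixed $s = s_Y(\rho)$, one has $s \geq m_0 + \dots + m_n$ iff $s \geq m_i$ for each $i$: the forward direction uses $m_0 + \dots + m_n \geq m_i$ (monotonicity of $+$ together with $0 \leq m_j$), and the backward direction is again where the structure of the chosen compatible order must enter. Item~\ref{thm:inverse-modality-props-3} is the cleanest, and the one I would isolate as the workhorse for the homogeneous reduction of \cref{thm:futs-to-h-futs}: both sides are subsets of the \emph{same} set $\ff{\prod_{i} M_i}X$, and since the product order is componentwise while $\overline{m_i}$ equals $m_i$ in component $i$ and the bottom $0$ elsewhere, both $s_Y(\rho) \geq \{m_i\}_{i}$ and ``$\forall i\ s_Y(\rho) \geq \overline{m_i}$'' unfold to the identical componentwise condition $\forall i\,\big(s_Y(\rho)\big)_i \geq m_i$; the equality is then immediate, with no factoring issues because the ambient monoid is unchanged.

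Finally, for item~\ref{thm:inverse-modality-props-4} the two right-hand descriptions agree by definition, since $\ff{f,X}$ is exactly post-composition $\rho \mapsto f \circ \rho$. The inclusion $\{f \circ \rho \mid \rho \in \bflangle m \bfrangle Y\} \subseteq \bflangle f(m)\bfrangle Y$ follows because $f$ is a monoid homomorphism, hence commutes with the finite cumulative sum, $s_Y(f \circ \rho) = f\big(s_Y(\rho)\big)$, and is monotone for the order; thus $s_Y(\rho) \geq m$ yields $s_Y(f\circ\rho) = f\big(s_Y(\rho)\big) \geq f(m)$. For the reverse inclusion I would use that the relevant relabelling homomorphisms are injective order-reflecting embeddings (as in \cref{thm:futs-relabeling-weights}), which lets one factor a witness $\tau$ with $s_Y(\tau) \geq f(m)$ back through $f$ to recover the required $\rho$. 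I expect the reverse inclusions of items~\ref{thm:inverse-modality-props-1} and~\ref{thm:inverse-modality-props-2}, and the factoring step in item~\ref{thm:inverse-modality-props-4}, to be the main obstacle, all turning on how the chosen compatible order $\leq$ interacts with finite sums in a positive monoid.
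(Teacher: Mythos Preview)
The paper states this lemma without proof, so there is no argument to compare against. Your plan for the easy directions is sound: monotonicity of $Y \mapsto \bflangle m \bfrangle Y$ gives one half of item~\ref{thm:inverse-modality-props-1}, the implication $s \geq m_0 + \dots + m_n \Rightarrow s \geq m_i$ (from $0$ being least and $+$ monotone) gives one half of item~\ref{thm:inverse-modality-props-2}, and your argument for item~\ref{thm:inverse-modality-props-3} is complete and correct, since both membership conditions unwind to the identical componentwise requirement $\forall i\ \big(s_Y(\rho)\big)_i \geq m_i$ in the product order.

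However, the reverse inclusions you flag as ``the main obstacle'' are not merely delicate --- as literally stated they are false, and no amount of positivity or cancellation repairs them. For item~\ref{thm:inverse-modality-props-1}, take $M = (\mathbb{N},+,0)$, $X = \{a,b\}$, $Y_1 = \{a\}$, $Y_2 = \{b\}$, $m = 1$, and $\rho = 1\cdot a + 1\cdot b$: then $\rho \in \bflangle 1 \bfrangle Y_1 \cap \bflangle 1 \bfrangle Y_2$, yet $Y_1 \cap Y_2 = \emptyset$ gives $s_\emptyset(\rho) = 0 \not\geq 1$. For item~\ref{thm:inverse-modality-props-2}, still over $\mathbb{N}$, take $s_Y(\rho) = 2$ and $m_0 = m_1 = 2$: then $s \geq m_0$ and $s \geq m_1$ but $s \not\geq 4$. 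For item~\ref{thm:inverse-modality-props-4}, the reverse inclusion would force every $\tau \in \ff{M'}X$ with $s_Y(\tau) \geq f(m)$ to lie in the image of $\ff{f,X}$; taking $m = 0$ this says $\ff{M'}X = \img(\ff{f,X})$, which fails for any non-surjective $f$ --- in particular for the section maps $\iota_{i,j}$ of \cref{thm:futs-to-h-futs}, which is precisely the application you had in mind. So your instinct that these steps carry the real content was right; the gap is in the statement itself rather than in your approach, and only the inclusions you actually argued (plus item~\ref{thm:inverse-modality-props-3}) go through.
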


\paragraph{Unlabelled FuTSs} 
Consider the function $\theta_{u}\colon \mathfrak{L}^{\land}_{\cat{FuTS}} \to \mathfrak{L}^{\land}_{\cat{U-FuTS}}$ defined, on each $\vv{M}$ and $\vv{A}$, as follows:
\begin{align}
	\theta_{u}(\top) & \defeq \top \\
	\theta_{u}(\phi \land \phi') & \defeq \theta_{u}(\phi) \land \theta_{u}(\phi') \\
	\theta_{u}(\bflangle i|a| m_0, \dots, m_{l_i} \bfrangle \phi) & \defeq
	\bflangle i | a \cdot m_0, \dots, m_{l_i} \bfrangle \theta_{u}(\phi)
	\text{.}
\end{align}
Below we prove that $\theta_{u}$ is coherent with the full reduction $\sigma_{u}\colon\cat{FuTS} \to \cat{U-FuTS}$ defined in the proof of \cref{thm:futs-to-u-futs}. As a consequence, the pair $(\sigma_{u},\theta_{u})$ defines a full reduction going from $\mathfrak{L}^{\land}_{\cat{FuTS}}$ to $\mathfrak{L}^{\land}_{\cat{U-FuTS}}$.

\begin{lemma}
	\label{thm:futs-logic-u-futs}
	FCL for FuTSs fully reduces to that for unlabelled ones, hence:
	\[
		\mathfrak{L}^{\land}_{\cat{FuTS}} \freduceEq \mathfrak{L}^{\land}_{\cat{U-FuTS}}
		\text{.}
	\]
\end{lemma}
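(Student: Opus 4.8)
The plan is to exhibit the formula map $\theta_{u}$ as the logical component that turns the full system reduction $\sigma_{u}\colon\cat{FuTS}\to\cat{U-FuTS}$ of \cref{thm:futs-to-u-futs} into a full \emph{reduction of logics}, and then to pair it with the trivial inclusion reduction in the opposite direction (every unlabelled FuTS is a FuTS and every $\mathfrak{L}^{\land}_{\cat{U-FuTS}}$ formula is an $\mathfrak{L}^{\land}_{\cat{FuTS}}$ formula) to obtain the two-sided $\freduceEq$. The substance is to check that $(\sigma_{u},\theta_{u})$ satisfies Conditions~\ref{def:translation-c-homomorphisms}--\ref{def:translation-c-logical-equivalence} of \cref{def:translation}. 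Since $\sigma_{u}$ is obtained from \cref{thm:reduction-natural} applied to the natural isomorphism $(\ff{M}-)^A\cong\ff{M^A}$, its carrier component is $\sigma^{c}_{u,\alpha}=id_{\car(\alpha)}$; hence Condition~\ref{def:translation-c-homomorphisms} is inherited verbatim from $\sigma_{u}$ being a reduction (condition \eqref{eq:reduction-c-homomorphism}), and I may take $\theta^{c}_{\alpha}(x)=x$ throughout.

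First I would prove the semantic correspondence (Condition~\ref{def:translation-c-semantics}), namely $x\vDash_{\alpha}\phi\iff x\vDash_{\sigma_{u}(\alpha)}\theta_{u}(\phi)$, by structural induction on $\phi$. The cases $\top$ and $\phi\land\phi'$ are immediate from the inductive hypothesis and the clause-wise definition of $\theta_{u}$, giving in particular $\llbracket\phi\rrbracket_{\alpha}=\llbracket\theta_{u}(\phi)\rrbracket_{\sigma_{u}(\alpha)}$ as subsets of the common carrier. The modal case $\bflangle i|a|m_{0},\dots,m_{l_i}\bfrangle\phi$ is the heart of the induction. Unfolding both semantics and the isomorphism $\phi\mapsto\sum_{x}(\lambda a.\phi(a)(x))\cdot x$, the inner operators $\bflangle m_{1}\bfrangle\cdots\bflangle m_{l_i}\bfrangle\llbracket\phi\rrbracket$ are untouched by $\sigma_{u}$, so by the inductive identity on denotations they coincide; only the outermost check differs, the source testing $\alpha_{i}(x)(a)$ against $\bflangle m_{0}\bfrangle(\cdots)$ and the target testing the transported weight function against $\bflangle a\cdot m_{0}\bfrangle(\cdots)$ over the monoid $M_{i,0}^{A_i}$. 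Because $M_{i,0}^{A_i}$ carries the pointwise product order and $a\cdot m_{0}$ is $m_{0}$ at coordinate $a$ and $0$ elsewhere, positivity (so that $0$ is the bottom element, making the remaining coordinates automatically satisfy ``$\geq 0$'') reduces ``$\geq a\cdot m_{0}$'' to ``$\geq m_{0}$ at coordinate $a$''; this is exactly \cref{thm:inverse-modality-props-4} instantiated at the point-mass homomorphism $m\mapsto a\cdot m$, and it closes the modal case.

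The delicate point is Condition~\ref{def:translation-c-logical-equivalence}, because $\theta_{u}$ is \emph{not} surjective on formulae: a target modality $\bflangle i|w_{0},m_{1},\dots,m_{l_i}\bfrangle$ may carry an arbitrary $w_{0}\in M_{i,0}^{A_i}$, whereas $\theta_{u}$ only produces the point-mass weights $a\cdot m_{0}$. The implication $\theta^{c}_{\alpha}(x)\simeq_{\mathfrak{L}^{\land}_{\cat{U-FuTS}}}\theta^{c}_{\alpha}(x')\Rightarrow x\simeq_{\mathfrak{L}^{\land}_{\cat{FuTS}}}x'$ is routine: apply the target equivalence to the translated formulas $\theta_{u}(\phi)$ and transport along Condition~\ref{def:translation-c-semantics}. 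For the converse I would strengthen the claim to an inductive invariant on \emph{target} formulae: every denotation $\llbracket\psi\rrbracket_{\sigma_{u}(\alpha)}$ is an intersection $\bigcap_{\phi\in\Phi_{\psi}}\llbracket\phi\rrbracket_{\alpha}$ of denotations of source formulae. For a modality with weight $w_{0}$ I first decompose $\bflangle w_{0}\bfrangle$ into the intersection $\bigcap_{a\in A_i}\bflangle a\cdot w_{0}(a)\bfrangle$ of point-mass operators (\cref{thm:inverse-modality-props-3}), then push this together with the inductive intersection for the immediate subformula through the remaining operators using distributivity of $\bflangle m\bfrangle$ over arbitrary intersections (\cref{thm:inverse-modality-props-1}), and finally recognise each resulting set as $\llbracket\theta_{u}(\chi_{a,\phi})\rrbracket_{\sigma_{u}(\alpha)}=\llbracket\chi_{a,\phi}\rrbracket_{\alpha}$ for the source formula $\chi_{a,\phi}=\bflangle i|a|w_{0}(a),m_{1},\dots,m_{l_i}\bfrangle\phi$. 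Given this invariant, $x\simeq_{\mathfrak{L}^{\land}_{\cat{FuTS}}}x'$ preserves membership in each $\llbracket\phi\rrbracket_{\alpha}$ and hence in every $\llbracket\psi\rrbracket_{\sigma_{u}(\alpha)}$, yielding $x\simeq_{\mathfrak{L}^{\land}_{\cat{U-FuTS}}}x'$.

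The main obstacle is precisely this non-surjectivity of $\theta_{u}$, which (as the discussion around \cref{rm:relaxed-versions} warns) means Condition~\ref{def:translation-c-logical-equivalence} does not follow for free from Condition~\ref{def:translation-c-semantics}. The observation that dissolves it is that an arbitrary product-monoid weight bound, while not itself a single point mass, induces an operator $\bflangle w_{0}\bfrangle$ that decomposes into a possibly infinite intersection of point-mass operators over which the modalities distribute; this suffices to write every target denotation as an intersection of translated-source denotations even though no single finite-conjunction formula need witness it. Once Conditions~\ref{def:translation-c-homomorphisms}--\ref{def:translation-c-logical-equivalence} are established, fullness of $\sigma_{u}$ makes $(\sigma_{u},\theta_{u})$ a full reduction of logics, and the trivial inclusion reduction in the reverse direction delivers the asserted $\mathfrak{L}^{\land}_{\cat{FuTS}}\freduceEq\mathfrak{L}^{\land}_{\cat{U-FuTS}}$.
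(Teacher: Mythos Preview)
Your proposal is correct and follows essentially the same approach as the paper's own proof: structural induction for Condition~\ref{def:translation-c-semantics}, and the same ``every target denotation is an intersection of source denotations'' invariant (via \cref{thm:inverse-modality-props}, items~\ref{thm:inverse-modality-props-1} and~\ref{thm:inverse-modality-props-3}) to handle the non-surjectivity of $\theta_u$ for Condition~\ref{def:translation-c-logical-equivalence}. One small note: your appeal to item~\ref{thm:inverse-modality-props-4} in the outermost modal step is not quite what that item states---your direct positivity/product-order argument (reducing $\geq a\cdot m_0$ in $M_{i,0}^{A_i}$ to $\geq m_0$ at coordinate $a$) is the correct one and is essentially what the paper uses as well.
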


\begin{proof}
	First we prove that $\theta_{u}$ and $\sigma_{u}$ meet Condition~\ref{def:translation-c-semantics} of \cref{def:translation} \ie that for any $\phi \in \mathfrak{L}^{\land}_{\cat{FuTS}}$, $\alpha \in \cat{FuTS}$, and $x \in \car(\alpha)$:
	\[
		x \vDash_{\alpha} \phi 
		\iff 
		x \vDash_{\sigma_{u}(\alpha)} \theta_{u}(\phi)
	\]
	(note that $\sigma^{c}_{u,\alpha}(x) = x$.)
	We proceed by recursion on the structure of $\phi$.
	\begin{description}
		\item[$\top$] 
			Let $\phi = \top$. Clearly $x \vDash_{\alpha} \phi$ and $x \vDash_{\sigma_{u}(\alpha)} \theta_{u}(\phi)$. 
		\item[$\land$] 
			Let $\phi = \phi' \land \phi''$. Assume by induction hypothesis that $\phi'$ and $\phi''$ hold at $x$ if and only if $\theta_{u}(\phi')$ and $\theta_{u}(\phi'')$ hold at $x$, respectively. As a consequence $x \vDash_{\alpha} \phi' \land \phi''$ if and only if $x \vDash_{\sigma_{u}(\alpha)} \theta_{u}(\phi' \land \phi'')$.
		\item[$\bflangle-\bfrangle$] 
			Let $\phi = \bflangle i | a | m_{0},\dots,m_{l_i}\bfrangle \phi'$. 
			It follows from fullness of $\sigma_{u}$ and induction hypothesis that
			\[
				\llbracket \phi' \rrbracket_{\alpha} = 
				\llbracket \theta_{u}(\phi') \rrbracket_{\sigma(\alpha)}
			\]
			and hence that
			\[
				\bflangle m_{1}\bfrangle\dots\bflangle m_{l_i}\bfrangle
				\llbracket \phi' \rrbracket_{\alpha} = 
				\bflangle m_{1}\bfrangle\dots\bflangle m_{l_i}\bfrangle
				\llbracket \theta_{u}(\phi') \rrbracket_{\sigma(\alpha)}
				\text{.}
			\]
			Write $Y$ for this set.
			Recall from \cref{sec:futs-reductions} that $\sigma_{u}(\alpha)_i(x)$ is the function $\sum_{\rho} (a \cdot \alpha_i(x)(a)(\rho)) \cdot \rho$ \ie that
			$
				\alpha_i(x)(a)(\rho) = m
				\iff
				\sigma_{u}(\alpha)_i(x)(\rho) = a \cdot m 
				\text{.}
			$
			We conclude that 
			$
				\alpha_i(x)(a) \in \bflangle m_0\bfrangle Y
				\iff
				\sigma_{u}(\alpha)_i(x) \in \bflangle a \cdot m_0\bfrangle Y
			$
			and hence that 
			$x \vDash_{\alpha} \bflangle i | a | m_0, \dots, m_{l_i} \bfrangle \theta_{u}(\phi') \iff x \vDash_{\sigma_{u}(\alpha)} \bflangle i | a \cdot m_0, \dots, m_{l_i} \bfrangle \theta_{u}(\phi')$.
	\end{description}
	We observe that $\theta_{u}$ is not surjective as a consequence of how labels are encoded using weights: for instance, the formula
	$\bflangle i | \lambda a. m_{0},\dots,m_{l_i}\bfrangle$ is not in the image of $\theta_{u}$ whenever $A$ has infinitely many labels. If FCL had arbitrary conjunctions then, the previous formula would have been equivalent to an $A$ indexed conjunction of formulae in the image of $\theta_{u}$, namely $\bigvee_{a \in A} \bflangle i | a \cdot m_{0},\dots,m_{l_i}\bfrangle$. From these observations we conclude that to prove that $\theta_{u}$ and $\sigma_{u}$ meet Condition~\ref{def:translation-c-logical-equivalence} of \cref{def:translation} it suffices to show that for every formula $\psi \in \mathfrak{L}^{\land}_{\cat{U-FuTS}}$ there is a set of formulae $\{\phi_i\}_{j \in J}$ with the property that for every $\alpha \in \cat{FuTS}$ and $x \in \car(\alpha)$:
	\[
		x \vDash_{\sigma_{u}(\alpha)} \psi \iff \forall j \in J \left(x \vDash_{\alpha} \phi_j\right)
	\]
	which can be equivalently written as
	\[
		\llbracket \psi \rrbracket_{\sigma(\alpha)} = \bigcap_{j \in J} \llbracket \phi_j \rrbracket_{\alpha}
		\text{.}
	\]
	We proceed by recursion on the structure of $\psi$.
	\begin{description}
		\item[$\top$] 
			Let $\psi = \top$. Then, $\psi \in \img(\theta_{u})$.
		\item[$\land$] 
			Let $\psi = \psi' \land \psi''$. Assume induction hypothesis that $\{\phi'_{j}\}_{j \in J}$ and  $\{\phi''_{k}\}_{k \in K}$ for $\phi'$ and $\psi''$ are given. Consider the set of formulae $\{\phi'_{j} \land \phi''_{k}\}_{(j,k) \in J \times K}$. By construction it holds that: $x \vDash_{\sigma_{u}(\alpha)} \psi'\land \psi'' \iff \forall (j,k) \in J \times K \left(x \vDash_{\alpha} \phi'_{j} \land \phi''_{k} \right)$.
		\item[$\bflangle-\bfrangle$] 
			Let $\psi = \bflangle i | m_{0},\dots,m_{l_i}\bfrangle \phi'$. 
			Assume by induction hypothesis that the set $\{\phi'_{j}\}_{j \in J}$ for $\psi'$ has been given. 
			Consider the set of formulae $\{\phi_{j,a}\}_{(j,a) \in J \times A}$ where each $\phi_{j,a}$ is $\bflangle i | a | m_{0}(a),\dots,m_{l_i}\bfrangle \phi'_j$.
			It follows from fullness of $\sigma_u$, induction hypothesis, and \cref{thm:inverse-modality-props} (\cref{thm:inverse-modality-props-1}) that:
			\begin{equation}
				\bflangle m_{1}\bfrangle\dots\bflangle m_{l_i}\bfrangle
				\llbracket \psi' \rrbracket_{\sigma_{u}(\alpha)}
				= 
				\bflangle m_{1}\bfrangle\dots\bflangle m_{l_i}\bfrangle
				\bigcap_{j \in J}\llbracket \phi'_j \rrbracket_{\alpha}
				= 
				\bigcap_{j \in J}
				\bflangle m_{1}\bfrangle\dots\bflangle m_{l_i}\bfrangle
				\llbracket \phi'_j \rrbracket_{\alpha}.
			\end{equation}
			It follows from \cref{thm:inverse-modality-props} (\cref{thm:inverse-modality-props-1,thm:inverse-modality-props-3}) that
			\[
				\bflangle m_{0}\bfrangle\dots\bflangle m_{l_i}\bfrangle
				\llbracket \psi' \rrbracket_{\sigma_{u}(\alpha)} = 
				\bigcap_{a \in A}\bigcap_{j \in J}
				\bflangle m_{0}(a)\bfrangle\bflangle m_{1}\bfrangle\dots\bflangle m_{l_i}\bfrangle
				\llbracket \phi'_j \rrbracket_{\alpha}.
			\]
			We conclude from the above and the semantics of FCL that
			$\llbracket \psi \rrbracket_{\sigma(\alpha)} = \bigcap_{(a,j) \in A\times J} \llbracket \phi_(a,j) \rrbracket_{\alpha}$.
			\qedhere			
	\end{description}
\end{proof}

\paragraph{Tabular FuTSs} 
Fix a non-zero element $\mathtt{p}$ of a positive monoid $\mathbb{P}$ (later we will also assume cancellation) and consider the function $\theta_{t}\colon \mathfrak{L}^{\land}_{\cat{FuTS}} \to \mathfrak{L}^{\land}_{\cat{T-FuTS}}$ defined, on each tabular $\vv{M}$ and $\vv{A}$, as follows:
\begin{align}
	\theta_{t}(\top) & \defeq \top \\
	\theta_{t}(\phi \land \phi') & \defeq \theta_{t}(\phi) \land \theta_{t}(\phi') \\
	\theta_{t}(\bflangle i|a|m_0, \dots, m_{l_i} \bfrangle \phi) & \defeq
	\bflangle i | a | \underbrace{\mathtt{p}, \dots, \mathtt{p}}_{l-l_i\text{ times}}, m_0, \dots, m_{l_i} \bfrangle \theta_{t}(\phi)\text{.}
\end{align}
Below we prove that $\theta_{t}$ is coherent with the full reduction $\sigma_{t}\colon\cat{FuTS} \to \cat{T-FuTS}$ defined in the proof of \cref{thm:futs-to-t-futs}. As a consequence, the pair $(\sigma_{t},\theta_{t})$ defines a full reduction going from $\mathfrak{L}^{\land}_{\cat{FuTS}}$ to $\mathfrak{L}^{\land}_{\cat{T-FuTS}}$.

\begin{lemma}
	\label{thm:futs-logic-t-futs}
	FCL for FuTSs fully reduces to that for tabular ones, hence:
	\[
		\mathfrak{L}^{\land}_{\cat{FuTS}} \freduceEq \mathfrak{L}^{\land}_{\cat{T-FuTS}}
		\text{.}
	\]
\end{lemma}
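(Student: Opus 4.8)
The plan is to mirror the proof of \cref{thm:futs-logic-u-futs}: I would show that $\theta_{t}$ together with the full reduction $\sigma_{t}\colon\cat{FuTS}\to\cat{T-FuTS}$ of \cref{thm:futs-to-t-futs} satisfies Conditions~\ref{def:translation-c-semantics} and~\ref{def:translation-c-logical-equivalence} of \cref{def:translation}; Condition~\ref{def:translation-c-homomorphisms} is inherited from $\sigma_{t}$ being a reduction, and fullness gives $\sigma^{c}_{t,\alpha}=id_{\car(\alpha)}$, so the carriers of $\alpha$ and $\sigma_{t}(\alpha)$ coincide. The single technical fact underlying everything is that $\sigma_{t}$ pads each row with functional (Dirac-like) steps of shape $\mathtt{p}\cdot\rho$, and that, because $\mathbb{P}$ is positive (so $0$ is the bottom of $\leq$) and $\mathtt{p}\neq 0$, such a step satisfies $\mathtt{p}\cdot\rho\in\bflangle \mathtt{p}\bfrangle Y \iff \rho\in Y$ for every $Y$. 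In other words, one application of $\bflangle \mathtt{p}\bfrangle$ exactly ``peels off'' one layer of padding added by $\sigma_{t}$.

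For Condition~\ref{def:translation-c-semantics} I would argue by induction on $\phi$, the cases $\top$ and $\land$ being immediate as in \cref{thm:futs-logic-u-futs}. For $\phi=\bflangle i|a|m_{0},\dots,m_{l_i}\bfrangle\phi'$, the induction hypothesis and fullness give $\llbracket\phi'\rrbracket_{\alpha}=\llbracket\theta_{t}(\phi')\rrbracket_{\sigma_{t}(\alpha)}=:Y$, whence the ``inner'' set $Z=\bflangle m_{0}\bfrangle\cdots\bflangle m_{l_i}\bfrangle Y$ is the same on both sides. Since $\sigma_{t}(\alpha)_{i}(x)(a)$ is the $(l-l_i)$-fold nesting of $\mathtt{p}\cdot(-)$ applied to $\alpha_{i}(x)(a)$, iterating the peeling fact $l-l_i$ times yields $\sigma_{t}(\alpha)_{i}(x)(a)\in\bflangle \mathtt{p}\bfrangle\cdots\bflangle \mathtt{p}\bfrangle Z \iff \alpha_{i}(x)(a)\in Z$, which is exactly $x\vDash_{\sigma_{t}(\alpha)}\theta_{t}(\phi)\iff x\vDash_{\alpha}\phi$.

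For Condition~\ref{def:translation-c-logical-equivalence} the reflecting direction is free: by Condition~\ref{def:translation-c-semantics} the source formulae are, up to $\theta_{t}$, a subset of the target formulae, so agreement on all target formulae forces agreement on all source formulae. The preserving direction is the crux, and here $\theta_{t}$ is not surjective: a target modality may carry arbitrary weights $n_{0},\dots,n_{l-l_i-1}\in\mathbb{P}$ in the padding positions, not just $\mathtt{p}$. I would prove by induction on the target formula $\psi$ that on every reduced system $\sigma_{t}(\alpha)$ its extension is either $\car(\alpha)$, or $\emptyset$, or an intersection of extensions of image formulae (equivalently, via Condition~\ref{def:translation-c-semantics}, of source formulae); this class is visibly closed under the $\land$ case by \cref{thm:inverse-modality-props} (\cref{thm:inverse-modality-props-1}). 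In the modality case I would scan the padding weights $n_{0},n_{1},\dots$ from the outside in, peeling one layer for each as long as $0<n_{j}\leq\mathtt{p}$ (where the peeling fact applies unchanged and the layer behaves like $\bflangle\mathtt{p}\bfrangle$); at the first padding weight leaving this range the formula collapses to a state-independent constant, namely to $\top$ (extension $\car(\alpha)$) when that weight is $0$, since $\bflangle 0\bfrangle$ returns the whole space, and to an unsatisfiable formula (extension $\emptyset$) when that weight $n_{j}\neq 0$ satisfies $n_{j}\not\leq\mathtt{p}$, since a single Dirac layer carries total mass $\mathtt{p}\not\geq n_{j}$. Thus every target formula with all padding weights in $(0,\mathtt{p}]$ collapses to an image formula, and every other one collapses to a constant. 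Since source-logically-equivalent states agree on all image formulae and trivially agree on $\top$ and on unsatisfiable formulae, they agree on all target formulae, giving Condition~\ref{def:translation-c-logical-equivalence}.

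The main obstacle I expect is precisely this non-surjectivity of $\theta_{t}$ in Condition~\ref{def:translation-c-logical-equivalence}: one must verify that padding weights outside $(0,\mathtt{p}]$ cannot distinguish states on reduced systems, which is where positivity of $\mathbb{P}$ is used essentially (the bottomness of $0$, so that $0\geq n$ forces $n=0$ while $\mathtt{p}\geq n$ may fail). I note that cancellation plays no role in this lemma and is only assumed later, for the final full-abstraction argument via \cref{thm:wlts-logic-bisimulation}.
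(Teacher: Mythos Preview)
Your proposal is correct and takes essentially the same approach as the paper: both isolate the key peeling fact $y \in Y \iff \mathtt{p}\cdot y \in \bflangle\mathtt{p}\bfrangle Y$ and then mirror the structural induction of \cref{thm:futs-logic-u-futs}. Your treatment is in fact more detailed than the paper's, which simply states the peeling fact and defers the rest to ``straightforward induction''; in particular, your explicit case analysis on the padding weights for Condition~\ref{def:translation-c-logical-equivalence} (collapsing to $\car(\alpha)$ when some $n_j=0$, to $\emptyset$ when some $n_j\not\leq\mathtt{p}$, and peeling when $0<n_j\leq\mathtt{p}$) spells out what the paper leaves implicit.
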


\begin{proof}
	Recall that $\sigma_{t}(\alpha)$ introduces functional steps ($\mathtt{p}$-valued Dirac's delta functions like $\mathtt{p} \cdot x$) in correspondence all instances of $\mathbb{P}$ introduced by $[\vv{M}]$. We observe that 
	\[
		y \in Y 
		\iff
		\mathtt{p} \cdot y \in \bflangle \mathtt{p} \bfrangle Y
		\text{.}
	\]
	The proof proceeds along the lines of that of \cref{thm:futs-logic-u-futs} and straightforward induction on the structure of formulae.	
\end{proof}

\paragraph{Homogeneous FuTSs} 
Consider the function $\theta_{h}\colon \mathfrak{L}^{\land}_{\cat{FuTS}} \to \mathfrak{L}^{\land}_{\cat{H-FuTS}}$ defined, on each homogeneous $\vv{M}$ and $\vv{A}$, as follows:
\begin{align}
	\theta_{h}(\top) & \defeq \top \\
	\theta_{h}(\phi \land \phi') & \defeq \theta_{h}(\phi) \land \theta_{h}(\phi') \\
	\theta_{h}(\bflangle i|a|m_0, \dots, m_{l_i} \bfrangle \phi) & \defeq
	\bflangle i | a | \iota_{i,0}(m_0), \dots, \iota_{i,l_i}(m_{l_i}) \bfrangle \theta_{h}(\phi)\text{.}
\end{align}
Below we prove that $\theta_{h}$ is coherent with the full reduction $\sigma_{h}\colon\cat{FuTS} \to \cat{H-FuTS}$ defined in the proof of \cref{thm:futs-to-u-futs}. As a consequence, the pair $(\sigma_{h},\theta_{h})$ defines a full reduction going from $\mathfrak{L}^{\land}_{\cat{FuTS}}$ to $\mathfrak{L}^{\land}_{\cat{H-FuTS}}$.

\begin{lemma}
	\label{thm:futs-logic-h-futs}
	FCL for FuTSs fully reduces to that for homogeneous ones, hence:%
	\[
		\mathfrak{L}^{\land}_{\cat{FuTS}} \freduceEq \mathfrak{L}^{\land}_{\cat{H-FuTS}}
		\text{.}
	\]
\end{lemma}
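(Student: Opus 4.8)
The plan is to follow the template of \cref{thm:futs-logic-u-futs}, verifying Conditions~\ref{def:translation-c-semantics} and~\ref{def:translation-c-logical-equivalence} of \cref{def:translation} for the pair $(\sigma_h,\theta_h)$. Since $\sigma_h$ is the weight-relabelling reduction built from the sections $\vv{\iota}$ via \cref{thm:futs-relabeling-weights,thm:reduction-natural}, its carrier maps are identities, so $\sigma^{c}_{h,\alpha}(x)=x$ everywhere and in particular $\car(\alpha)=\car(\sigma_h(\alpha))$. Condition~\ref{def:translation-c-homomorphisms} (coherence with homomorphisms) is then inherited directly from $\sigma_h$, since $\sigma_h(f)$ shares its underlying function with $f$.

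First I would establish semantic coherence (Condition~\ref{def:translation-c-semantics}) by structural induction on $\phi\in\mathfrak{L}^{\land}_{\cat{FuTS}}$. The cases $\top$ and $\land$ are immediate, as in \cref{thm:futs-logic-u-futs}. For the modal case $\phi=\bflangle i|a|m_0,\dots,m_{l_i}\bfrangle\phi'$, the induction hypothesis together with fullness gives $\llbracket\phi'\rrbracket_\alpha=\llbracket\theta_h(\phi')\rrbracket_{\sigma_h(\alpha)}$; call this set $Y$. By construction $\sigma_h(\alpha)_i(x)(a)$ is obtained from $\alpha_i(x)(a)$ by relabelling weights through the sections $\iota_{i,0},\dots,\iota_{i,l_i}$ at the $l_i+1$ nesting levels. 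Applying \cref{thm:inverse-modality-props}~(\cref{thm:inverse-modality-props-4}) once per level rewrites $\bflangle\iota_{i,0}(m_0)\bfrangle\dots\bflangle\iota_{i,l_i}(m_{l_i})\bfrangle Y$ as the image of $\bflangle m_0\bfrangle\dots\bflangle m_{l_i}\bfrangle Y$ under this relabelling; since each section $\iota_{i,j}$ is injective, membership transfers in both directions, yielding $x\vDash_\alpha\phi\iff x\vDash_{\sigma_h(\alpha)}\theta_h(\phi)$.

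Next I would handle logical equivalence (Condition~\ref{def:translation-c-logical-equivalence}). As in the unlabelled case, $\theta_h$ is not surjective: a formula of $\mathfrak{L}^{\land}_{\cat{H-FuTS}}$ may carry weight bounds that are arbitrary elements of $\prod\vv{M}$, whereas $\img(\theta_h)$ uses only the section weights $\iota_{i,j}(m)$. Hence it suffices to show that every $\psi\in\mathfrak{L}^{\land}_{\cat{H-FuTS}}$ is semantically equivalent to a finite conjunction of formulae in $\img(\theta_h)$, again by induction with only the modal case being of interest. The decisive tool is \cref{thm:inverse-modality-props}~(\cref{thm:inverse-modality-props-3}), which rewrites a modality $\bflangle\{m_k\}_k\bfrangle$ carrying an arbitrary product weight as the intersection $\bigcap_k\bflangle\overline{m_k}\bfrangle$ of modalities whose weights are the sections $\overline{m_k}=\iota_k(m_k)$; combined with \cref{thm:inverse-modality-props}~(\cref{thm:inverse-modality-props-1}) to commute these intersections past the remaining nesting operators, this expresses the target modality as a finite conjunction of formulae each lying in $\img(\theta_h)$, from which logical equivalence is preserved and reflected.

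The step I expect to be the main obstacle is the bookkeeping for the nested weight structure in both inductions: because the relabelling and the modality operators act level by level, I must apply parts~\cref{thm:inverse-modality-props-1}, \cref{thm:inverse-modality-props-3}, and \cref{thm:inverse-modality-props-4} of \cref{thm:inverse-modality-props} at each of the $l_i+1$ nesting levels and check that they commute with the intervening $\ff{\iota_{i,j}}$. Once this is organised, everything else is a routine induction mirroring \cref{thm:futs-logic-u-futs,thm:futs-logic-t-futs}.
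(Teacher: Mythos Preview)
Your proposal is correct and follows essentially the same route as the paper. The paper's proof is terse: it invokes \cref{thm:inverse-modality-props}~(\cref{thm:inverse-modality-props-4}) to observe that relabelling via $\iota_{i,j}$ maps $\bflangle m\bfrangle Y$ to $\bflangle\iota_{i,j}(m)\bfrangle Y$, and then defers everything else to ``the lines of \cref{thm:futs-logic-u-futs} and straightforward induction.'' Your outline makes explicit what the paper leaves implicit, in particular the use of \cref{thm:inverse-modality-props}~(\cref{thm:inverse-modality-props-3}) together with \cref{thm:inverse-modality-props-1} to decompose an arbitrary product weight into section weights for Condition~\ref{def:translation-c-logical-equivalence}; this is the right tool and the paper simply does not spell it out. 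One small point worth tightening: after applying item~\ref{thm:inverse-modality-props-3}, the resulting section weights $\iota_{i',j'}(\cdot)$ at nesting level~$j$ need not coincide with the specific section $\iota_{i,j}$ used by $\theta_h$; in $\sigma_h(\alpha)$ the ``off-diagonal'' constraints are trivially true or trivially false and hence do not affect logical equivalence, so the argument still goes through, but your phrase ``each lying in $\img(\theta_h)$'' slightly overstates what is needed.
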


\begin{proof}
	Recall from \cref{sec:futs-reductions} that $\sigma_{h}(\alpha)$ relabels weights by means of natural transformations $\ff{\iota_{i,j}}\colon \ff{M_{i,j}} \to \ff{\prod \vv{M}}$ induced by sections of the projections associated to the product of monoids $\prod \vv{M}$. 
	It follows from \cref{thm:inverse-modality-props} (\cref{thm:inverse-modality-props-4}) that for any value $m \in M_{i,j}$ and set $X$, the component $\ff{\iota_{i,j},X}\ff{M_{i,j}}X \to \ff{\prod \vv{M}}X$ takes each weight function $\rho \in \bflangle m \bfrangle Y$ to $\iota_{i,j} \circ \rho \in \bflangle \iota_{i,j}(m) \bfrangle Y$. 
	As a consequence of these observations, the proof proceeds along the lines of that of \cref{thm:futs-logic-u-futs} and straightforward induction on the structure of formulae.	
\end{proof}

\paragraph{Homogeneous nested FuTSs} 
Consider the function $\theta_{n}\colon \mathfrak{L}^{\land}_{\cat{HT-FuTS}} \to \mathfrak{L}^{\land}_{\cat{HN-FuTS}}$ defined, on each homogeneous and tabular $\vv{M}$ and $\vv{A}$, as follows:
\begin{align}
	\theta_{n}(\top) & \defeq \top \\
	\theta_{n}(\phi \land \phi') & \defeq \theta_{n}(\phi) \land \theta_{n}(\phi') \\
	\theta_{n}(\bflangle i|a| \vv{m} \bfrangle \phi) & \defeq
	\bflangle (i,a) | \vv{m} \bfrangle \theta_{n}(\phi)
	\text{.}
\end{align}
Below we prove that $\theta_{n}$ is coherent with the full reduction $\sigma_{n}\colon\cat{HT-FuTS} \to \cat{HN-FuTS}$ defined in the proof of \cref{thm:ht-futs-to-hn-futs}. As a consequence, the pair $(\sigma_{n},\theta_{n})$ defines a full reduction going from $\mathfrak{L}^{\land}_{\cat{HT-FuTS}}$ to $\mathfrak{L}^{\land}_{\cat{HN-FuTS}}$.
\begin{lemma}
	\label{thm:ht-futs-logic-hn-futs}
	FCL for tabular homogeneous FuTSs fully reduces to that for homogeneous nested ones, hence:%
	\[
		\mathfrak{L}^{\land}_{\cat{HT-FuTS}} \freduceEq \mathfrak{L}^{\land}_{\cat{HN-FuTS}}
		\text{.}
	\]
\end{lemma}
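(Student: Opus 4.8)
The plan is to mirror the structure of the proofs of \cref{thm:futs-logic-u-futs,thm:futs-logic-t-futs,thm:futs-logic-h-futs}: establish that the pair $(\sigma_n,\theta_n)$ satisfies Conditions~\ref{def:translation-c-homomorphisms}--\ref{def:translation-c-logical-equivalence} of \cref{def:translation}. Condition~\ref{def:translation-c-homomorphisms} is already granted by \cref{thm:ht-futs-to-hn-futs}, since $\sigma_n$ is a reduction and $\theta_n$ shares its underlying functor. The essential work therefore concentrates on the semantic coherence Condition~\ref{def:translation-c-semantics}, from which Condition~\ref{def:translation-c-logical-equivalence} will follow cheaply: unlike the unlabelled case of \cref{thm:futs-logic-u-futs}, here $\theta_n$ merely reindexes each modality $\bflangle i|a|\vv{m}\bfrangle$ to $\bflangle (i,a)|\vv{m}\bfrangle$ through the bijection $i,a \mapsto (i,a)$ between the two label presentations, so $\theta_n$ is a bijection on formulae and surjectivity makes Condition~\ref{def:translation-c-logical-equivalence} immediate.

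First I would record the concrete behaviour of $\sigma_n$. It arises, through \cref{thm:reduction-natural}, from the natural isomorphism of \cref{thm:ht-futs-to-hn-futs} that recasts the $(n+1)$-fold product of identical components $(\ff{\vv*{M}{0}}-)^{A_i}$ as a single exponential whose exponent is the set of pairs $(i,a)$ with $i \leq n$ and $a \in A_i$. Consequently $\sigma_n$ leaves carriers untouched, i.e.\ $\sigma^{c}_{n,\alpha} = id_{\car(\alpha)}$, and on transition maps it is the currying/uncurrying that identifies $\sigma_n(\alpha)(x)((i,a))$ with $\alpha_i(x)(a)$ for every state $x$, index $i \leq n$ and label $a \in A_i$. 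Crucially the weights are untouched: only the indexing of the single quantitative component is rearranged.

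Next I would prove Condition~\ref{def:translation-c-semantics}, namely $x \vDash_{\alpha} \phi \iff x \vDash_{\sigma_n(\alpha)} \theta_n(\phi)$ (recall $\sigma^{c}_{n,\alpha}(x) = x$), by induction on $\phi$. The $\top$ and $\land$ cases are verbatim those of \cref{thm:futs-logic-u-futs}. For the modality $\phi = \bflangle i|a|m_0,\dots,m_{l_i}\bfrangle \phi'$, the induction hypothesis together with $\sigma^{c}_{n,\alpha} = id$ gives $\llbracket \phi' \rrbracket_{\alpha} = \llbracket \theta_n(\phi') \rrbracket_{\sigma_n(\alpha)}$, hence
\[
	\bflangle m_0 \bfrangle \cdots \bflangle m_{l_i} \bfrangle \llbracket \phi' \rrbracket_{\alpha}
	=
	\bflangle m_0 \bfrangle \cdots \bflangle m_{l_i} \bfrangle \llbracket \theta_n(\phi') \rrbracket_{\sigma_n(\alpha)}
	\text{.}
\]
Membership of $\alpha_i(x)(a)$ in the left-hand set is, by the identity $\sigma_n(\alpha)(x)((i,a)) = \alpha_i(x)(a)$ recorded above, the same as membership of $\sigma_n(\alpha)(x)((i,a))$ in the right-hand set, which is exactly $x \vDash_{\sigma_n(\alpha)} \bflangle (i,a)|m_0,\dots,m_{l_i}\bfrangle \theta_n(\phi')$. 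This closes the induction.

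The hard part is really just the bookkeeping in the second step: one must verify precisely that the isomorphism of \cref{thm:ht-futs-to-hn-futs} sends the $i$-th projection of a tabular homogeneous system, evaluated at $a$, to the value of the uncurried nested system at the combined label $(i,a)$, and that it transports the weight lower-bound sequence $\vv{m}$ unchanged; this is exactly what makes the modality clause $\bflangle i|a|\vv{m}\bfrangle \mapsto \bflangle (i,a)|\vv{m}\bfrangle$ sound. Once this is in place, Condition~\ref{def:translation-c-logical-equivalence} follows because $\theta_n$ is surjective (indeed bijective): every target formula is a $\theta_n$-image, so logical equivalence in $\mathfrak{L}^{\land}_{\cat{HN-FuTS}}$ along $\sigma^{c}_{n,\alpha}$ pulls back to logical equivalence in $\mathfrak{L}^{\land}_{\cat{HT-FuTS}}$. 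Assembling the three conditions yields that $(\sigma_n,\theta_n)$ is a full reduction, i.e.\ $\mathfrak{L}^{\land}_{\cat{HT-FuTS}} \freduceEq \mathfrak{L}^{\land}_{\cat{HN-FuTS}}$.
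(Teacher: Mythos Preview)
Your proposal is correct and follows essentially the same approach as the paper: the paper's proof also isolates the key identity $\sigma_{n}(\alpha)(x)(i,a) = \alpha_i(x)(a)$, derives $\alpha_i(x)(a) \in \bflangle m \bfrangle Y \iff \sigma_{n}(\alpha)(x)(i,a) \in \bflangle m \bfrangle Y$, and then defers to the argument of \cref{thm:futs-logic-u-futs}. Your explicit observation that $\theta_n$ is a bijection on formulae (so Condition~\ref{def:translation-c-logical-equivalence} is immediate, unlike in \cref{thm:futs-logic-u-futs}) is a nice clarification the paper leaves implicit.
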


\begin{proof}
	Recall from \cref{sec:futs-reductions} that $\sigma_{n}(\alpha)(x)(i,a) = \alpha_i(x)(a)$. As a consequence, it holds that
	\[
		\alpha_i(x)(a) \in \bflangle m \bfrangle Y 
		\iff 
		\sigma_{n}(\alpha)(x)(i,a) \in \bflangle m \bfrangle Y 
		\text{.}
	\]
	Then, the proof follows the same argument used in the proof of \cref{thm:futs-logic-u-futs}.
\end{proof}

\paragraph{Simple FuTSs} 
Consider the function $\theta_{s}\colon \mathfrak{L}^{\land}_{\cat{UHN-FuTS}} \to \mathfrak{L}^{\land}_{\cat{US-FuTS}}$ defined, on each homogeneous $\vv{M}$, as follows:
\begin{align}
	\theta_{s}(\top) & \defeq \top \\
	\theta_{s}(\phi \land \phi') & \defeq \theta_{s}(\phi) \land \theta_{s}(\phi') \\
	\theta_{s}(\bflangle m_0,\dots,m_l \bfrangle \phi) & \defeq
	\bflangle m_0 \bfrangle \dots \bflangle m_l \bfrangle \theta_{s}(\phi)
	\text{.}
\end{align}
Below we prove that $\theta_{s}$ is coherent with the reduction $\sigma_{s}\colon\cat{UHN-FuTS} \to \cat{US-FuTS}$ obtained from \cref{thm:reduction-flattening}. As a consequence, the pair $(\sigma_{s},\theta_{s})$ defines a full reduction going from $\mathfrak{L}^{\land}_{\cat{UHN-FuTS}}$ to $\mathfrak{L}^{\land}_{\cat{US-FuTS}}$.
\begin{lemma}
	\label{thm:uhn-futs-logic-wts}
	FCL for unlabelled homogeneous nested FuTSs reduces to that for unlabelled simple ones, hence:%
	\[
		\mathfrak{L}^{\land}_{\cat{UHN-FuTS}} \reduceEq \mathfrak{L}^{\land}_{\cat{US-FuTS}}
		\text{.}
	\]
\end{lemma}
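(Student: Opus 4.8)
The plan is to exhibit $(\sigma_{s},\theta_{s})$ as a reduction for logics by verifying Conditions~\ref{def:translation-c-homomorphisms}--\ref{def:translation-c-logical-equivalence} of \cref{def:translation}. Condition~\ref{def:translation-c-homomorphisms} is inherited verbatim from the fact that $\sigma_{s}$, obtained from \cref{thm:reduction-flattening}, is already a reduction, since $(\sigma_{s},\theta_{s})$ share the same underlying functor and carrier maps $\sigma^{c}_{s,\alpha}=\iota_{0}$. Throughout I would keep in mind the shape of the flattened system from \cref{thm:bisim-flattening}: writing $T=\ff{M}$ and $l+1$ for the nesting depth, $\underline{X}=\coprod_{i=0}^{l}T^{i}X$ with $\underline{\alpha}(\iota_{0}(x))=T\iota_{l}(\alpha(x))$ and $\underline{\alpha}(\iota_{i+1}(\rho))=T\iota_{i}(\rho)$. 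In words, a state sitting in layer $i+1$ is a weight function whose transition is supported entirely on layer $i$, so that one simple-modality step descends exactly one layer, and the reachability graph is the period-$(l+1)$ cycle $0\to l\to\cdots\to 1\to 0$.

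For Condition~\ref{def:translation-c-semantics} I would prove $x\vDash_{\alpha}\phi\iff\iota_{0}(x)\vDash_{\underline{\alpha}}\theta_{s}(\phi)$ by structural induction on the nested formula $\phi$, the cases $\top$ and $\land$ being routine and identical to \cref{thm:futs-logic-u-futs}. The modality case $\phi=\bflangle m_{0},\dots,m_{l}\bfrangle\phi'$ is the heart of the matter: its translation $\bflangle m_{0}\bfrangle\cdots\bflangle m_{l}\bfrangle\theta_{s}(\phi')$ is a chain of $l+1$ simple modalities, which I would analyse through a descent lemma proved by a side induction on suffix length, stating that for every $0\le k\le l$ and every $\rho\in T^{\,l+1-k}X$
\[
	\iota_{l+1-k}(\rho)\vDash_{\underline{\alpha}}\bflangle m_{k}\bfrangle\cdots\bflangle m_{l}\bfrangle\theta_{s}(\phi')
	\iff
	\rho\in\bflangle m_{k}\bfrangle\cdots\bflangle m_{l}\bfrangle\llbracket\phi'\rrbracket_{\alpha}.
\]
The base case $k=l$ uses $\underline{\alpha}(\iota_{1}(\rho))=T\iota_{0}(\rho)$ together with the outer induction hypothesis $\llbracket\theta_{s}(\phi')\rrbracket_{\underline{\alpha}}\cap\iota_{0}(X)=\iota_{0}(\llbracket\phi'\rrbracket_{\alpha})$, and the step case peels off one further layer via $\underline{\alpha}(\iota_{i+1}(\rho))=T\iota_{i}(\rho)$, rewriting the thresholded sum as membership in $\bflangle m_{k}\bfrangle(\cdot)$. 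Evaluating the whole chain at $\iota_{0}(x)$, where $\underline{\alpha}(\iota_{0}(x))=T\iota_{l}(\alpha(x))$ is supported on layer $l$, and invoking the descent lemma at $k=1$, then yields exactly $\alpha(x)\in\bflangle m_{0}\bfrangle\cdots\bflangle m_{l}\bfrangle\llbracket\phi'\rrbracket_{\alpha}$, i.e.\ $x\vDash_{\alpha}\phi$.

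Condition~\ref{def:translation-c-logical-equivalence}, namely $x\simeq_{\cat{UHN-FuTS}}x'\iff\iota_{0}(x)\simeq_{\cat{US-FuTS}}\iota_{0}(x')$, splits into two directions. The reflecting direction is free: if $\iota_{0}(x)$ and $\iota_{0}(x')$ agree on all $\mathfrak{L}^{\land}_{\cat{US-FuTS}}$ formulae, they in particular agree on every $\theta_{s}(\phi)$, which Condition~\ref{def:translation-c-semantics} transports to agreement of $x,x'$ on every nested $\phi$. The preserving direction is the main obstacle, precisely because $\sigma_{s}$ is \emph{not} full and so a US formula can probe the auxiliary states of $\underline{\alpha}$. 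Mirroring the second half of \cref{thm:futs-logic-u-futs}, I would show that every $\psi\in\mathfrak{L}^{\land}_{\cat{US-FuTS}}$ is, on layer-$0$ states, equivalent to a conjunction of $\theta_{s}$-image formulae (hence of nested formulae), so that $x\simeq_{\cat{UHN-FuTS}}x'$ forces $\iota_{0}(x)\vDash_{\underline{\alpha}}\psi\iff\iota_{0}(x')\vDash_{\underline{\alpha}}\psi$.

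The delicate point in this back-translation is that a single US modality descends only one layer, so a US formula read from a layer-$0$ state threads through the period-$(l+1)$ cycle and need not align with the length-$(l+1)$ blocks produced by $\theta_{s}$. I would resolve this by recursion on $\psi$ that tracks the current layer, (i) grouping the modalities along each branch into consecutive blocks of length $l+1$ following the period, (ii) padding any final incomplete block with the weight $0$, using positivity of $M$ to get $\bflangle 0\bfrangle Y=\ff{M}(\car(\alpha))$ so the padding imposes no constraint and each completed block is a genuine $\theta_{s}$-image modality, and (iii) distributing the simple modalities over the conjunctions occurring between them by \cref{thm:inverse-modality-props} (\cref{thm:inverse-modality-props-1}). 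Because the systems are unlabelled there is no label set to range over, so -- unlike in \cref{thm:futs-logic-u-futs} -- the resulting conjunctions stay finite. Verifying that each length-$(l+1)$ block really lies in the image of $\theta_{s}$ while carefully propagating the current layer through this recursion is the bulk of the work; everything else is routine bookkeeping, after which the thesis follows as $\mathfrak{L}^{\land}_{\cat{UHN-FuTS}}\reduceEq\mathfrak{L}^{\land}_{\cat{US-FuTS}}$.
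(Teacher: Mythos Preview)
Your approach is essentially the same as the paper's: structural induction for Condition~\ref{def:translation-c-semantics}, and the combination of distributing $\bflangle m\bfrangle$ over $\land$ (via \cref{thm:inverse-modality-props}) together with zero-padding for Condition~\ref{def:translation-c-logical-equivalence}.

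There are two presentational differences worth noting. For Condition~\ref{def:translation-c-semantics}, the paper phrases the induction invariant as the set equality $\llbracket\phi\rrbracket_{\alpha}=\car(\alpha)\cap\llbracket\theta_{s}(\phi)\rrbracket_{\sigma_{s}(\alpha)}$ and dispatches the modality case in one line by appealing to the support structure of $\sigma_{s}(\alpha)$; your descent lemma unrolls the very same computation layer by layer. For Condition~\ref{def:translation-c-logical-equivalence}, the paper goes for a single back-translated formula and claims the \emph{global} equality $\llbracket\psi\rrbracket_{\sigma_{s}(\alpha)}=\llbracket\theta_{s}(\phi)\rrbracket_{\sigma_{s}(\alpha)}$, so it never tracks layers: it simply distributes each $\bflangle m\bfrangle$ over $\land$ to obtain a finite conjunction of modality chains and then pads each chain on the right with $0$'s up to $\max_{i} l_{i}$. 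Your version, restricting the equivalence to layer-$0$ states and explicitly blocking the modalities into length-$(l{+}1)$ groups, is more scrupulous about why the padded chains land in the image of $\theta_{s}$ (the paper leaves this alignment implicit), but the mechanism---\cref{thm:inverse-modality-props} for distribution, positivity for the harmless $\bflangle 0\bfrangle$ padding---is identical.
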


\begin{proof}
	In order to prove that $\theta_s$ and $\sigma_{s}$ satisfy Condition~\ref{def:translation-c-semantics} of \cref{def:translation} we proceed by structural induction and show that for $\phi \in \mathfrak{L}^{\land}_{\cat{UHN-FuTS}}$ and $\alpha \in \cat{UHN-FuTS}$ 
	$
		\llbracket \phi \rrbracket_{\alpha} 
		= 
		\car(\alpha) \cap \llbracket \theta(\phi) \rrbracket_{\sigma_s(\alpha)}
	$.
	\begin{description}
		\item[$\top$] 
			Let $\phi = \top$. Recall that $\car(\alpha) \subseteq \car(\sigma_s(\alpha))$ and $\llbracket \top \rrbracket_{\alpha} = 	\car(\alpha)$. Then, $\llbracket \top \rrbracket_{\alpha} = \car(\alpha) \cap \llbracket \top \rrbracket_{\sigma_s(\alpha)}$.
		\item[$\land$] 
			Let $\phi = \phi' \land \phi''$. By the semantics of $\land$ and induction hypothesis:
			\begin{equation}
				\llbracket \phi' \land \phi'' \rrbracket_{\alpha} 
				=
				\llbracket \phi' \rrbracket_{\alpha} \cap \llbracket \phi'' \rrbracket_{\alpha} 
				= 
				\car(\alpha) \cap \llbracket \theta(\phi') \rrbracket_{\sigma_s(\alpha)} \cap \llbracket \theta(\phi'') \rrbracket_{\sigma_s(\alpha)}
				=
				\car(\alpha) \cap \llbracket \theta(\phi') \land \theta(\phi'') \rrbracket_{\sigma_s(\alpha)}
				\text{.}
			\end{equation}
		\item[$\bflangle-\bfrangle$] Let $\phi = \bflangle m_0,\dots,m_l \bfrangle \phi'$.  By the semantics of $\bflangle-\bfrangle$ and induction hypothesis:
			\begin{equation}
				\llbracket\bflangle m_0,\dots,m_l \bfrangle \phi'\rrbracket_{\alpha}
				=
				\bflangle m_0\bfrangle ,\dots,\bflangle m_l \bfrangle \llbracket\phi'\rrbracket_{\alpha}
				=
				\bflangle m_0\bfrangle ,\dots,\bflangle m_l \bfrangle \left(\car(\alpha) \cap \llbracket \theta(\phi') \rrbracket_{\sigma_s(\alpha)}\right)
				\text{.}
			\end{equation}
		Recall from \cref{thm:reduction-flattening} that if $x \in \car(\alpha)$ then, $\supp(\sigma_s(\alpha)(x)) \subseteq \car(\alpha)$.
		Therefore, the set $\bflangle m_0\bfrangle ,\dots,\bflangle m_l \bfrangle \left(\car(\alpha) \cap \llbracket \theta(\phi') \rrbracket_{\sigma_s(\alpha)}\right)$ equals to the set
		$
			\car(\alpha) \cap \bflangle m_0\bfrangle ,\dots,\bflangle m_l \bfrangle\llbracket \theta(\phi') \rrbracket_{\sigma_s(\alpha)} 
			\text{.}
		$
		We conclude that 
		\begin{equation}
			\llbracket\bflangle m_0,\dots,m_l \bfrangle \phi'\rrbracket_{\alpha}
			=
			\car(\alpha) \cap \bflangle m_0\bfrangle ,\dots,\bflangle m_l \bfrangle\llbracket \theta(\phi') \rrbracket_{\sigma_s(\alpha)}
			= 
			\car(\alpha) \cap \llbracket\bflangle m_0\bfrangle ,\dots,\bflangle m_l \bfrangle \theta(\phi') \rrbracket_{\sigma_s(\alpha)}
			\text{.}
		\end{equation}
	\end{description}
	As a consequence of Condition~\ref{def:translation-c-semantics},
	to prove that $\theta_s$ and $\sigma_{s}$ satisfy Condition~\ref{def:translation-c-logical-equivalence} it suffice to prove that for every $\psi \in \mathfrak{L}^{\land}_{\cat{US-FuTS}}$ there is $\phi \in \mathfrak{L}^{\land}_{\cat{UHN-FuTS}}$ whose translation is equivalent to $\psi$ \ie:
	$\llbracket \psi \rrbracket_{\sigma_{s}(\alpha)}=\llbracket \theta_{s}(\phi) \rrbracket_{\sigma_{s}(\alpha)}$. This is readily achieved thanks to \cref{thm:inverse-modality-props} (\cref{thm:inverse-modality-props-2}) since it allows to distribute $\bflangle - \bfrangle$ over $\land$. In fact, for every $\alpha \in \cat{UHN-FuTS}$, we have that: 
	\begin{align}
		\llbracket\bflangle m \bfrangle \left(\phi \land \phi'\right)\rrbracket_{\alpha}
		& =
		\bflangle m \bfrangle \llbracket\phi \land \phi'\rrbracket_{\alpha}
		= 
		\bflangle m \bfrangle \left(\llbracket\phi\rrbracket_{\alpha} \cap \llbracket\phi'\rrbracket_{\alpha}\right)
		=
		\bflangle m \bfrangle\llbracket\phi\rrbracket_{\alpha} \cap 
		\bflangle m \bfrangle\llbracket\phi'\rrbracket_{\alpha}
		=
		\llbracket\bflangle m \bfrangle\phi\rrbracket_{\alpha} \cap 
		\llbracket\bflangle m \bfrangle\phi'\rrbracket_{\alpha}
		\\ & =
		\llbracket
			\bflangle m \bfrangle\phi \land  
			\bflangle m \bfrangle\phi'
		\rrbracket_{\alpha}
		\text{.}
	\end{align}
	Finally, a formula $\bigwedge_{i = 0}^{n} \bflangle m_{i,0} \bfrangle \dots \bflangle m_{i,l_i} \bfrangle $ is equivalent to $\bigwedge_{i = 0}^{n} \bflangle m_{i,0} \bfrangle \dots \bflangle m_{i,l} \bfrangle$ where $l = \max\{l_0,\dots,l_n\}$ and $m_{i,j}$ is defined as $0$ for every $j > l_i$.
\end{proof}

\begin{theorem}
	\label{thm:futs-logic-wts}
	FCL for FuTSs reduces to that for WLTSs, hence:%
	\[
		\mathfrak{L}^{\land}_{\cat{FuTS}} \reduceEq \mathfrak{L}^{\land}_{\cat{WLTS}}
		\text{.}
	\]
\end{theorem}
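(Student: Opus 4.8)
The plan is to mirror the proof of \cref{thm:futs-to-wts} step for step, replacing each system reduction in that chain by the corresponding reduction for logics already established in \cref{thm:futs-logic-u-futs,thm:futs-logic-t-futs,thm:futs-logic-h-futs,thm:ht-futs-logic-hn-futs,thm:uhn-futs-logic-wts}. Recall that a reduction for logics is by definition a pair $(\sigma,\theta)$ whose first component is a system reduction and whose second is a translation coherent with it. Since the coherence conditions of \cref{def:translation}---coherence with homomorphisms, coherence with formula semantics, and preservation of logical equivalence---are each transitive, and since both system reductions and translations compose, reductions for logics compose as well; likewise they restrict to subcategories of their domain exactly as plain reductions do. Hence it suffices to chain the five cited logical reductions through the same intermediate categories used in \cref{thm:futs-to-wts}.

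Concretely, I would assemble the composite
\[
	\mathfrak{L}^{\land}_{\cat{FuTS}}
	\freduceEq \mathfrak{L}^{\land}_{\cat{H-FuTS}}
	\freduceEq \mathfrak{L}^{\land}_{\cat{HT-FuTS}}
	\freduceEq \mathfrak{L}^{\land}_{\cat{HN-FuTS}}
	\freduceEq \mathfrak{L}^{\land}_{\cat{UHN-FuTS}}
	\reduceEq \mathfrak{L}^{\land}_{\cat{US-FuTS}}
\]
where the first step is \cref{thm:futs-logic-h-futs}, the second is \cref{thm:futs-logic-t-futs} restricted to homogeneous systems, the third is \cref{thm:ht-futs-logic-hn-futs}, the fourth is \cref{thm:futs-logic-u-futs} restricted to homogeneous nested systems, and the last is \cref{thm:uhn-futs-logic-wts}. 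Each pair $(\sigma_{\bullet},\theta_{\bullet})$ is already shown in the cited lemma to be a reduction for logics (full for all but the last), so by the composability remark above the composite is a reduction; being the composite of four full reductions with one non-full reduction, it is an instance of $\reduceEq$, matching the strength asserted in the statement.

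It then remains to bridge the endpoint $\mathfrak{L}^{\land}_{\cat{US-FuTS}}$ of this chain and the stated target $\mathfrak{L}^{\land}_{\cat{WLTS}}$. Since $\cat{US-FuTS}$ is precisely $\cat{WTS}$, the category of unlabelled (simple) WLTSs, and $\cat{WLTS} \freduceEq \cat{WTS}$ holds by the single-component instance of \cref{thm:futs-to-u-futs}, the same single-component instance of \cref{thm:futs-logic-u-futs} yields $\mathfrak{L}^{\land}_{\cat{WLTS}} \freduceEq \mathfrak{L}^{\land}_{\cat{WTS}} = \mathfrak{L}^{\land}_{\cat{US-FuTS}}$. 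Composing this with the chain above and invoking transitivity of $\reduceEq$ delivers $\mathfrak{L}^{\land}_{\cat{FuTS}} \reduceEq \mathfrak{L}^{\land}_{\cat{WLTS}}$, as required.

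I expect the only genuine obstacle to be bookkeeping: verifying that the generic translations $\theta_{u}$, $\theta_{t}$, $\theta_{h}$---stated in their lemmas for the whole of $\cat{FuTS}$---still satisfy Conditions~\ref{def:translation-c-semantics} and~\ref{def:translation-c-logical-equivalence} of \cref{def:translation} once their domains are cut down to the successive subcategories (homogeneous, then tabular, then nested). This is handled uniformly by the principle that every reduction restricts to a reduction on any subcategory of its domain, so no new semantic argument is needed; the remaining effort is purely in matching the subcategory indices along the chain so that the codomain of each step is the domain of the next.
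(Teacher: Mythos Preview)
Your proposal is correct and takes essentially the same approach as the paper: the paper's proof is a one-liner that simply invokes \cref{thm:futs-logic-u-futs,thm:futs-logic-t-futs,thm:futs-logic-h-futs,thm:ht-futs-logic-hn-futs,thm:uhn-futs-logic-wts} while mirroring the chain from \cref{thm:futs-to-wts}. You supply more detail than the paper does---explicitly arguing composability of logic reductions, the legitimacy of restricting each $\theta_{\bullet}$ to the successive subcategories, and the final bridge from $\cat{US\text{-}FuTS}=\cat{WTS}$ to $\cat{WLTS}$---but this is elaboration of the same argument, not a different route.
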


\begin{proof}
	Mirroring the sequence of reductions in the proof of \cref{thm:futs-to-wts}, we conclude by \cref{thm:futs-logic-u-futs,thm:futs-logic-t-futs,thm:futs-logic-h-futs,thm:ht-futs-logic-hn-futs,thm:uhn-futs-logic-wts}.
\end{proof}

As a consequence of \cref{thm:futs-logic-wts}, it is possible to invoke \cref{thm:reduction-full-abstraction} to infer from \cref{thm:wlts-logic-bisimulation} that finite-conjunction logics for FuTSs is fully abstract.

\begin{corollary}
	\label{thm:futs-logic-full-abstraction}
	For $(X,\alpha)$ a FuTS with weights drawn from a positive cancellative monoid, 
	\begin{equation}
		x \sim x' \iff x \simeq x'
		\text{.}
	\end{equation}
\end{corollary}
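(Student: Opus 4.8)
The plan is to obtain the statement as a direct consequence of \cref{thm:reduction-full-abstraction}, instantiated on the reduction of logics $\mathfrak{L}^{\land}_{\cat{FuTS}} \reduceEq \mathfrak{L}^{\land}_{\cat{WLTS}}$ provided by \cref{thm:futs-logic-wts}, taking as fully abstract target the finite-conjunction logic for WLTSs identified in \cref{thm:wlts-logic-bisimulation}. Since \cref{thm:reduction-full-abstraction} transports full abstraction backwards along any reduction of logics, it suffices to check that (i) such a reduction exists and (ii) the WLTS it produces satisfies the hypotheses of \cref{thm:wlts-logic-bisimulation}, namely that its weights are drawn from a positive cancellative monoid.

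The existence of the reduction is exactly \cref{thm:futs-logic-wts}, whose construction composes the logic reductions of \cref{thm:futs-logic-u-futs,thm:futs-logic-t-futs,thm:futs-logic-h-futs,thm:ht-futs-logic-hn-futs,thm:uhn-futs-logic-wts}. The delicate point is (ii): I must verify that positivity and cancellativity survive this chain. Tracing the weight monoids, the only steps that alter them are the tabular padding of \cref{thm:futs-logic-t-futs}, which introduces copies of a freely chosen monoid $\mathbb{P}$, and the homogenisation of \cref{thm:futs-logic-h-futs}, which replaces every weight by the product $\prod\vv{[M]}$ of the padded monoids; the nested and simple steps leave the weight monoid unchanged. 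Here I would exploit the freedom in the choice of $\mathbb{P}$ (already flagged in the proof of \cref{thm:futs-logic-t-futs}, where cancellation of $\mathbb{P}$ is anticipated) and fix $\mathbb{P} = (\mathbb{N},+,0)$, which the text lists among both the positive and the cancellative monoids. Because positive monoids and cancellative monoids are each closed under products (both facts recorded in \cref{sec:fully-abstract-logic}), and the original weights are positive cancellative by hypothesis, the product $\prod\vv{[M]}$ is again positive and cancellative. Hence the WLTS produced by the reduction meets the hypotheses of \cref{thm:wlts-logic-bisimulation}.

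With (i) and (ii) in hand the conclusion is immediate: \cref{thm:wlts-logic-bisimulation} gives full abstraction of $\mathfrak{L}^{\land}_{\cat{WLTS}}$ on the target, and \cref{thm:reduction-full-abstraction} pulls this back to $\mathfrak{L}^{\land}_{\cat{FuTS}}$, yielding $x \sim x' \iff x \simeq x'$ for every FuTS with positive cancellative weights. I do not anticipate any genuine obstacle beyond the monoid-class bookkeeping of step (ii): all the substantive work resides in the already-established logic reductions and in their coherence with the corresponding system reductions, so the present argument is purely one of composition. The one subtlety worth making explicit is that positivity is needed \emph{throughout} the chain (so that the operator $\bflangle m \bfrangle$ and \cref{thm:inverse-modality-props} are available), whereas cancellativity is needed only of the final product monoid; both are secured by the choice of $\mathbb{P}$ together with the product-closure of the two monoid classes.
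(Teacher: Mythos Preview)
Your proposal is correct and follows exactly the paper's approach: the paper's proof is the one-liner ``follows from \cref{thm:reduction-full-abstraction,thm:futs-logic-wts,thm:wlts-logic-bisimulation}'', and you have simply unpacked the monoid-class bookkeeping that the paper leaves implicit. One minor inaccuracy: the unlabelled step (\cref{thm:futs-logic-u-futs}) also alters the weight monoid, replacing $M$ by $M^A$, but since this is again a product your product-closure argument covers it without change.
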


\begin{proof}
	The thesis follows from \cref{thm:reduction-full-abstraction,thm:futs-logic-wts,thm:wlts-logic-bisimulation}.
\end{proof}

\section{Conclusions}
\label{sec:conclusions}

In this paper we have introduced a notion of \emph{reduction} for categories of discrete state transition systems, and some general results for deriving reductions from the type of computational aspects.
As an application of this theory we have shown that FuTSs reduce to WLTSs, thus  the upper part of the hierarchy in \cref{fig:hierarchy} collapses as shown in \cref{fig:hierarchy-collapsed}.
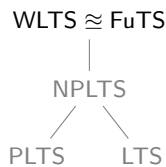
\begin{figure}[t]
	\begin{center}
		\begin{tikzpicture}[
				font=\small,auto,
				xscale=1.4, yscale=.9,
				baseline=(current bounding box.center),
			]
			\node (wlts) at (.5,2) {$\cat{WLTS}\reduceEq\cat{FuTS}$};
			\node[gray] (nplts)  at (.5,1) {\cat{NPLTS}};
			\node[gray] (lts)    at ( 1,0) {\cat{LTS}};
			\node[gray] (plts)   at ( 0,0) {\cat{PLTS}};
			
			\draw[gray] (wlts) -- (nplts);
			\draw[gray] (nplts) -- (plts);
			\draw[gray] (nplts) -- (lts);
		\end{tikzpicture}
		\caption{The bisimulation-driven hierarchy of weighted transition systems.}
		\label{fig:hierarchy-collapsed}
	\end{center}
\end{figure}
Besides the classification interest, this result offers a solid bridge for porting existing and new results from WLTSs to FuTSs. In this paper we have shown how to derive new fully abstract Hennessy-Milner modal logics for transition systems; in particular, we have introduced a new logic for FuTSs and proved that is is fully abstract via a reduction.  On this direction, SOS specifications formats presented in \cite{mp:tcs2016,ks:ic2013-sos} can cope now with FuTSs, and any abstract GSOS for these systems admits a specification in the format presented in \cite{mp:tcs2016}.

It remains an open question whether the hierarchy can be further collapsed, especially when other notion of reduction are considered. In fact, requiring a correspondence between bisimulations for the original and reduced systems may be too restrictive in some applications like bisimilarity-based verification techniques. This suggests to investigate laxer notions of reductions, such as those indicated in \cref{rm:relaxed-versions}.
Another direction is to consider different behavioural equivalences, like trace equivalence or weak bisimulation. We remark that, as shown in \cite{hjs:lmcs2007,bmp:jlamp2015,bp:concur2016}, in order to deal with these and similar equivalences, endofunctors need to be endowed with a monad (sub)structure; although WLTSs are covered in \cite{mp:arxiv2013,bmp:jlamp2015}, an analogous account of FuTSs is still an open problem.

\printbibliography

\end{document}